\documentclass{article}
\usepackage{graphicx}
\usepackage{amsmath}
\usepackage{amsthm}
\usepackage{amssymb}
\usepackage{cite}
\usepackage{amsfonts}
\usepackage{color}
\usepackage{geometry}


\def\End{{\rm End}}

\DeclareMathOperator{\alldiff}{all-diff}

\DeclareMathOperator{\NAE}{NAE}
\DeclareMathOperator{\Aut}{Aut}
\DeclareMathOperator{\Pol}{Pol}

\DeclareMathOperator{\Phyl}{Phyl}
\DeclareMathOperator{\Csp}{CSP}

\DeclareMathOperator{\tx}{tx}

\DeclareMathOperator{\yca}{yca}

\def\End{{\rm End}}

{
\theoremstyle{definition}
\newtheorem{definition}{Definition}
}

\newtheorem{observations}{Observations}
\newtheorem{lemma}{Lemma}
{
\theoremstyle{definition}
\newtheorem{example}{Example}
}
\newtheorem{theorem}{Theorem}
\newtheorem{corollary}{Corollary}
\newtheorem{proposition}{Proposition}
\newcommand{\TIMES}{\ensuremath{\times}}

\newcommand{\cproblem}[3]{
\vspace{.2cm}
\noindent {\bf #1} \\
INSTANCE: #2 \\
QUESTION: #3 \\}

\newcommand{\mL}{{\mathbb L}}
\renewcommand{\L}{\mathbb{L}}

\newcommand{\Fresse}{Fra\"{i}ss\'{e}}
\newcommand{\Nesetril}{Ne\v{s}etr\v{r}il}

\begin{document}

\title{The Complexity of\\Phylogeny Constraint Satisfaction Problems\thanks{The first author has received funding from the European Research Council (ERC, grants no. 257039 and no. 681988) and funding from the German Science Foundation (DFG, project no. 622397).   The second author is partially supported by the Swedish Research Council (VR) under grant 621-2012-3239. The third author received funding from the European Research Council under the European Community's Seventh Framework Programme (Grant no. 257039), the project P27600 of the Austrian Science Fund (FWF), and the Vietnam National Foundation for Science and Technology Development (NAFOSTED) under grant number 101.99-2016.16.}
}
\author{Manuel Bodirsky\footnote{Institut f\"ur Algebra, TU Dresden, 01062 Dresden, Germany}~, Peter Jonsson\footnote{Department of Computer and Information Science, Link\"opings Universitet, SE-581 83 Link\"oping, Sweden}~,~and 
 Trung Van Pham\footnote{Department of Mathematics for Computer Science, Institute of Mathematics, Vietnam Academy of Science and Technology, 18 Hoang Quoc Viet Road, Cau Giay District, Hanoi, Vietnam}}



\maketitle
\begin{abstract}
We systematically study the computational complexity of a broad class of 
computational problems in phylogenetic reconstruction. The class contains
 for example the rooted triple consistency problem, forbidden subtree 
problems, the quartet consistency problem, and many other problems studied 
in the bioinformatics literature. The studied problems can be described as  
\emph{constraint satisfaction problems} where the constraints have a 
first-order definition over the rooted triple relation. We show that every 
such phylogeny problem can be solved in polynomial time or is NP-complete. 
On the algorithmic side, we generalize a well-known polynomial-time 
algorithm of Aho, Sagiv, Szymanski, and Ullman for the rooted triple 
consistency problem. Our algorithm repeatedly solves linear equation 
systems to construct a solution in polynomial time.
We then show that every phylogeny problem that cannot be solved by our 
algorithm is NP-complete. Our classification establishes a dichotomy for 
a large class of infinite structures that we believe is of independent 
interest in universal algebra, model theory, and topology. The proof of 
our main result combines results and techniques from various research areas: 
a recent classification of the model-complete cores of the reducts of the 
homogeneous binary branching C-relation, Leeb's Ramsey theorem for 
rooted trees, and universal algebra.
\end{abstract}



%
 

\section{Introduction}
Phylogenetic consistency problems are computational problems that have been studied 
for phylogenetic reconstruction in computational biology, but also in other areas 
dealing with large amounts of possibly inconsistent data about trees, 
such as database theory~\cite{ASSU}, computational genealogy, and computational linguistics.
Given a collection of \emph{partial information} 
about a tree,
we would like to know whether the information
is \emph{consistent} in the sense that there exists a single tree that it is compatible with all the given partial information. A concrete example of a computational problem in this context is the \emph{rooted triple consistency problem}. For an informal description of this problem we consider the evolution process as a rooted binary tree in which each node presents a species and the root presents the origin of life. In an instance of the problem, we are given a set $V$ of variables, and a set of triples from $V^3$, written in the form $ab|c$ where $a,b,c \in V$, and we would like to know whether there exists a rooted tree $T$ whose leaves are from $V$ such that for each of the given triples $ab|c$ the youngest common ancestor of 
$a$ and $b$ in this tree is a descendant of the youngest common ancestor of $a$ and $c$. Aho, Sagiv, Szymanski, and Ullmann presented a polynomial-time algorithm
for this problem~\cite{ASSU}.

Many computational problems 
that are defined similarly to the
rooted triple consistency problem have been studied
in the literature. Examples include 
the \emph{subtree avoidance problem} (Ng, Steel, and Wormald~\cite{NgSteelWormald})
and the \emph{forbidden triple problem} (Bryant~\cite{Bryant}) which are NP-hard problems. Bodirsky \& Mueller~\cite{phylo-long}
have determined the complexity of rooted phylogeny
problems for the special case where
 the constraint relations are disjunctions of atomic formulas of form $xy|z$.
This result covers, for instance, the subtree avoidance problem and the forbidden triple problem.

We present a considerable strengthening of the result of Bodirsky \& Mueller~\cite{phylo-long},
and classify the complexity of phylogeny problems for all sets of phylogeny constraints that can be defined as a Boolean combination of the mentioned rooted triple relation and the equality relation (on
leaves). The reader should be aware that many problems
of this type may appear exotic from a biological point of view~---~the name ``phylogeny'' should not be taken too literally. Our results show that each of the problems
obtained in this way is polynomial-time solvable or NP-complete.
As we will demonstrate later (see Section~\ref{sect:phylo}), 
this class of problems is expressive enough to contain also \emph{unrooted
phylogeny problems}. A famous example of
such an unrooted phylogeny problem is the NP-complete 
\emph{quartet consistency problem}~\cite{Steel}:
here we are given 
a set $V$ of variables, and a set of quartets $ab{:}cd$ with $a,b,c,d \in V$, and we would like to know whether there exists a tree $T$ with
leaves from $V$ such that for each of the given quartets $ab{:}cd$
the shortest path from $a$ to $b$ does not intersect the shortest
path from $c$ to $d$ in $T$.
Another phylogeny problem that has been studied in the literature
and that falls into the framework of this paper (but not into the one in~\cite{phylo-long}) is the \emph{tree discovery problem}~\cite{ASSU}: here, the input consists of a set of 
4-tuples of variables, and the task is to find a rooted tree
$T$ such that for each $4$-tuple $(x,y,u,v)$ in the input
the youngest common ancestor of $x$ and $y$ is a proper descendant of the youngest common ancestor of $u$ and $v$.

The proof of the complexity classification is based on a variety of methods and results. Our first step is that
we give an alternative description of phylogeny problems as constraint satisfaction problems (CSPs) over a countably infinite domain where the constraint relations are first-order definable over the (up to isomorphism unique) \emph{homogeneous binary branching $C$-relation}, a well-known structure in model theory. A central result that simplifies our work considerably is a recent analysis of the endomorphism monoids of such relations~\cite{BodJonsPham}. Informally, 
this result implies that there are precisely four types of phylogeny problems: (1) trivial (i.e., if there is a solution, there is a constant solution), (2) rooted,
(3) unrooted, and (4) degenerate cases that have been called equality CSPs~\cite{ecsps}.
We will show that all unrooted phylogeny problems are NP-hard, and the complexity of all equality CSPs is already known.

The basic method to proceed from there is the {\em algebraic approach} to constraint satisfaction problems.
Here, one studies certain sets of operations (known as {\em polymorphisms})
instead of analysing the constraints themselves. 
An important tool to work with polymorphisms over
infinite domains is Ramsey theory. 
In this paper, we need a Ramsey result for rooted
trees due to Leeb~\cite{Lee-vorlesungen-ueber-pascaltheorie}, for proving
that polymorphisms behave canonically on large parts of the domain (in the sense of Bodirsky \& Pinsker~\cite{BP-reductsRamsey}),
and this allows us to perform a simplified combinatorial analysis.

Interestingly, all phylogeny problems that can be solved
in polynomial time fall into one class and can be solved by the same algorithm.
This algorithm is a considerable extension
of the algorithm by Bodirsky \& Mueller~\cite{phylo-long} for the rooted triple consistency problem. It repeatedly solves systems of linear Boolean equations to decide satisfiability of a phylogeny problem from this class.  
An illustrative example of a phylogeny problem that can be solved in polynomial time by our algorithm, but not the algorithms from~\cite{ASSU,phylo-long}, is the following computational problem: the input is a 4-uniform hypergraph with vertex set $V$; the question is whether there exists a rooted tree $T$ with leaf set $V$ such that for every hyperedge in the input $T$ has
 two disjoint subtrees that each contain precisely two of the vertices of the hyperedge. 
 
All phylogeny problems that cannot be solved by our algorithm are NP-complete. Our results are stronger than this complexity dichotomy, though, and we prove that every phylogeny problem satisfies a universal-algebraic dichotomy statement that holds for a large class of infinite structures (Theorem~\ref{thm:main}), 
which is of independent interest in the study of homogeneous structures and their polymorphism clones. In this respect, the situation is similar to previous classifications for CSPs where the constraints are first-order definable over the order of the rationals $({\mathbb{Q}};<)$ from~\cite{tcsps-journal}, or the random graph~\cite{BodPin-Schaefer}. In comparison to these previous
works, the dichotomy we present here 
is easier to state (there is just one tractable class),
but harder to prove with existing methods: in particular, unlike the situation
for constraints that are first-order definable over the 
random graph~\cite{BodPin-Schaefer},
the polymorphisms that characterise the tractable cases cannot be chosen to
be canonical (in the sense of Bodirsky \& Pinsker~\cite{BP-reductsRamsey}) 
on the entire domain.  
As such, our dichotomy provides an important test case for potentially much wider classifications of CSPs of homogeneous structures. 

The paper has the following structure. We provide basic definitions concerning phylogeny problems in
Section~\ref{sect:phylo}, and also explain how these problems can be viewed as constraint satisfaction problems for reducts of the homogeneous binary branching $C$-relation. Section~\ref{sect:algebraic} provides a brief but self-contained introduction to the universal-algebraic approach to the complexity of constraint satisfaction, and in Section~\ref{sect:prelims} we collect known results that we will use in our proof. 
Section~\ref{sect:violating} applies the universal algebraic approach to phylogeny problems, and we derive structural  properties of phylogeny problems that do not simulate
a known hard phylogeny problem. In Section~\ref{sect:Horn} we translate these structural properties 
into definability properties in terms of syntactically restricted formulas, called affine Horn formulas. 
This section also contains our algorithm for solving the tractable cases.  In Section~\ref{sec:treeoperations}, we present a characterisation of our tractable class of phylogeny problems based on polymorphisms.   
Finally, in Section~\ref{sect:main} we put everything together and state and prove our main results, including the mentioned complexity dichotomy.

This article is a revised and extended version of an earlier conference publication~ \cite{BodJonsPhamCSP}. 

\section{Phylogeny Problems}
\label{sect:phylo}
In this section, we define (in Sections~\ref{sect:rootedtrees} and \ref{sect:phylogenyproblems}) the class of phylogeny problems studied in this article and illustrate it by providing examples from the literature. 
We continue in Section~\ref{sect:phylo-csp}
by showing how to formulate such phylogeny problems as \emph{constraint satisfaction problems} 
over an infinite domain.  

\subsection{Rooted trees}
\label{sect:rootedtrees}
We fix some standard terminology concerning rooted trees.
Let $T$ be a tree (i.e., an undirected, acyclic, and connected graph) with a distinguished vertex $r$, 
the \emph{root} of $T$. The vertices of $T$ are denoted by $V(T)$.
All trees in this paper will be \emph{binary},
i.e., all vertices except for the root have either degree $3$ or $1$,
and the root has either degree $2$ or $0$.
The \emph{leaves} $L(T)$ of $T$ are the vertices of $T$ of degree one. 

For $u,v \in V(T)$, we say that $u$ \emph{lies below} $v$
if the path from $u$ to $r$ passes through $v$. 
We say that $u$ \emph{lies strictly below} $v$ if
$u$ lies below $v$ and $u \neq v$.
The \emph{youngest common ancestor (yca)} of a set of vertices $S \subseteq V(T)$ is the node $u$ that lies above all
vertices in $S$ and has maximal distance 
from $r$; this node is uniquely determined by $S$. 

\begin{definition}\label{def:leaf-struct}
The \emph{leaf structure} of a binary rooted tree $T$ is the relational structure $(L(T);C)$ 
where $C(a,b,c)$ holds in $C$ if and only if $\yca(\{b,c\})$ lies strictly below $\yca(\{a,b,c\})$ in $T$.
We also call $T$ the \emph{underlying tree} of the leaf structure. 
\end{definition}

It is well-known that a rooted tree is uniquely determined by its leaf structure (Theorem 3 in~\cite{Steel}).

\begin{definition}
For finite $S_1,S_2 \subseteq L(T)$, we write $S_1|S_2$ if 
neither of $\yca(S_1)$ and $\yca(S_2)$ lies below
the other. For arbitrary sequences of (not necessarily distinct) vertices $x_1,\dots,x_n$ and $y_1,\dots,y_m$ with $n,m \geq 1$ we write $x_1,\dots,x_n|y_1,\dots,y_m$ if 
$\{x_1,x_2,\dots,x_n\} \big | \{y_1,y_2,\dots,y_m\}$.
\end{definition}
In particular, $x|yz$ (this notation is widespread in the literature on phylogeny problems~\cite{Steel,Steel16,JanssonSung16,warnow17})
is equivalent to $C(x,y,z)$.
Note that if $x|yz$ then this includes the possibility that $y=z$; however, $x|yz$ implies that
$x \neq y$ and $x \neq z$. 
Hence, for every triple $x,y,z$ of leaves in a rooted binary tree, we either have $x|yz$, $y|xz$, $z|xy$, or $x=y=z$. 
Also note that $x_1,\dots,x_n|y_1,\dots,y_m$ if and only if $x_ix_j|y_k$ and $x_i|y_ky_l$ for
all $i,j \leq n$ and $k,l \leq m$. 

\subsection{Phylogeny problems} \label{sect:phylogenyproblems}
An \emph{atomic phylogeny formula} is a formula 
of the form $x|yz$ or of the form $x=y$. 
A \emph{phylogeny formula} is a quantifier-free formula $\phi$ that is
built from atomic phylogeny formulas with the usual Boolean connectives
(disjunction, conjunction and negation).

We say that a phylogeny formula $\phi$ with variables $V$ is 
\emph{satisfiable} if there exists a rooted binary
tree $T$ and a mapping $s \colon V \rightarrow L(T)$ 
such that $\phi$ is satisfied by $T$ under $s$ (with the usual 
semantics of first-order logic). In this case we also say that 
$(T, s)$ is a \emph{solution} to $\phi$.

Let $\Phi = \{\phi_1,\phi_2,\dots\}$ be a finite set of phylogeny formulas. 
Then the \emph{phylogeny problem} for $\Phi$ is the following 
computational problem.

\cproblem{Phylo$(\Phi)$}
{A finite set $V$ of variables,
and a finite set $\Psi$ of phylogeny formulas obtained from phylogeny formulas $\phi \in \Phi$ by substituting the variables from $\phi$ by variables from $V$.}
{Is there a tree $T$ and a mapping $s \colon V \rightarrow L(T)$
such that $(T,s)$ satisfies all formulas from $\Psi$?}

If $a_1,a_2,\dots,a_n$ and $b_1,b_2,\dots,b_m$ are sequences of leaves in a 
binary tree $T$, then by our observation above $\{a_1,a_2,\dots,a_n\}|\{b_1,b_2,\dots,b_m\}$ 
holds in $T$ if and only if $a_ia_j|b_k$ and $b_pb_q|a_r$ hold in $T$ 
for arbitrary $i,j,r\in \{1,2,\dots,n\}$ and $p,q,k\in \{1,2,\dots,m\}$. 
Thus for variables $x_1,x_2,\dots,x_n,y_1,y_2,\dots,y_m$,  we may use 
$x_1,\dots,x_n|y_1,\dots,y_m$ as a shortcut for the formula
\begin{displaymath} 
 \bigwedge_{i,j \in \{1,\dots,n\}, k,l \in \{1,\dots,m\}} (y_k|x_ix_j \wedge 
x_i|y_ky_l) \; ,
\end{displaymath}
and we use $\alldiff(x_1,\dots,x_k)$ as a shortcut for $\bigwedge_{1 \leq i < j \leq k} x_i \neq x_j$.

\begin{example}
A fundamental problem in phylogenetic reconstruction is the
rooted triple consistency problem~\cite{ASSU,BryantSteel,HenzingerKingWarnow,Steel} that was already mentioned in the introduction.
This problem can be stated conveniently as Phylo$(\{x|yz\})$.
That is, an instance of the rooted triple consistency problem
consists of a finite set of variables $V$ and a finite set of atomic
formulas of the form $x|yz$ where $x,y,z \in V$, and the
question is whether there exists a tree $T$ and a mapping $s \colon V \to L(T)$ such that for every formula $x|yz$ in the input,
$s(x)|s(y)s(z)$ holds in $T$. \qed
\end{example}

\begin{example}
The following NP-complete problem was introduced and studied 
in a closely related form by Ng, Steel, and 
Wormald~\cite{NgSteelWormald}. 
We are given a set of rooted trees on a common leaf set $V$,
and we would like to know whether there exists a tree $T$ with leaf
set $V$ such that, intuitively, for each of the given trees $T'$ the subtree of $T$ induced by the leaves of $T'$ is not the same as $T'$.

The hardness proof for this problem given Ng, Steel, and Wormald~\cite{NgSteelWormald} 
shows that already the phylogeny problem
$\text{Phylo}\big(\{(\neg x|yz) \wedge \alldiff(x,y,z),  \neg (u|xy \wedge v|yu) \wedge \alldiff(x,y,u,v) \}\big)$,
which can be seen as a special case of the problem above, is 
NP-hard.
\qed \end{example}

\begin{example}\label{expl:Nd}
The hardness proof for the rooted subtree avoidance problem given by Ng, Steel, and Wormald~\cite{NgSteelWormald} cannot be adapted to show hardness of Phylo$(\{(\neg x|yz) \wedge \alldiff(x,y,z)\})$; a hardness proof can be found in Bryant's PhD thesis~\cite{Bryant} (Section 2.6.2).
\qed \end{example}

\begin{example}
The quartet consistency problem described in the introduction
can be cast as Phylo$(\{\phi\})$
where $\phi$ is 
the following phylogeny formula.
\begin{displaymath}
(xy|u \wedge xy|v) \vee (x|uv \wedge y|uv)\;.
\end{displaymath}

Indeed, this formula describes all rooted trees with leaves $x,y,u,v$ where
the shortest path from $x$ to $y$ does not intersect the shortest
path from $u$ to $v$ (whether or not this is true is in fact independent
from the position of the root). 
\qed \end{example}

\begin{example}
Let $\phi$ be the formula 
$x_1x_2|x_3x_4 \, \vee \, x_1x_3|x_2x_4 \, \vee \, x_1x_4|x_2x_3$.
Then Phyl$(\{\phi\})$ models the following computational problem. The input consists of a 4-uniform hypergraph with a finite set of vertices $V$; the task is to determine a binary tree $T$ with
leaf set $V$ such that for every hyperedge $\{x_1x_2x_3x_4\}$ in the input, 
exactly two out of $\{x_1,\dots,x_4\}$ lie below each child of $\yca(x_1,\dots,x_4)$ in $T$.
This example cannot be solved by the algorithm of Aho, Sagiv, Syzmanski, and Ullman~\cite{ASSU}, and neither by the generalisation of this algorithm presented in~\cite{phylo-long}. However, the problem \emph{can} be solved in polynomial time by the algorithm presented in Section~\ref{sect:alg}. 
\qed \end{example}

Our main results (stated in Section \ref{sect:main}) imply a full classification of the computational complexity of Phylo$(\Phi)$.

\begin{theorem}\label{thm:complexity-main}
Let $\Phi$ be a finite set of phylogeny formulas. 
Then Phylo$(\Phi)$ is in P or NP-complete.
\end{theorem}

\subsection{Phylogeny problems as CSPs}\label{sect:phylo-csp}
As mentioned in the introduction, every phylogeny problem can be formulated as a constraint satisfaction problem over 
an infinite domain. This reformulation will be essential
to use universal-algebraic and Ramsey-theoretic tools in our complexity classification of phylogeny problems. 


Let $\Gamma$ be a structure with relational signature $\tau = \{R_1,R_2,\dots\}$. This is,
$\Gamma$ is a tuple $(D;R^\Gamma_1,R^\Gamma_2,\dots)$ where $D$ is the (finite or infinite) \emph{domain} of $\Gamma$ and 
where $R^\Gamma_i \subseteq D^{k_i}$ is a relation of arity $k_i$ over $D$. When $\Delta$ and $\Gamma$ are
two $\tau$-structures, then a \emph{homomorphism}
from $\Delta$ to $\Gamma$ is a mapping $h$ from the 
domain of $\Delta$ to the domain of $\Gamma$
such that for all $R \in \tau$ and for all $(x_1,\dots,x_k) \in R^\Delta$ we have $(h(x_1),\dots,h(x_k)) \in R^\Gamma$.

Suppose that the signature $\tau$ of $\Gamma$ is finite. 
Then
the \emph{constraint satisfaction problem for $\Gamma$}, 
denoted by 
CSP$(\Gamma)$, is the following computational problem.

\cproblem{CSP$(\Gamma)$}
{A finite $\tau$-structure $\Delta$.}
{Is there a homomorphism from $\Delta$ to $\Gamma$?}

We say that $\Gamma$ is the \emph{template} or \emph{constraint language} of
the problem CSP$(\Gamma)$. We now formulate phylogeny problems as
constraint satisfaction problems. 
Let $\Phi=\{\phi_1,\dots,\phi_n\}$ be a finite set of phylogeny formulas. If $x_1,\dots,x_{k_i}$ are the variables of $\phi_i$,
then we introduce a new relation symbol $R_i$ of arity $k_i$,
and we write $\tau$ for the set of all these relation symbols.  

If $\Psi$ is an instance of $\Phyl(\Phi)$ with variables $V$, 
then we associate to $\Psi$ a $\tau$-structure
$\Delta_\Psi$ with domain $V$ as follows. 
For $R \in \tau$ of arity $k$, the relation $R^{\Delta}$ contains the tuple $(y_1,\dots,y_k) \in V^k$ if and only if the instance $\Psi$ 
contains a formula
$\psi$ that has been obtained from a formula $\phi \in \Phi$
by replacing the variables $x_1,\dots,x_k$ of $\phi$ 
by the variables $y_1,\dots,y_k \in V$.

\begin{proposition}\label{prop:phylo-csp}
Let $\Phi$ be a finite set of phylogeny formulas.
Then there exists a $\tau$-structure $\Gamma_\Phi$ 
with countable domain $\mL$ 
and the following property: 
an instance $\Psi$ of $\Phyl(\Phi)$ is
satisfiable if and only if 
$\Delta_\Psi$ homomorphically maps to $\Gamma_\Phi$. 
\end{proposition}

The structure $\Gamma_\Phi$ in Proposition~\ref{prop:phylo-csp} is by no means
unique, and such structures are easy to construct. 
The specific choice for $\Gamma_\Phi$ presented
below is important later in the proof of our complexity
classification for phylogeny problems; as we will see, it has many pleasant model-theoretic properties. 
To define $\Gamma_\Phi$, 
we first define a `base structure' $(\L;C)$,
and then define $\Gamma_\Phi$ in terms of $(\L;C)$. 
The structure $(\L;C)$ is a well-studied object
in model theory and the theory of infinite permutation groups,
and it will be defined via \emph{\Fresse-amalgamation}. 

We need a few preliminaries from model theory. 
Injective homomorphisms that also preserve the complement of each relation are called \emph{embeddings}. 
Let $D$ be the domain of a relational $\tau$-structure $\Gamma$,
and arbitrarily choose $S \subseteq D$. 
Then the \emph{substructure induced by $S$ in $\Gamma$} 
is the $\tau$-structure $\Delta$ with domain $S$
such that $R^{\Delta} = R^\Gamma \cap S^n$ for each $n$-ary $R \in \tau$; we also write $\Gamma[S]$ for $\Delta$. 
Let $\Gamma_1$ and $\Gamma_2$ be $\tau$-structures 
with not necessarily disjoint 
domains $D_1$ and $D_2$, respectively. 
The \emph{intersection} $\Gamma_1 \cap \Gamma_2$ of $\Gamma_1$ and $\Gamma_1$ is the structure $\Delta$ with domain $D_1 \cap D_2$ 
such that $R^{\Delta}=R^{\Gamma_1}\cap R^{\Gamma_2}$ for all $R\in\tau$. 
A $\tau$-structure $\Delta$ is an \emph{amalgam of $\Gamma_1$ and $\Gamma_2$} if for $i=1,2$ there are embeddings $f_i$ of $\Gamma_i$ to $\Delta$ such that $f_1(a)=f_2(a)$ for all $a\in D_1 \cap D_2$.  
  A class $\mathcal A$ of
$\tau$-structures has the \emph{amalgamation property} if for all
$\Gamma_1,\Gamma_2\in\mathcal A$ there is a $\Delta \in \mathcal A$ that is an
amalgam of $\Gamma_1$ and $\Gamma_2$. 
A class of finite
$\tau$-structures that  has the
  amalgamation property, is closed under isomorphism and
  taking induced substructures is called an
\emph{amalgamation class}.

Homomorphisms from $\Gamma$ to $\Gamma$ are
called \emph{endomorphisms} of $\Gamma$. 
An \emph{automorphism} of $\Gamma$ 
is a bijective endomorphism whose inverse is also an endomorphism; that is, they are bijective embeddings
of $\Gamma$ into $\Gamma$. 
The set containing all
endomorphisms of $\Gamma$ is denoted $\End(\Gamma)$ while the set of
all automorphisms is denoted by $\Aut(\Gamma)$. For two arbitrary sets $X$ and $Y$, a map from a subset of $X$ to $Y$ is called \emph{a partial map} from $X$ to $Y$. Let $f$ be an arbitrary partial map from $D$ to $D$. The map $f$ is called a \emph{partial isomorphism} of $\Gamma$ if $f$ is an isomorphism from $\Gamma[S]$ to $\Gamma[f(S)]$, where $S$ denotes the domain of $f$. A relational structure $\Gamma$ is called \emph{homogeneous} if every partial isomorphism of $\Gamma$ with a finite domain can be extended to an automorphism of $\Gamma$. In this paper a partial isomorphism always means a partial map with a finite domain. Homogeneous structures $\Gamma$ with finite relational signature are \emph{$\omega$-categorical}, i.e., all countable structures that satisfy the same first-order sentences as $\Gamma$ are isomorphic (see e.g.~\cite{Oligo} or~\cite{HodgesLong}).

\begin{theorem}[\Fresse; see
Theorem~7.1.2 in~\cite{HodgesLong}]\label{thm:fraisse}
Let $\mathcal A$ be an amalgamation class with countably many non-isomorphic members. Then there is a countably infinite homogeneous $\tau$-structure $\Gamma$ such that $\mathcal A$ is the class of structures that embeds into $\Gamma$.
The structure $\Gamma$, which is unique up to isomorphism, is called the \emph{\Fresse\ limit} of $\mathcal A$. 
\end{theorem}

When working with relational structures, 
it is often convenient to not distinguish between a relation
and its relation symbol. 
For instance, when we write $(L(T),C)$ for a leaf structure (Definition~\ref{def:leaf-struct}), the letter $C$ stands both
for the relation symbol, and for the relation itself.
This should never cause confusion.

\begin{proposition}[see Proposition~7 in~\cite{BodJonsPham}]\label{prop:amalgam}
The class of all leaf structures of finite rooted binary trees is an amalgamation class.
\end{proposition}

We write $(\L;C)$ for the \Fresse-limit of the
amalgamation class from Proposition~\ref{prop:amalgam}.
This structure 
is well-studied in the literature, and the relation $C$
is commonly referred to as the {\em binary branching homogeneous C-relation}. It has been
studied in particular in the context of infinite permutation groups~\cite{AdelekeNeumann,Oligo}. There is also 
a substantial literature on {\em C-minimal
structures}, which are analogous to o-minimal structures, but where a C-relation plays the role of the order in an o-minimal structure~\cite{C-minimal,MacphersonSteinhorn}.

\begin{definition}
Let $\Delta$ be a structure. Then a relational structure 
$\Gamma$ with the same domain as $\Delta$ is called a
\emph{reduct} of $\Delta$ if all relations of $\Gamma$
have a first-order definition in $\Delta$ (using conjunction, disjunction, negation, universal and existential quantification, in the usual way, but without parameters). 
That is, for every relation $R$ of arity $k$ of $\Delta$
there exists a first-order
formula $\phi$ with free variables $x_1,\dots,x_k$ such that
$(a_1,\dots,a_k) \in R$ if and only if $\phi(a_1,\dots,a_k)$ holds
in $\Delta$. 
\end{definition}

It is well-known that all structures with a first-order definition
in an $\omega$-categorical structures are again $\omega$-categorical
(we refer once again to~\cite{HodgesLong}, Theorem 7.3.8; the analogous statement for homogeneity is false).

\begin{proof} (Proposition~\ref{prop:phylo-csp})
Let $\Phi$ be a finite set of phylogeny formulas. 
Let $\Gamma_\Phi$ be the reduct of $(\mL;C)$
defined as follows. For every $\phi \in \Phi$
with free variables $x_1,\dots,x_k$, we have the
$k$-ary relation $R_\phi$ in $\Gamma_\phi$
which is defined by the formula $\phi$ over
$(\mL;C)$. It follows (in a straightforward way) from the definitions
that this structure has the properties required in the statement
of Proposition~\ref{prop:phylo-csp}. 
\end{proof}

Conversely, every CSP for a reduct $\Gamma = (\mL;R_1,\dots,R_n)$ 
of $({\mathbb L};C)$ corresponds to a phylogeny problem. 
To see this, we need the following well-known fact. 

\begin{theorem}[see, e.g.,~\cite{HodgesLong}]
\label{thm:qe}
An $\omega$-categorical structure is homogeneous
if and only if it has \emph{quantifier-elimination}, that is, 
every first-order formula over $\Gamma$ is  equivalent to a quantifier-free formula.
\end{theorem}

Let $\phi_i$ be a quantifier-free first-order definition of $R_i$ in $(\mL;C)$.
When $\Delta$ is an instance of $\Csp(\Gamma)$, 
consider the instance $\Psi$ of $\Phyl(\{\phi_1,\dots,\phi_n\})$ 
where the variables $V$ are the vertices of $\Delta$,
and where $\Psi$ contains for every tuple $(v_1,\dots,v_n) \in R_i^\Delta$ the formula $\phi_i(v_1,\dots,v_n)$. It is again
straightforward to verify that $\Delta$ homomorphically
maps to $\Gamma$ if and only if $\Psi$ is a satisfiable
instance of $\Phyl(\{\phi_1,\dots,\phi_n\})$.

Therefore, the class of phylogeny problems
corresponds precisely to the class of CSPs whose template is a reduct of $(\mL;C)$.

\section{The Universal-Algebraic Approach}\label{sect:algebraic}
We utilize the so-called {\em universal-algebraic approach} to obtain our results. For a more detailed introduction to this approach for $\omega$-categorical
templates, see Bodirsky [5]. We introduce some central concepts concerning
definability (Section \ref{sec:ppdef}), polymorphisms (Section \ref{subsec:pol}), and model-completeness and cores (Section \ref{sect:mc-cores}). In the final section, we discuss how Ramsey theory can be used for analyzing polymorphisms. By using the language of universal
algebra, one can elegantly state the border between tractability and NP-hardness
for phylogeny problems; we present this border in Section \ref{sect:mc-cores}.

\subsection{Primitive Positive Definability}\label{sec:ppdef}

Let $\phi$ denote a first-order formula over the signature $\tau$, and assume that the
variables $z_1,\dots,z_k$ are free in $\phi$. The formula $\phi$ is {\em primitive positive}
if it is of the form $\exists x_1,\dots,x_n (\psi_1 \wedge \dots \wedge \psi_m)$ where
$\psi_1,\dots,\psi_m$ are {\em atomic}, that is, each $\psi_i$ equals either $R(y_1,\dots,y_l)$
or $y_1=y_2$ where $R \in \tau$ is $l$-ary and $y_1,y_2,\dots,y_l \in \{x_1,\dots,x_n,z_1,\dots,z_k\}$. When $\Gamma$ is a $\tau$-structure, then $\phi$ 
defines over $\Gamma$ a $k$-ary
relation, namely the set of all $k$-tuples that satisfy $\phi$ in $\Gamma$.  
We let $\langle \Gamma \rangle$
denote the set of all finitary relations that are primitive positive definable in 
$\Gamma$.

Lemma~\ref{lem:basicrelsdefs} below illustrates the concept of primitive positive definability. 
The relations that appear in this lemma will be important in later sections. 
\begin{align*}
C_d := & \; \{(x,y,z) \in \mL^3 : x|yz \wedge y \neq z\}. \\
Q := & \; \{(x,y,u,v) \in \mL^4 : ((xy|u \wedge xy|v) \vee (x|uv \wedge y|uv))\}. \\
Q_d := & \; \{(x,y,u,v) \in \mL^4 : ((xy|u \wedge xy|v) \vee (x|uv \wedge y|uv)) \wedge x \neq y \wedge u \neq v\}. \\
N := & \; \{(x,y,z) \in \mL^3 : (xy|z \vee x|yz)\} \\
N_d := & \; \{(x,y,z) \in \mL^3 : (xy|z \vee x|yz) \wedge x \neq y \wedge y \neq z \}.
\end{align*}


\begin{lemma} \label{lem:basicrelsdefs}
$\langle (\mL;C) \rangle = \langle (\mL;C_d) \rangle$,
$\langle (\mL;Q) \rangle = \langle (\mL;Q_d) \rangle$, and
$\langle (\mL;N) \rangle = \langle (\mL;N_d) \rangle$.
\end{lemma}
\begin{proof}
Note that the formula $x \neq y$ is equivalent to the primitive positive formulas $\exists u. \, C(x,y,u)$, $\exists u,v. \, Q(u,x,v,y)$, and
$\exists u. \, N(x,u,y)$.
Thus, $C_d \in \langle (\mL;C) \rangle$, $Q_d \in \langle (\mL;Q) \rangle$, and $N_d \in \langle (\mL;N) \rangle$.
We have that 
\begin{displaymath}
C(x,y,z) \Leftrightarrow \exists u \,\big(C_d(x,y,u) \wedge C_d(x,z,u)\big)\;.
\end{displaymath}
 so $C \in \langle (\mL;C_d) \rangle$. 
To see that $Q \in \langle (\mL;Q_d) \rangle$, note that
\[Q(x,y,z,t) \Leftrightarrow \exists u,v \, \big(Q_d(u,x,v,z)\wedge Q_d(u,x,v,t)\wedge Q_d(u,y,v,z)\wedge Q_d(u,y,v,t)\big).\]
Finally, 
\begin{displaymath}
N(x,y,z) \Leftrightarrow \exists u,v \, (C_d(v,x,u) \wedge C_d(u,v,y) \wedge N_d(u,z,v))\;.
\end{displaymath}
which implies  that
$N \in \langle (\mL;N_d) \rangle$, because 
$C_d(x,y,z) \Leftrightarrow \big (N_d(x,z,y) \wedge N_d(x,y,z) \big )$. 
\end{proof}

The following result
motivates
why we are interested in positive primitive definability in connection
with the complexity of CSPs.

\begin{lemma}[~\cite{Jeavons}]
\label{lem:pp-reduce}
Let $\Gamma$ be a template and let $\Gamma'$
be the structure obtained from $\Gamma$ by adding the relation $R$. If $R$ is primitive positive
definable in $\Gamma$, then $\Csp(\Gamma)$ 
and $\Csp(\Gamma')$ are polynomial-time equivalent.
\end{lemma}

The following is an application of the above lemma.
\begin{lemma}\label{lem:N}
If $N \in \langle \Gamma \rangle$ then 
$\Csp(\Gamma)$ is NP-hard.
\end{lemma}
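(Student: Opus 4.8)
The plan is to reduce a classical \textsc{NP}-hard problem to $\Csp(\Gamma)$, with $N$ as the source of hardness, and the first step is to reduce to the single-relation problem $\Csp(\{N\})$. Since $N \in \langle\Gamma\rangle$, the relation $N$ has a primitive positive definition over $\Gamma$; replacing in any instance of $\Csp(\{N\})$ each constraint $N(x,y,z)$ by this definition, with fresh variables for its existential quantifiers, produces in polynomial time an equisatisfiable instance of $\Csp(\Gamma)$ (this is the standard transfer of hardness along primitive positive definitions, also recalled in the section on the CSP). Hence it suffices to prove that $\Csp(\{N\})$ is \textsc{NP}-hard.

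For that, one route is to relate $N$ to Bryant's forbidden triple relation $N_d$, whose constraint satisfaction problem is \textsc{NP}-complete~\cite{Bryant}: it then suffices to reduce $\Csp(\{N_d\})$ to $\Csp(\{N\})$, for instance by giving a primitive positive definition of $N_d$ from $N$ — here an auxiliary existentially quantified variable is needed, since a primitive positive definition cannot on its own create the disequalities that distinguish $N_d$ from $N$ — or by re-reading a reduction witnessing the hardness of $\Csp(\{N_d\})$ and checking that the instances it produces already fail to distinguish $N$ from $N_d$. A more self-contained route is a direct reduction from \textsc{Not-All-Equal 3-SAT} (equivalently, $2$-colourability of $3$-uniform hypergraphs): introduce two ``pole'' leaves $P_0,P_1$; for each Boolean variable $v$ a leaf $x_v$ together with the constraint $N(P_0,P_1,x_v)$, which forbids $x_v$ from being the ``outlier'' of the triple $\{P_0,P_1,x_v\}$ and hence forces $x_v$ to the $P_0$-side or to the $P_1$-side of the split at $\operatorname{lca}(P_0,P_1)$, encoding a truth value $\phi(v)$; and for each clause a small gadget on the three corresponding leaves (and, if necessary, one fresh auxiliary leaf) that is satisfiable precisely when these leaves do not all lie on the same side. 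One then checks that every solution of the instance in the homogeneous phylogeny structure yields a not-all-equal assignment, and conversely that every not-all-equal assignment is realised by some rooted tree satisfying all constraints — the latter by completing a consistent partial description using homogeneity.

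I expect the main obstacle to be the clause gadget. An $N$-constraint is inherently disjunctive — $N(a,b,c)$ only asserts that $a$ is the outlier or $b$ is the outlier, not $c$ — so designing a gadget that enforces exactly ``not all three on the same side,'' with no spurious solutions exploiting the fact that a rooted tree carries far more structure than a two-colouring, is delicate, and both directions of its correctness must be verified; one must likewise confirm that the pole constraints genuinely pin down a binary choice and cannot be circumvented by placing a leaf ``above'' $\operatorname{lca}(P_0,P_1)$, so the two gadgets have to be shown to interact correctly. The remaining points — polynomiality of the construction and the appeal to homogeneity to complete a consistent partial tree — are routine.
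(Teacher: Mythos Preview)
Your first route is exactly the paper's proof: the paper invokes an earlier lemma asserting $\langle(\mL;N)\rangle = \langle(\mL;N_d)\rangle$, so in particular $N_d$ is primitively positively definable from $N$; it then cites Bryant's NP-completeness of $\Csp(\mL;N_d)$ and the standard pp-reduction lemma (your opening paragraph) to conclude. Your second, self-contained route via \textsc{NAE-3SAT} is only a programme: you yourself flag the clause gadget as the crux and do not supply one, so that branch is not a proof as written, and the paper does not pursue it.
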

\begin{proof}
Lemma~\ref{lem:basicrelsdefs} shows that $\langle (\mL;N) \rangle = \langle (\mL;N_d) \rangle$. Since $xy|z\wedge x|yz$ implies $x\neq z$, and $x\neq y\wedge x\neq z\wedge y\neq z$ implies $x|yz\vee y|xz\vee z|xy$, it follows from the definition of $N_d$ that $N_d$ is equivalent to $(\neg y|xz) \wedge\alldiff(x,y,z)$. We have already mentioned in Example~\ref{expl:Nd} that Bryant~\cite{Bryant} showed that the CSP for $(\mL;(\neg y|xz) \wedge\alldiff(x,y,z))$ is NP-complete. By Lemma~\ref{lem:pp-reduce}, $\Csp(\Gamma)$ is NP-hard.
\end{proof}

Therefore, in the following sections we are particularly interested 
in those reducts $\Gamma$ of
$(\mL;C)$ where $N \notin \langle \Gamma \rangle$. 
We will prove later that when $\Gamma$ is a reduct
of $(\mL;C)$ with finite relational signature 
such that $C \in \langle \Gamma \rangle$
and $N \notin \langle \Gamma \rangle$,
 then $\Csp(\Gamma)$ is in P.

\subsection{Polymorphisms}\label{subsec:pol}
Primitive positive definability can be characterised
by preservation under so-called \emph{polymorphisms} --
this is the starting point of the universal-algebraic 
approach to constraint satisfaction (see, for instance, Bulatov, Jeavons, and Krokhin~\cite{JBK} for this approach over finite domains). 
The {\em (direct--, categorical--, or cross--) product} 
$\Gamma_1 \TIMES \Gamma_2$ of two relational $\tau$-structures $\Gamma_1$ and $\Gamma_2$ 
is a $\tau$-structure on the domain $D_{\Gamma_1} \TIMES D_{\Gamma_2}$. 
For all relations $R \in \tau$ the relation $R\big((x_1, y_1)$, \dots, $(x_k, y_k)\big)$ holds in $\Gamma_1 \TIMES \Gamma_2$ iff $R(x_1, \dots, x_k)$ holds in $\Gamma_1$ and $R(y_1,\dots, y_k)$ holds in $\Gamma_2$.
Homomorphisms from $\Gamma^k = \Gamma \TIMES \cdots \TIMES \Gamma$ to $\Gamma$
are called \emph{polymorphisms} of $\Gamma$.
When $R$ is a relation over the domain $D$, then we say that $f$ \emph{preserves $R$} (or that
$R$ is {\em closed under} $f$)
if $f$ is a polymorphism of $(D;R)$.
Note that unary polymorphisms of $\Gamma$ are endomorphisms of $\Gamma$. 
When $\phi$ is a first-order formula that defines $R$,
and $f$ preserves $R$, then we also say that $f$ \emph{preserves $\phi$}.
If an operation $f$ does not
preserve a relation $R$, we say that $f$ \emph{violates} $R$.



The set of all polymorphisms $\Pol(\Gamma)$ of a relational structure forms an algebraic object called a \emph{function clone}~\cite{Szendrei}, which is 
a set of operations defined on a set $D$ that is closed
under composition and that contains all projections. 
We write $\Pol^{(k)}$ for the $k$-ary functions in $\Pol(\Gamma)$. 
The set $\Pol(\Gamma)$ is locally closed in the following sense. 
A set of functions $\mathcal F$ with domain $D$ is \emph{locally closed}
if every function $f$ with the following property belongs to $\mathcal F$:  for every finite subset $A$ of $D$ 
there is some operation $g \in \mathcal F$ such that $f(a) = g(a)$ 
for all $a \in A^k$. 
We write $\overline F$ for  
the smallest set that is locally closed and
contains $F$. 
We say that $F$  \emph{generates} an operation $g$
if $g$ is in the smallest locally closed function clone that contains $F$. 

Polymorphism clones can be used
to characterize primitive positive definability over a finite structure; this follows from results by
Bodnar\v{c}uk, Kalu\v{z}nin, Kotov, and Romov~\cite{BoKaKoRo} and Geiger~\cite{Geiger}. 
This is false for general infinite structures.
However, the result remains true if the relational structure is $\omega$-categorical.

\begin{theorem}[Bodirsky \& \Nesetril~\cite{BodirskyNesetrilJLC}]
\label{thm:inv-pol}
Let $\Gamma$ be a countable $\omega$-categorical structure.  
Then the primitive positive definable relations in $\Gamma$
are precisely the relations  
preserved by the polymorphisms of $\Gamma$.
\end{theorem}

Let $G$ be a permutation group on a set $X$. 
The \emph{orbit} of a $k$-tuple $(t_1,\dots,t_k) \in X^k$ 
under $G$
is the set of all tuples of the form $(\pi(t_1), \ldots, \pi(t_k))$,
where $\pi$ is a permutation from $G$.
The following has been discovered independently by Engeler, Svenonius, and Ryll-Nardzewski. 
\begin{theorem}[See, e.g., Theorem~7.3.1 in Hodges~\cite{HodgesLong}]\label{thm:Ryll}
A countable relational structure $\Gamma$ is $\omega$-categorical 
if and only if the automorphism
group of $\Gamma$ is \emph{oligomorphic}, that is, if for each $k \geq  1$ there 
are finitely many orbits of $k$-tuples
under $\Aut(\Gamma)$. A relation $R$ has a first-order definition 
in an $\omega$-categorical structure $\Gamma$ if and only
if $R$ is preserved by all automorphisms of $\Gamma$.
\end{theorem} 

We also need the following observation.

\begin{lemma}[Bodirsky \& Kara~\cite{tcsps-journal}]
\label{lem:small-arity}
Let $\Gamma$ be a relational structure
and let $R$ be a $k$-ary relation 
that is a union of $m$ orbits of $k$-tuples of
$\Aut(\Gamma)$. If $\Gamma$ has a polymorphism $f$ that violates
$R$, then $\Gamma$ also has an at most $m$-ary polymorphism that violates $R$.
\end{lemma}

Given a function $f \colon X^k \rightarrow Y$, we tacitly extend it to tuples in the natural way: 
\begin{align*}
& f((x^1_1,\dots,x^m_1),(x^1_2,\dots,x^m_2),\dots,(x^1_k,\dots,x^m_k)) \\
= \; & 
(f(x^1_1,x^1_2,\dots,x^1_k),f(x^2_1,x^2_2,\dots,x^2_k),\dots,f(x^m_1,x^m_2,\dots,x^m_k)) \; .
\end{align*}
When $U \subseteq X^k$, we also write $f(U)$ for the
set $\{f(u) : u \in U\}$. 
These conventions will be very convenient when working with polymorphisms.

\subsection{Model-Complete Cores and the Border Between Tractability and Hardness}
\label{sect:mc-cores}
A structure $\Gamma$
 is a {\em core} if all of its endomorphisms are embeddings.
Note that endomorphisms preserve existential positive formulas
and embeddings preserve existential formulas. 
A first-order theory $T$ is said to be {\em model-complete} if every embedding between models of $T$ preserves all first-order formulas. A structure is called \emph{model-complete}
if its first-order theory is model-complete. 
Homogeneous $\omega$-categorical structures
provide examples of model-complete structures:
the reason is that if $\Gamma$ is $\omega$-categorical and homogeneous, then 
every first-order formula is equivalent to a quantifier-free formula
 (Theorem~\ref{thm:qe}). 
Since embeddings of $\Gamma$ into $\Gamma$ preserve 
quantifier-free formulas, the statement follows from
Lemma~\ref{lem:mc}.

\begin{lemma}\label{lem:core}
The structures $(\mL;C)$ and $(\mL;Q)$ are model-complete cores. 
\end{lemma}
\begin{proof}
Let $e$ be an endomorphism of $(\mL;C)$. 
Suppose for contradiction that $e(u)=e(v)$ for 
distinct elements $u,v$ of $\mL$. Then 
$uu|v$, but not $e(u)e(u)|e(u)$, in contradiction 
to the assumption that $e$ preserves $C$. 
Hence, $e$ is injective. 
Note that the negation of $x|yz$ is equivalent to 
$x=y=z \vee xz|y \vee xy|z$, and thus $\neg(x|yz)$ has an existential positive definition in $(\mL;C)$.
It follows that $e$ preserves 
$\neg(x|yz)$, too. This implies that $e$ is an embedding and $(\mL;C)$ is a core. Model-completeness of $(\mL;C)$ follows from homogeneity. 
The structure $(\mL;Q)$ is a model-complete core, too; the proof is very similar to the proof for $(\mL;C)$ and left to the reader. 
\end{proof}

If $\Gamma$ is $\omega$-categorical, then it is possible to characterize model-completeness in terms of \emph{self-embeddings} of $\Gamma$, this
is, embeddings of $\Gamma$ into $\Gamma$. 

\begin{lemma}[Lemma~13 in Bodirsky \& Pinsker~\cite{RandomMinOps}]\label{lem:mc}
A countable $\omega$-categorical structure $\Gamma$ is
model-complete if and only if the self-embeddings of $\Gamma$ are generated by the automorphisms of $\Gamma$.
\end{lemma}

If $\Gamma$ is a core, then every endomorphism of $\Gamma$ is an embedding. We get the following consequence.

\begin{corollary}\label{cor:mc-core}
A countable $\omega$-categorical structure is a model-complete core
if and only if the 
endomorphisms of $\Gamma$ are generated
by the automorphisms of $\Gamma$.
\end{corollary}

Note that every first-order expansion of an $\omega$-categorical
model-complete core remains a model-complete core.

We say that two structures $\Gamma$ and $\Delta$ are
\emph{homomorphically equivalent} if there exists a homomorphism from $\Gamma$ to $\Delta$, and one from $\Delta$ to $\Gamma$. Clearly, homomorphically equivalent structures
have identical CSPs. 

\begin{theorem}[Theorem 16 in Bodirsky~\cite{Cores-Journal}]
Let $\Gamma$ be an $\omega$-categorical structure. 
Then $\Gamma$ is homomorphically equivalent to an $\omega$-categorical model-complete core $\Delta$. The structure $\Delta$ is
unique up to isomorphism, and again 
$\omega$-categorical. 
\end{theorem}

Hence, we speak in the following of \emph{the} model-complete core 
of an $\omega$-categorical structure. Using the concept of polymorphisms and model-complete cores, we can now give a concise description of the border between
CSPs for reducts of $(\mL;C)$ that can be solved
in polynomial time, and those that are NP-complete.

\begin{theorem}\label{thm:main-polym}
Let $\Gamma$ be a reduct of $(\mL;C)$ with a finite signature,
and let $\Delta$ be the model-complete core of $\Gamma$.
If $\Delta$ has a binary polymorphism $f$
and endomorphisms $e_1,e_2$ such that 
$e_1(f(x_1,x_2)) = e_2(f(x_2,x_1))$
for all elements 
$x_1,x_2$ of $\Delta$, then $\Csp(\Gamma)$ is in P.
Otherwise, $\Csp(\Gamma)$ is NP-complete. 
\end{theorem}

The proof of Theorem~\ref{thm:main-polym} can be found in Section~\ref{sect:main}.

\subsection{Ramsey theory for trees}\label{sect:tramsey}
We apply Ramsey theory to find regular behavior in polymorphisms of constraint languages. This approach has succesfully been adopted earlier, see e.g.~\cite{tcsps-journal,BP-reductsRamsey,BPT-decidability-of-definability}.
The Ramsey theorem we use here is less well known
and will be described below. 
We first give a brief introduction to the way Ramsey theory enters the analysis of constraint languages.

 Let $\Gamma, \Delta$ be finite $\tau$-structures. We write ${\Delta \choose \Gamma}$ for the set of all substructures of $\Delta$ that are isomorphic to $\Gamma$. When $\Gamma, \Delta, \Theta$ are $\tau$-structures, then we write
$\Theta \to (\Delta)^{\Gamma}_r$
if for all colorings $\chi \colon {\Theta \choose \Gamma} \to \{1,\dots,r\}$ there exists 
$\Delta' \in {\Theta \choose \Delta}$  such that $\chi$ is constant on ${\Delta' \choose \Gamma}$.

\begin{definition}
\label{def:Ramseyclass}
A class of finite relational structures $\cal C$ that is closed under isomorphisms and substructures is called
\emph{Ramsey} if for all $\Gamma, \Delta \in \cal C$  and for every finite $k \geq 1$ there exists a $\Theta \in \cal C$
such that $\Theta \to (\Delta)^{\Gamma}_k$.
\end{definition}
 
A homogeneous structure $\Gamma$ is called \emph{Ramsey} if the class of all finite 
structures that embed into $\Gamma$ is Ramsey. 
We use Ramsey theory to show that polymorphisms
of $\Gamma$ must behave \emph{canonically} on large parts of the domain, in the sense defined below. 
A wider introduction to canonical functions can be found in Bodirsky \& Pinsker~\cite{BP-reductsRamsey} 
and Bodirsky~\cite{Bodirsky-HDR}. 

\begin{definition}
\label{def:canonicalop}
Let $\Gamma$ be a structure and $S$ be a subset of the domain $D$ of $\Gamma$.  A function 
$f \colon D^l \to D$ is \emph{canonical on $S$ with respect to $\Gamma$} if for
all $m \geq 1$, $\alpha_1,\dots,\alpha_l \in \Aut(\Gamma)$,  and $s_1,\dots,s_l \in S^m$, there exists $\beta \in \Aut(\Gamma)$ 
such that 
\begin{displaymath}
f(\alpha_1(s_1),\dots,\alpha_l(s_l)) = \beta(f(s_1,\dots,s_l)) \; .
\end{displaymath}
\end{definition}
When $\Gamma$ is Ramsey, then the following theorem allows us to work with canonical polymorphisms of the expansion of $\Gamma$ by constants. 

\begin{theorem}[Lemma 21 in Bodirsky, Pinsker, and Tsankov~\cite{BPT-decidability-of-definability}]
\label{thm:canpol}
Let $\Gamma$ be a homogeneous ordered Ramsey structure with domain $D$. Let $c_1, \ldots, c_m \in D$, 
and let $f \colon D^l \to D$ be any 
operation. Then $\{f\} \cup \Aut(\Gamma, c_1, \ldots, c_m)$ generates an operation that is canonical with 
respect to 
$(\Gamma, c_1, \ldots, c_m)$, 
and which is 
identical with $f$ on all tuples containing only values from $c_1,\dots,c_m$. 
\end{theorem}

We now discuss the Ramsey class that is relevant 
in our context. We have to work with an expansion $(\L;C,\prec)$ of $(\L;C)$ by a linear order $\prec$ on $\mathbb L$,
which is also defined as a \Fresse-limit as follows.  
A linear order $\prec$ on the elements of a leaf structure
$(L;C)$ is called \emph{convex}
if for all $x,y,z \in L$ with $x \prec y \prec z$ we have that either $x|yz$ or that $xy|z$ (but not $xz|y$). 
Let $\cal C'$ be the class of all convexly ordered leaf structures. The following can be shown by
using an appropriate variant 
of Proposition~\ref{prop:amalgam}, 
and we omit the straightforward proof. 

\begin{proposition}\label{prop:prec}
The class $\cal C'$ is an amalgamation class; 
its \Fresse-limit is isomorphic to an 
expansion $(\L;C,\prec)$ of $(\L;C)$ by a convex linear 
ordering $\prec$.
\end{proposition}

Clearly, $(\mL;C)$ has an automorphism such
that $\alpha(x) \prec \alpha(y)$ if and only if $y \prec x$;
we denote this automorphism by $-$.


\begin{theorem}[Leeb~\cite{Lee-vorlesungen-ueber-pascaltheorie}]\label{thm:tramsey}
The structure $(\mL;C,\prec)$ is Ramsey.
In other words,
for all convexly ordered leaf structures $P,H$ and for all $k \geq 2$, there exists a convexly ordered leaf structure $T$
such that $T \rightarrow (H)^P_k$.
\end{theorem}

A self-contained proof of Theorem~\ref{thm:tramsey}
can be found in Bodirsky~\cite{BodirskyRamsey}.

\section{Toolbox}
\label{sect:prelims}
In this section we collect some known results and certain straightforward consequences
of them. The section is divided into three parts where we recapitulate results concerning
endomorphisms of phylogeny languages in Section \ref{sect:classification}, binary injective polymorphisms in Section \ref{sect:injective}, and equality constraint languages in Section \ref{sect:ecsps}.

\subsection{A Preclassification}
\label{sect:classification}
We use a fundamental result which can be seen as
a classification of the endomorphism monoids of
model-complete cores of reducts of $(\mL;C)$. 

\begin{theorem}[Bodirsky, Jonsson, \& Pham~\cite{BodJonsPham}]\label{thm:endos}
Let $\Gamma$ be a reduct of $(\mL;C)$. Then it satisfies at least one of the following:
\begin{enumerate}
\item $\Gamma$ has a constant endomorphism;
\item the model-complete core of $\Gamma$ is isomorphic to a reduct of $(\mL;=)$; 
\item the set of endomorphisms of $\Gamma$ equals the set of endomorphisms of $(\mL;Q)$;
\item the set of endomorphisms of $\Gamma$ equals the set of endomorphisms
of $(\mL;C)$. 
\end{enumerate}
\end{theorem}

Item 2 in this theorem has been stated slightly differently in Theorem 1 
of \cite{BodJonsPham}, namely that $\Gamma$ is homomorphically equivalent to a reduct 
of $(\mL,=)$. Note that this is equivalent to the model-complete 
core of $\Gamma$ being isomorphic
to a reduct of $(\mL,=)$ unless $\Gamma$ has a constant endomorphism. 
The reason is that the core of a reduct $\Gamma$ of $(\mL;=)$ either has 
one element or is itself a reduct of $(\mL,=)$.

If $\Gamma$ has a constant endomorphism,
then $\Csp(\Gamma)$ is trivial. 
If $\Gamma$ is homomorphically equivalent to a reduct of 
$(\mL;=)$, then the complexity of $\Csp(\Gamma)$
can be determined by known results which we present in Section~\ref{sect:ecsps} below.
In items 3 and 4 of Theorem~\ref{thm:endos}, we can deduce 
a statement about primitive positive
definability of the relation $Q$ and $C$ in $\Gamma$.

\begin{lemma}\label{lem:preclass}
Let $\Gamma$ be a phylogeny constraint language
which does not have a constant endomorphism and
which is not homomorphically equivalent to an equality constraint language. Then $\Gamma$ is a model-complete core, 
and $C_d$ or $Q_d$ is primitive positive definable in $\Gamma$. 
\end{lemma}
\begin{proof}
We first show that the relation $C_d$ 
consists of a single orbit of $3$-tuples of $\Aut(\Gamma)$.
Arbitrarily choose $(x_1,x_2,x_3), (y_1,y_2,y_3)\in C_d$. Since the entries of the tuples in $C_d$ are pairwise distinct, we have 
that the map that sends $x_i$ to $y_i$ for all $i \in \{1,2,3\}$, is a partial isomorphism of $\Aut(\mL;C)$. Since $(\mL;C)$ is homogeneous, the partial map can be extended to an automorphism $\alpha$ of $(\mL;C)$. 
This implies that $C_d$ consists of one orbit of $3$-tuples of $\Aut(\mL;C) = \Aut(\Gamma)$. 

If $C_d$ has a primitive positive definition in $\Gamma$,
then so has $C$ by Lemma \ref{lem:basicrelsdefs}, 
and because $(\mL;C)$ is a model-complete core by Lemma~\ref{lem:core}, so is $\Gamma$ by the remark after Corollary \ref{cor:mc-core}. 
If $C_d$
does not have a primitive positive definition in $\Gamma$,
then there is
a polymorphism of $\Gamma$
that violates $C_d$ by Theorem~\ref{thm:inv-pol}. 
Since $C_d$ consists of one 
orbit of $3$-tuples of $\Aut(\Gamma)$, 
there is an endomorphism $e$ of $\Gamma$ that violates $C_d$ by Lemma~\ref{lem:small-arity}. 
This implies that $C$ is violated by $e$, too, since $\langle (\mL;C) \rangle = \langle (\mL;C_d) \rangle$
by Lemma~\ref{lem:basicrelsdefs} and the polymorphisms of $C$ and $C_d$ coincide.
Since $\Gamma$ does 
not have constant endomorphisms and is not homomorphically equivalent to an equality constraint language,
Theorem~\ref{thm:endos} implies that the
 relation $Q$ is preserved by all endomorphisms
of $\Gamma$. Since $(\mL;Q)$ is a model-complete core (Lemma~\ref{lem:core}), it follows that
in this case $\Gamma$ is a model-complete core, too. 
Recall that $Q_d$ is primitive positive definable in $(\mL;Q)$ by Lemma~\ref{lem:basicrelsdefs} 
so $Q_d$ is preserved by all endomorphisms of $\Gamma$.

For arbitrary tuples $(x_1,\dots,x_4),(y_1,\dots,y_4) \in Q_d$, we have that 
the map that sends $x_i$ to $y_i$ for all $i\in\{1,2,3,4\}$ is a partial isomorphism of $(\mL;Q)$. Since  $(\mL;Q)$ is 
homogeneous (see e.g.~Lemma~14 in~\cite{BodJonsPham}), 
this partial map can be extended to an automorphism of
$(\mL;Q)$. 
This implies that $Q_d$ is contained in one orbit of $4$-tuples of $\Aut(\mL;Q) = \Aut(\Gamma)$. 
If $Q_d$ is not preserved by some polymorphism of $\Gamma$, it follows from 
Lemma~\ref{lem:small-arity} that $Q_d$ is not 
preserved by an endomorphism of $\Gamma$ which leads to a contradiction. Therefore, $Q_d$ is preserved by all polymorphisms of $\Gamma$. 
We conclude that
the relation $Q_d$ is primitive positive definable in $\Gamma$ by Theorem~\ref{thm:inv-pol}. 
\end{proof}

The problem $\Csp(\mL;Q_d)$ has been shown to be NP-complete
by Steel~\cite{Steel}. Also recall that $\langle (\mL;C_d) \rangle = \langle (\mL;C) \rangle$ by Lemma~\ref{lem:basicrelsdefs}. Lemma~\ref{lem:preclass} therefore 
shows that in order to classify the computational complexity
of $\Csp(\Gamma)$, we can concentrate on the situation where the relations $C_d$ and $C$ are primitive positive definable in $\Gamma$. 



\subsection{Binary Injective Polymorphisms}
\label{sect:injective}
In this part we present a condition that implies that an $\omega$-categorical structure has a binary injective polymorphism. 
The existence of binary injective polymorphisms plays an important role in later parts of the paper.

The following shows a sufficient condition for the existence of a constant endomorphism.
A finite subset $S$ of the domain of $\Gamma$ is called a $k$-set if it has $k$ elements. The 
\emph{orbit of a $k$-set} $S$ is the set $\{\alpha(S):\alpha\in \Aut(\Gamma)\}$, where $\alpha(S)$ is the image of $S$ under $\alpha$. 

\begin{lemma}[Lemma 18 in Bodirsky \& K\'ara~\cite{tcsps-journal}]\label{lem:twosetorbit}
If $\Gamma$ has only one orbit of $2$-sets and a non-injective polymorphism, 
then $\Gamma$ has a constant endomorphism.
\end{lemma}

\begin{definition}\label{def:transitivity}
The automorphism group $\Aut(\Gamma)$ is called $k$-\emph{transitive} if for any two sequences $a_1,a_2,\dots,a_k$ and $b_1,b_2,\dots,b_k$ of $k$ distinct elements there is $\alpha\in \Aut(\Gamma)$ such that $\alpha(a_i)=b_i$ for any $1\leq i\leq k$.
\end{definition}

By the homogeneity of $(\mL;C)$, 
the structure $(\mL;C)$, and all its reducts, have a
$2$-transitive automorphism group. 
Also note that $k$-transitivity of $\Aut(\Gamma)$ implies that there only exists one orbit of $k$-sets. 

\begin{definition}
The relation $\neq$ is {\em 1-independent with respect to} $\Gamma$ if for all primitive positive $\tau$-formulas $\phi$, if both $\phi \wedge x \neq y$ and $\phi \wedge z \neq w$ are satisfiable over $\Gamma$, then $\phi \wedge x \neq y \wedge z \neq w$ is satisfiable over $\Gamma$, too.
\end{definition}
This terminology is explained in greater detail by Cohen, Jeavons, Jonsson, and Koubarakis~\cite{Disj}. Let
\begin{displaymath}
S_D := \{(a,b,c) \in D^3 \; | \; b \neq c \wedge (a=b \vee a=c)\}\;,
\end{displaymath}
 and 
\begin{displaymath} 
P_D := \{(a,b,c,d) \in D^4 \; | \; a=b \vee c=d\}.
\end{displaymath}
We will use the following known results.

\begin{lemma}[Corollary 2.3 in Bodirsky, Jonsson, \& von Oertzen~\cite{HornOrFull}] \label{definabilitylemma}
Let $D$ be an infinite set. Then every relation with a first-order definition in $(D;=)$ is in $\langle (D;S_D) \rangle$.
\end{lemma}

A function $f\colon D^k\to D$ is called \emph{essentially unary} if there exists an $i \in \{1,\dots,k\}$ and a function $g \colon D \to D$ such that $f(x_1,\dots,x_k) = g(x_i)$ for all $x_1,\dots,x_k \in D$. Otherwise, $f$ is called \emph{essential}.

\begin{lemma}[Lemma 1.3.1 in P\"oschel \& Kalu\v{z}nin~\cite{KaluzninPoeschel}] \label{p4blemma}
Let $D$ be an infinite set and let $f \colon D^k \rightarrow D$ be an operation. If $f$ preserves $P_D$, then $f$ is
essentially unary.
\end{lemma}

\begin{lemma}[Contraposition of Lemma 5.3 in~\cite{HornOrFull}] \label{oneindeplemma1}
Let $\Gamma$ be a structure over an infinite domain $D$. If the binary relations in $\langle \Gamma \rangle$
are $\{D^2,\neq,=,\emptyset\}$ and $S_D \notin \langle \Gamma \rangle$, then $\neq$ is 1-independent of $\Gamma$.
\end{lemma}

\begin{lemma}[Lemma 42 in Bodirsky \& Pinsker~\cite{RandomMinOps}]\label{oneindeplemma2}
Let $\Gamma$ be a countable $\omega$-categorical structure such that $\neq$ is in $\langle \Gamma \rangle$. Then the following
are equivalent.
\begin{enumerate}
\item
$\neq$ is 1-independent of $\Gamma$ and
\item
$\Gamma$ has a binary injective polymorphism.
\end{enumerate}
\end{lemma}


\begin{theorem}\label{thm:inj}
Let $\Gamma$ be an $\omega$-categorical structure over a countably infinite domain with a 2-transitive
automorphism group. Also suppose that $\Gamma$ has an essential polymorphism
and no constant endomorphism. Then $\Gamma$
has a binary injective polymorphism. 
\end{theorem}
\begin{proof}
If $S_D \in \langle \Gamma \rangle$, then $P_D \in \langle \Gamma \rangle$, since $P_D \in \langle (D;S_D)\rangle$ by Lemma~\ref{definabilitylemma}.  
Lemma~\ref{p4blemma} implies that $\Gamma$ is preserved by essentially unary operations only and this contradicts
the assumption 
that $\Gamma$ is preserved by at least one essential polymorphism.

If $R \in \langle \Gamma \rangle$ is binary, 
then $R \in \{D^2,\neq,=,\emptyset\}$, 
since the automorphism group of $\Gamma$ is 2-transitive.
We continue by showing that $\neq$ has a primitive positive definition in $\Gamma$. Assume otherwise; then by Theorem \ref{thm:inv-pol} there must be a
polymorphism of $\Gamma$ which violates $\neq$. Since $\neq$ consists of one orbit of pairs under $\Aut(\Gamma)$, by 
Lemma~\ref{lem:small-arity} there is an endomorphism $e$ of $\Gamma$ which violates $\neq$. This implies 
that $e$ is not injective. Since $\Gamma$ has a $2$-transitive automorphism group, $\Gamma$ has only 
one orbit of $2$-sets. Lemma \ref{lem:twosetorbit} implies
that $\Gamma$ has a constant endomorphism which contradicts our assumptions. 

Now, Lemma~\ref{oneindeplemma1} implies that $\neq$ is 1-independent of $\Gamma$ since $S_D \notin \langle \Gamma \rangle$.
We can now apply Lemma~\ref{oneindeplemma2} and conclude that $\Gamma$ has a binary injective polymorphism.
\end{proof}

\begin{corollary}\label{cor:bin-inj}
Every reduct $\Gamma$ of $(\mL;C)$ such that $N \notin \langle \Gamma \rangle$
and $C \in \langle \Gamma \rangle$ has
a binary injective polymorphism.
\end{corollary}
\begin{proof}
Since all endomorphisms of $\Gamma$ preserve $C$, there is no constant endomorphism, and all endomorphisms 
also preserve $N$. Since $N$ is violated by
some polymorphism of $\Gamma$ by Theorem~\ref{thm:inv-pol},  it follows that $\Gamma$ must have
an essential polymorphism. Reducts of $(\mL;C)$ have a 2-transitive automorphism group, and the statement follows from
Theorem~\ref{thm:inj}. 
\end{proof}
\subsection{Equality Constraint Satisfaction Problems}\label{sect:ecsps}
The CSPs for reducts of $(\mL;=)$
have been called \emph{equality constraint satisfaction problems} \cite{ecsps}, and the statement of Theorem~\ref{thm:main-polym} was already known in this special case. 

\begin{theorem}[Bodirsky \& K\'ara~\cite{ecsps}; see also Bodirsky~\cite{Bodirsky-HDR}]\label{thm:ecsps}
Let $\Gamma$ be a reduct of $(\mL;=)$. Then
$\Csp(\Gamma)$ is in P if 
$\Gamma$ is preserved by a constant operation or an injective binary operation. In both cases, $\Gamma$ has polymorphisms
$e_1,e_2$, and $f$ such that $e_1(f(x_1,x_2))=e_2(f(x_2,x_1))$ for all elements
$x_1,x_2$ of $\Gamma$. Otherwise, all polymorphisms of $\Gamma$
are essentially unary, and 
$\Csp(\Gamma)$ is NP-complete. 
\end{theorem}

In the case that a reduct $\Gamma$ of $(\mL;=)$ 
is preserved by an injective binary 
operation, the relations of $\Gamma$ can be characterised syntactically. A \emph{Horn formula} is a formula in conjunctive normal form where there is at most one positive literal per clause. 

\begin{lemma}[Bodirsky, Chen, \& Pinsker~\cite{BodChenPinsker}]
\label{lem:Horn}
A relation $R$ with a first-order definition over $(\mL;=)$
is preserved by a binary injective polymorphism if and only
if $R$ has a definition over $(\mL;=)$ which is quantifier-free Horn. In this particular case, each clause can contain at most one literal of the type $x=y$. 
\end{lemma}

\section{Violating the Forbidden Triple Relation}
\label{sect:violating}
In this section we assume that $\Gamma$ is a reduct of $(\mL;C)$ 
such that $C \in \langle \Gamma \rangle$ and $N \notin \langle \Gamma \rangle$. 
We will see in the following subsections 
that these assumptions have quite strong consequences on the relations in $\langle \Gamma \rangle$.

We begin in Section \ref{sec:dominance} by introducing the central concept of {\em domination} that will be intensively used in the rest of the section.
We continue in Sections \ref{sect:splits}--\ref{sect:free} by introducing the notions of {\em affine splits}, {\em separation}, and {\em freeness}. These
properties will be the basis for the
characterization of {\em affine Horn formulas} that we present in Section \ref{sect:Horn}.

\subsection{Dominance}
\label{sec:dominance}
In this part we introduce the notion of \emph{domination}
for functions $f \colon \mL^2 \rightarrow \mL$.
 
\begin{definition}
Let $S,T \subseteq \mL$. 
A function $f \colon \mL^2 \to \mL$ is called 
\begin{itemize}
\item \emph{dominated by the first argument on $S \times T$} if for 
all $a \in S^3$ and $b \in T^3$ we have $f(a,b)\in C$
whenever $a \in C$;
\item {\em dominated by the second argument on $S \times T$}
if for all $a \in S^3$ and $b \in T^3$, we have $f(a,b) \in C$
whenever $b \in C$. 
\end{itemize}
When $S=T=\mL$, we simply speak of domination by the first (or by the second) argument. Note that we extend the function f to tuples as described in the end of Section \ref{subsec:pol}.
\end{definition}

In this section, we will show that binary polymorphisms of $(\mL;C)$ that are canonical (according to Definition \ref{def:canonicalop}) with respect to
$(\mL;C,\prec)$ are dominated by one of their 
arguments. Define
\begin{align*}
O_1 := & \; \{(x,y,z)\in \mL^3 : x\prec y\prec z\wedge x|yz\}
& O_2 := & \; \{(x,y,z)\in \mL^3 : x\prec y \prec z \wedge xy|z\} \\
\tilde O_1 := & \; \{(x,y,z) \in \mL^3 : x|yz\} 
& \tilde O_2 := &  \;  \{(x,y,z) \in \mL^3 : xy|z\}
\end{align*}

The main result of this section is Lemma \ref{lem:dominance}. Its proof is based on Lemma \ref{lem:orderedorbit}.
\begin{lemma}\label{lem:orderedorbit}
Let $f \in \Pol^{(2)}(\mL;C)$ be canonical with respect to $(\mL;C,\prec)$, and let $o_1 \in O_1$
and $o_2 \in O_2$. 
If $f(o_1,o_2) \in \tilde O_1$,
then $f(o_i,o_j) \in \tilde O_i$ 
for all $i,j\in \{1,2\}$. 
Symmetrically, if $f(o_1,o_2)\in \tilde O_2$, then 
$f(o_i,o_j)\in \tilde O_j$ for all $i,j\in \{1,2\}$.
\end{lemma}
\begin{proof}
We only present the proof of the first statement, since the second statement can be shown symmetrically. Since $f$ preserves $C$, we have $f(o_1,o_1) \in \tilde O_1$ and $f(o_2,o_2) \in \tilde O_2$. It remains to be shown that $f(o_2,o_1) \in \tilde O_2$. Let $o_1 = (o_{1,1},o_{1,2},o_{1,3})$ and $o_2 = (o_{2,1},o_{2,2},o_{2,3})$. 
Choose $u_1,u_2 \in \mL$ such that $o_{i,1} \prec o_{i,2} \prec o_{i,3} \prec u_i$ for $i \in \{1,2\}$, $o_{2,1}o_{2,2}|o_{2,3}u_2$, and $o_{1,1}o_{1,2}o_{1,3}|u_1$. Since $(o_{2,1},o_{2,3},u_2) \in O_1$, it follows from the canonicity of $f$ that $f(o_{2,1},o_{1,1})|f(o_{2,3},o_{1,3})f(u_2,u_1)$. Similarly, since $(o_{2,2},o_{2,3},u_2) \in O_1$, we have  that $f(o_{2,2},o_{1,2})|f(o_{2,3},o_{1,3})f(u_2,u_1)$.
Observe that $(o_{2,1},o_{2,2},u_2) \in O_2$ and that $(o_{1,1},o_{1,2},u_1) \in O_2$, and hence $f(o_{2,1},o_{1,1})f(o_{2,2},o_{1,2})|f(u_2,u_1)$. We conclude that
\begin{displaymath}
f(o_{2,1},o_{1,1})f(o_{2,2},o_{1,2})|f(o_{2,3},o_{1,3})f(u_2,u_1) \, ,
\end{displaymath}
and therefore $f(o_2,o_1) \in \tilde O_2$.
\end{proof}

\begin{lemma}\label{lem:dominance}
Let $f \in \Pol^{(2)}(\mL;C)$ be canonical with respect to $(\mL;C,\prec)$. Then $f$ is dominated by the first or by the second argument.
\end{lemma}

\begin{proof}
By canonicity, either $f(o_1,o_2) \in \tilde O_1$ for all $o_1 \in O_1$ and $o_2 \in O_2$, or $f(o_1,o_2) \in \tilde O_2$ for all $o_1 \in O_1$ and $o_2 \in O_2$. 
We assume that the first case applies,
since the other case can be treated analogously. 
By Lemma~\ref{lem:orderedorbit}, we then have that 
for all $o_1 \in O_1$ and $o_2 \in O_2$ and
for all $i,j \in \{1,2\}$ we have 
$f(o_i,o_j) \in \tilde O_i$.

We claim that then $f$ is dominated by the first argument. 
Arbitrarily choose $a \in C$ and $b \in \mL^3$.

{\em Case 1.} The tuple $a = (a_1,a_2,a_3)$ has pairwise distinct entries. We assume without loss of generality that $a_2 \prec a_3$; 
otherwise we can rename $a_2$ and $a_3$ accordingly. Consider the case that $a_1 \prec a_2 \prec a_3$; the case that $a_2 \prec a_3 \prec a_1$ 
can be shown analogously.  

Let $u,v,s,t$ be the elements of $\mL$ 
such that $u \prec a_1 \prec a_2 \prec a_3 \prec s$, $ua_1|a_2 a_3 s$, $a_2a_3|s$ $v \prec b_i \prec t$ for $i \in \{1,2,3\}$, 
$b_1b_2b_3v|t$, and $v|b_1b_2b_3$.
It follows from our previously made assumptions and the canonicity of $f$ that
\begin{itemize}
  \item $f(u,v)f(a_1,b_1)|f(s,t)$ since $(u,a_1,s)\in O_2$ and $(v,b_1,t) \in O_2$, 
  \item $f(u,v)| f(a_2,b_2)f(s,t)$ since $(u,a_2,s)\in O_1$ and $(v,b_2,t) \in O_2$, and
  \item $f(u,v)| f(a_3,b_3)f(s,t)$ since $(u,a_3,s) \in O_1$ and $(v,b_3,t) \in O_2$.
\end{itemize}
This implies that $f(u,v)f(a_1,b_1)|f(a_2,b_2)f(a_3,b_3)f(s,t)$, so 
\begin{displaymath}
f(a_1,b_1)|f(a_2,b_2)f(a_3,b_3)\;.
\end{displaymath}
{\em Case 2.}
$a_2 = a_3$.  Arbitrarily choose $s,t \in \mL$ such that $a_1|a_2 s$ and $a_2 \neq s$. Since $a_1,a_2,s$ are pairwise distinct, we have $f(a_1,b_1)|f(a_2,b_2)f(s,t)$ by Case 1. Similarly, $f(a_1,b_1)|f(a_3,b_3) f(s,t)$. We conclude that $f(a_1,b_1)|f(a_2,b_2)f(a_3,b_3)$.
\end{proof}

\subsection{Affine Splits}
\label{sect:splits}
We begin by introducing the notion of {\em split vectors}. 
Let $t \in \mL^k$. 
Then $(s_1,\dots,s_k) \in \{0,1\}^k$ is a 
\emph{split vector} for $t$ if $t_pt_q|t_r$ for all 
$p,q,r \in \{1,\dots,k\}$ such that $s_p = s_q \neq s_r$. 
Note that when $t$ has a split vector $s =(s_1,\dots,s_n)$, then $(1-s_1,\dots,1-s_n)$ is also a split vector for $s$. 

\begin{definition}\label{def:split-relation}
The \emph{split relation $S(R)$ of $R \subseteq \mL^k$} is the $k$-ary Boolean relation that contains 
all split vectors for all tuples $t \in R$.   
\end{definition}
\begin{example}\label{ex:balancedrelation}
Let $R\subseteq \mL^4$ be given by: $R:=\{(x,y,z,t)\in \mL^4:xy|zt\vee xz|yt\vee xt|yz\}$. Then
\begin{align*}
S(R)=\{&(0,0,0,0),(1,1,1,1),(0,0,1,1),(0,1,0,1),\\
&(0,1,1,0),(1,1,0,0),(1,0,1,0),(1,0,0,1)\}.
\end{align*}
\end{example}

We will show (in Lemma \ref{lem:affine}) that when $\Gamma$ is such that $N \notin \langle \Gamma \rangle$ and $C \in \langle \Gamma \rangle$, then all split relations of relations in $\langle\Gamma\rangle$ 
are \emph{affine}, that is, they can be defined by a conjunction
of linear equations over $\{0,1\}$. It is known that a Boolean
relation is affine if and only if it is preserved by $(x,y,z) \mapsto x+y+z \mod 2$. It therefore suffices to show that the split relations
are preserved by the Boolean operation $\oplus$ defined as
$(x,y) \mapsto x+y \mod 2$ since $x+y+z \mod 2= (x\oplus y)\oplus z $. To do so, we show a lemma which will be useful also in later parts of the paper.

\begin{lemma}\label{lem:x}
Let $\Gamma$ be a reduct of $(\mL;C)$ such that $C \in \langle \Gamma \rangle$ and $N \notin \langle \Gamma \rangle$. 
Then there are $g \in \Pol^{(2)}(\Gamma)$ and $u,v\in \mL$ such that 
\begin{displaymath}
g(u,u) g(v,v) | g(u,v) g(v,u) \; .
\end{displaymath}
\end{lemma}

The following series of lemmas is needed in 
the proof of Lemma~\ref{lem:x}. In all these lemmas,
$\Gamma$ denotes 
a reduct of $(\mL;C)$ such that $C \in \langle \Gamma \rangle$ and $N \notin \langle \Gamma \rangle$,
and $R$ denotes a 4-ary relation with a primitive positive
definition in $\Gamma$. 

\begin{lemma}\label{lem:deducingrule}
Suppose that
$R$ contains two tuples $a,b$ with pairwise distinct entries. 
Then for all $1 \leq i<j<k \leq 4$ where $(a_i,a_j,a_k)$ and $(b_i,b_j,b_k)$ are in different orbits under
$\Aut(\mL;C)$,
the relation $R$ also contains a tuple $c$ such that 
\begin{itemize}
  \item[1.]  the tuples $(a_i,a_j,a_k)$, $(b_i,b_j,b_k)$, and $(c_i,c_j,c_k)$ are in pairwise distinct orbits, and
  \item[2.] for all $p,q,r \in \{1,\dots,4\}$, if $(a_p,a_q,a_r)$ and $(b_p,b_q,b_r)$ are in the same 
orbit $O$ under $\Aut(\mL;C)$, then $(c_p,c_q,c_r)$ is in $O$, too.
\end{itemize}  
\end{lemma}
\begin{proof}
By Lemma~\ref{lem:basicrelsdefs}, $N_d \notin \langle \Gamma \rangle$. As $N_d$ consists of two orbits, 
Lemma~\ref{lem:small-arity} and Theorem~\ref{thm:inv-pol} imply the existence of an $f \in \Pol^{(2)}$
that violates $N_d$. Since $a$ and $b$ have pairwise distinct entries, we can choose $f$ such that $(a_i,a_j,a_k)$, $(b_i,b_j,b_k)$, and $(f(a_i,b_i),f(a_j,b_j),f(a_k,b_k))$ are in pairwise distinct orbits. Let $c:=f(a,b)$. The first condition follows immediately and the second condition follows from the fact that $f$ preserves $C$.
\end{proof}

\begin{lemma}\label{lem:firstdeduction}
Suppose that $R$ contains tuples $a,b,c$ with pairwise distinct entries such that $a_1a_3|a_4a_2$, $b_1 b_4|b_3b_2$, 
$c_1c_2c_4|c_3$, and $c_1c_2|c_4$. 
Then $R$ also contains a tuple $z$ with 
$z_1 z_2|z_3 z_4$.
\end{lemma}

\begin{proof}
The right-hand side of Figure~\ref{fig:deduce} shows six tuples in $\mL^4$ with their corresponding binary tree. 
Note that the tuples $a,b,c$ in Figure~\ref{fig:deduce} satisfy the preconditions of the lemma. We will show that starting from $a,b,c$ we can obtain the desired tuple $z$ by repeated
applications of 
Lemma~\ref{lem:deducingrule}. The steps are shown in the digraph on the left-hand side of 
Figure~\ref{fig:deduce}. Each of the tuples $d,e,z$ is 
obtained by applying Lemma~\ref{lem:deducingrule} to the tuples of the two incoming edges in the digraph. Specifically,

\begin{itemize}
\item we obtain $d$ from $b$ and $c$ with $i=2$, $j=3$, $k=4$,
\item we obtain $e$ from $a$ and $d$ with $i=2$, $j=3$, $k=4$, and
\item we obtain $z$ from $c$ and $e$ with $i=2$, $j=3$, $k=4$.
\end{itemize}
Since $(b_2,b_3,b_4)$, $(c_2,c_3,c_4)$ 
and $(d_2,d_3,d_4)$ must be in pairwise distinct orbits, we have
$d_4d_3|d_2$. Since the tuples $(b_1,b_3,b_4)$ and $(c_1,c_3,c_4)$ are in the same orbit under $\Aut(\mL;C)$, the tuple $(d_1,d_3,d_4)$ must be in the same orbit as $(b_1,b_3,b_4)$ and $(c_1,c_3,c_4)$ so $d_1d_4|d_3$. See the 
binary tree drawn for $d$ in Figure~\ref{fig:deduce}. One can analogously justify the binary trees for $e$ and $z$. This concludes
the proof since $z_1z_2|z_3z_4$.
\end{proof}
\pagebreak
\begin{figure}[h]
\centering
\includegraphics[scale=0.26]{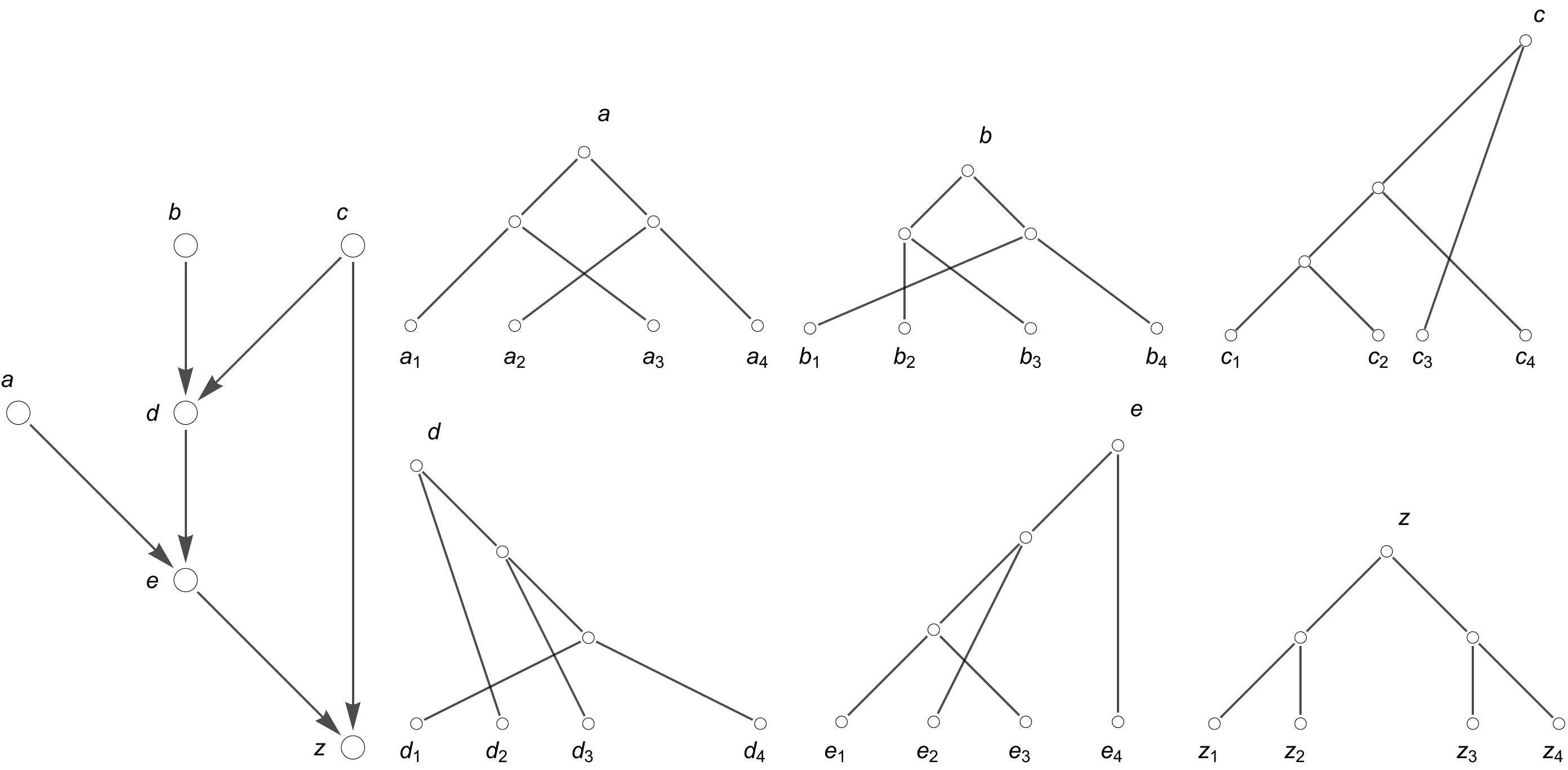}
\caption{Diagram for the proof of Lemma~\ref{lem:firstdeduction}.}
\label{fig:deduce}
\end{figure}

\begin{lemma}\label{lem:seconddeduction}
Suppose that $R$ contains tuples $a$ and $b$ with pairwise distinct entries
such that 
$a_1a_4|a_3a_2$, $b_1b_2b_4|b_3$ and $b_1|b_2b_4$.
Then there exists an $z \in R$ such that $z_1 z_2 z_4|z_3$ and $z_1 z_2|z_4$.
\end{lemma}
\begin{proof}
The proof is similar to the one for Lemma~\ref{lem:firstdeduction} in that we repeatedly apply Lemma~\ref{lem:deducingrule}.
\begin{itemize}
  \item The tuple $c$ is obtained from $a$ and $b$ for $i=2$, $j=3$, and $k=4$.
  \item The tuple $d$ is obtained from $a$ and $c$ for $i=2$, $j=3$, and $k=4$.
  \item Finally, a tuple $z$ with the desired properties is obtained from $b$ and $d$ with
  $i=1$, $j=2$, and $k=4$.
\end{itemize}  
See the diagram in Figure~\ref{fig:deduce2}.
\end{proof}


\begin{lemma}\label{lem:pre-x}
Suppose that $R$ contains two tuples $a,b$ with pairwise distinct entries
such that $a_1a_3|a_2a_4$ 
and $b_1b_4|b_2b_3$. 
Then $R$ also contains a tuple $z$ such that 
$z_1z_2|z_3z_4$. 
\end{lemma}
\begin{proof}
Let $c \in R$ be the tuple obtained by applying Lemma~\ref{lem:deducingrule} to $a$ and $b$ with $i=1,j=2,k=3$. 
We distinguish the following cases.  
\begin{enumerate}
  \item $c_1c_2|c_3c_4$. In this case we are done with $z := c$. 
 \item $c_1c_2c_4|c_3$ and $c_1c_2|c_4$. 
By applying Lemma~\ref{lem:firstdeduction} to the tuples $a$, $b$, and $c$, we obtain a tuple $z \in R$ such that $z_1z_2|z_3z_4$, and we have shown the statement.
\item $c_1 c_2 c_3|c_4$ and $c_1 c_2|c_3$. This case is similar to Case 2 since we can exchange the roles of $x_1$ and $x_2$, and $x_3$ and $x_4$, where $x\in \{a,b,c\}$.

\item $c_1c_2c_4|c_3$ and $c_2c_4|c_1$. By applying Lemma~\ref{lem:seconddeduction} to the tuples $a$ and $c$, 
we obtain a tuple $z \in R$ such that $z_1z_2z_4|z_3$ and $z_1z_2|z_4$. 
From this point on, we continue as in case 2 working with $z$ instead of $c$.
\item $c_1c_2c_4|c_3$ and $c_1c_4|c_2$. This case is similar to case 4 by exchanging the roles of $a$ and $b$. 
\end{enumerate}
These cases are exhaustive. 
\end{proof}
\pagebreak
\begin{figure}[h]
\begin{center}
 \includegraphics[scale=0.25]{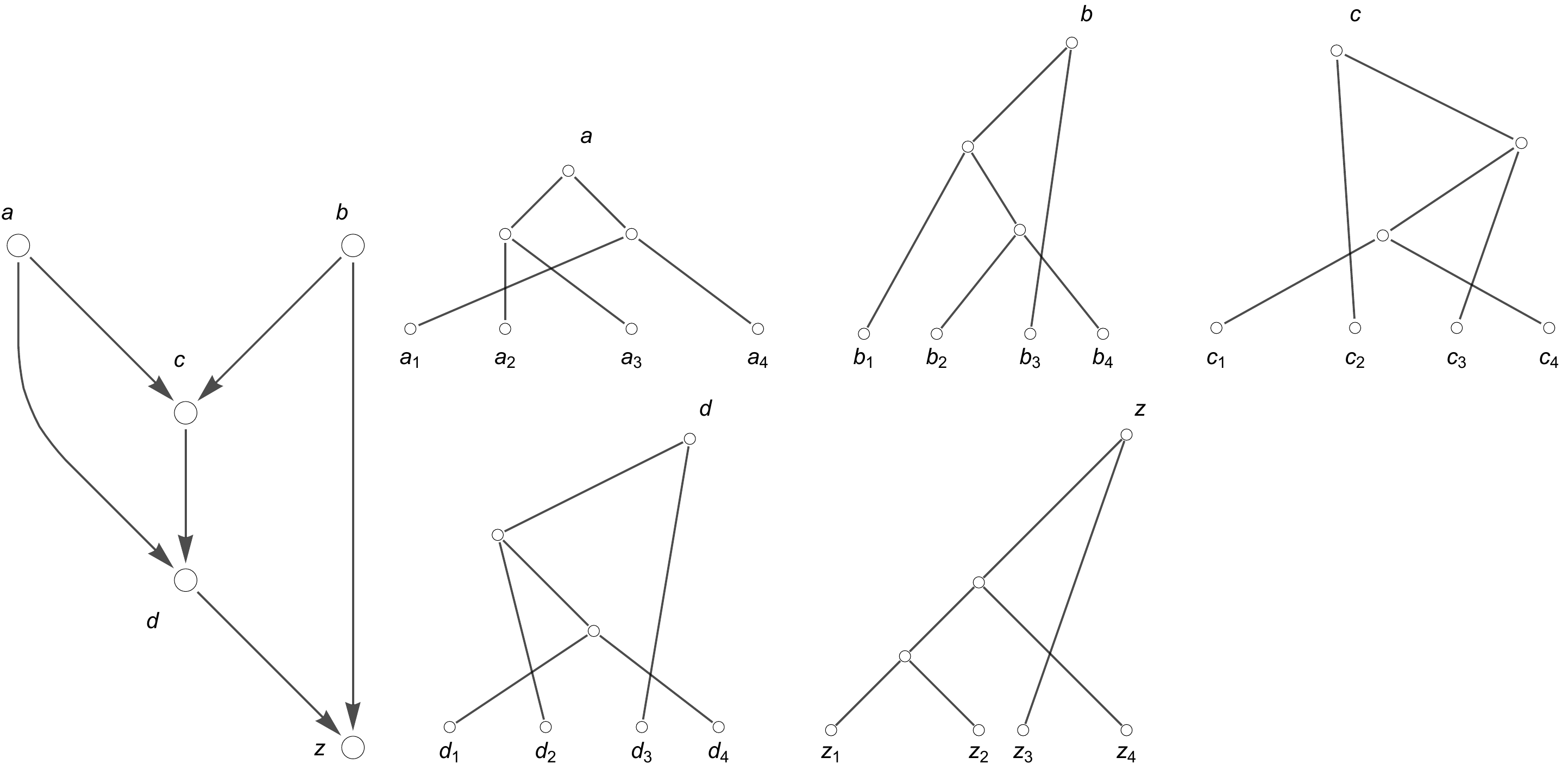}
\end{center}
\caption{Diagram for the proof of Lemma~\ref{lem:seconddeduction}.}
\label{fig:deduce2}
\end{figure}

\begin{proof} (Lemma~\ref{lem:x})
Let $a,b \in \mL^4$ be two tuples with pairwise distinct entries such that $a_1a_3|a_2a_4$ 
and $b_1b_4|b_2b_3$, and define $R := \{f(a,b) : f \in \Pol^{(2)}(\Gamma)\}$. Clearly, $R$ is preserved by all polymorphisms of
$\Gamma$, and hence $R$ is primitive positive
definable in $\Gamma$. 
By Lemma~\ref{lem:pre-x}, there exists $z \in R$ such that $z_1z_2|z_3z_4$. Let $f \in \Pol^{(2)}$ be such 
that $f(a,b)=z$. 

We claim that $f(a_1,b_1)f(a_2,b_2)|f(a_1,b_2)f(a_2,b_1)$.
Since $a_1a_3|a_4$, $b_2b_3|b_4$, and $f$ preserves $C$, 
we see that $f(a_1,b_2)f(a_3,b_3)|f(a_4,b_4)$. 
Since $a_3|a_4a_2$, $b_3|b_4b_1$, and $f$ 
preserves $C$, we see that $f(a_3,b_3)|f(a_4,b_4) f(a_2,b_1)$. Together with 
\begin{displaymath}
f(a_1,b_1)f(a_2,b_2)|f(a_3,b_3)f(a_4,b_4),
\end{displaymath}
we obtain 
\begin{displaymath}
f(a_1,b_1)f(a_2,b_2)|f(a_1,b_2)f(a_2,b_1)f(a_3,b_3)f(a_4,b_4),
\end{displaymath}
which implies the claim.


Arbitrarily choose distinct $u,v \in \mL$. Since $(\mL;C)$ is $2$-transitive, there are $\alpha,\beta\in \Aut(\mL;C)$ such that $\alpha((u,v))=(a_1,a_2)$ and 
$\beta((u,v))=(b_1,b_2)$. Then the function $g \colon \mL^2\to \mL$ 
given by $g(x,y):=f(\alpha(x),\beta(y))$
is a polymorphism of $\Gamma$,
and
$g(u,u)g(v,v)|g(u,v)g(v,u)$.
\end{proof}

With the aid of Lemma \ref{lem:x}, we can now verify that every relation in $\langle\Gamma\rangle$ has an affine split relation.
\begin{lemma}\label{lem:affine}
Let $\Gamma$ be a reduct of $(\mL;C)$ such that $C \in \langle \Gamma \rangle$ and $N \notin \langle \Gamma \rangle$. 
Then every relation in $\langle \Gamma \rangle$ has an affine split relation. 
\end{lemma}
\begin{proof}
Let $R$ be a $k$-ary relation in $\langle \Gamma \rangle$. Arbitrarily choose $s,s' \in S(R)$. 
We show that $s \oplus s' \in S(R)$.
This is clear when $s = (0,\dots,0)$, $s = (1,\dots,1)$, $s' = (0,\dots,0)$, or $s' = (1,\dots,1)$, so suppose that this
is not the case. 
Then there is a $t \in R$ with split vector $s$
and a $t' \in R$ with split vector $s'$.

Let $u,v \in \mL$ and $g \in \Pol^{(2)}(\Gamma)$ be such that $g(u,u)g(v,v)|g(u,v)g(v,u)$. Such a polymorphism $g$ exists due to Lemma \ref{lem:x}. 
By the homogeneity of $(\mL;C)$ there is $\alpha\in \Aut(\mL;C)$ such that
\begin{align*}
\alpha(\{t_i:s_i=0\})&\subset \{x\in \mL:xu|v\},~\text{and}\\
\alpha(\{t_i:s_i=1\})&\subset \{x\in \mL:xv|u\}.
\end{align*}
Informally speaking, $\alpha$ is chosen so that it maps all leaves on the left side of $t$ to the elements that are close to $u$, and maps all leaves on the right side of $t$ to the elements that are close to $v$. Similarly, there is $\beta\in \Aut(\mL;C)$ such that
\begin{align*}
\beta(\{t'_i:s'_i=0\})&\subset \{x\in \mL:xu|v\},~\text{and}\\
\beta(\{t'_i:s'_i=1\})&\subset \{x\in \mL:xv|u\}.
\end{align*}

Since $g$ preserves $C$, we see that 
$g(u,v)g(\alpha(t_i),\beta(t'_i))|g(v,u)$
for all $i \in \{1,\dots,k\}$ such that $s_i \neq s'_i$ 
and $f(u,u)f(\alpha(t_i),\beta(t'_i))|f(v,v)$  
for all $i \in \{1,\dots,k\}$ such that $s_i=s'_i$. Let $U:=\{g(u,u),g(v,v)\} \cup \{g(\alpha(t_i),\alpha(t'_i)) : s_i=s'_i\}$ and $V:=\{g(u,v),g(v,u)\} \cup \{g(\alpha(t_i),\beta(t'_i)) : s_i\neq s'_i\}$. By the choice of $g$ we know that 
$g(u,u)g(v,v)|g(u,v)g(v,u)$ so $U|V$. Consider the tuple $t'' := g(\alpha(t),\beta(t')) \in R$. It follows that $\{t''_i : s_i=s'_i\} | \{t''_i : s_i \neq s'_i\}$, and $s \oplus s'$ is a split vector of $t''$. 
\end{proof}

\subsection{Separation}
\label{sect:separated}

We introduce the notion of {\em separated} relations.

\begin{definition}\label{def:separatedness}
A relation $R \subseteq \mL^k$ is called
\emph{separated}
if for all $t,t' \in R$ such that $t_i \neq t_j$ for some $i,j \in \{1,\dots,k\}$
there exists a $t'' \in R$
such that 
\begin{itemize}
\item[S1.] for all $i,j \in \{1,\dots,k\}$ we have $t''_i \neq t''_j$
if and only if $t_i \neq t_j$ or $t'_i \neq t'_j$,
\item[S2.] the tuples $t$ and $t''$ have a common split vector $s$, and
\item[S3.] for all $i,j,l \in \{1,\dots,k\}$, if $s_i=s_j=s_l$ and $t'_it'_j|t'_l$ then $t''_it''_j|t''_l$.
\end{itemize}
\end{definition}
\begin{example}\label{ex:separatedness}
The relation $\{(x,y,z,t)\in \mL^4: (xyz|t\wedge x|yz)\vee xy|zt\}$ is not separated since it does not contain a tuple $(a,b,c,d)$ such that $abc|d\wedge ab|c$, therefore it does not satisfy S3. However, we can verify easily that the relation $\{(x,y,z,t)\in \mL^4:(xyz|t\wedge x|yz)\vee (xyz|t\wedge xy|z)\vee xy|zt\}$ is separated. 
\end{example}

Also in this section, $\Gamma$ always denotes
a reduct of $(\mL;C)$ such that $N \notin \langle \Gamma \rangle$ and $C \in \langle \Gamma \rangle$. 
We will prove that 
every relation in $\langle \Gamma \rangle$ is separated. We start with a generalisation of Lemma~\ref{lem:x}.

\begin{lemma}
\label{lem:injectivex}
There are $u,v \in \mL$ and an \emph{injective} $f \in \Pol^{(2)}(\Gamma)$ with 
\begin{displaymath}
f(u,u)f(v,v)|f(u,v)f(v,u).
\end{displaymath}
\end{lemma}
\begin{proof}
We already know from Corollary~\ref{cor:bin-inj} that $\Gamma$
has a binary injective polymorphism $p$. 
Theorem~\ref{thm:canpol} implies that
$\Aut(\mL;C) \cup \{p\}$  
generates a binary polymorphism $h$ of $\Gamma$ 
which is injective and canonical with respect to $(\mL;C,\prec)$. By Lemma \ref{lem:dominance}, the function $h$ is dominated by the first or by the second argument.

By Lemma~\ref{lem:x}, there are $u,v \in \mL$ and $g\in \Pol^{(2)}(\Gamma)$ such that
\begin{displaymath}
g(u,u)g(v,v)|g(u,v)g(v,u).
\end{displaymath}
If $h$ is dominated by the first argument, then let $f \colon \mL^2\to \mL$ be given by $f(x,y):=h(h(g(x,y),x),y)$. Since $h$ is 
injective, it follows that $f$ is injective, too. Since $h$ is dominated by the first argument and 
$g(x,x)g(y,y)|g(x,y)g(y,x)$, it follows that $f(x,x)f(y,y)|f(x,y)f(y,x)$ and we are done. The case when $h$ is 
dominated by the second argument can be treated similarly by using the function
\begin{displaymath}
f(x,y):=h(x,h(y,g(x,y))).
\end{displaymath}
\end{proof}
\begin{lemma}\label{lem:separation}
All relations in $\langle \Gamma \rangle$
are separated.
\end{lemma}
\begin{proof}
Let $R$ be a relation of arity $k$ in $\langle \Gamma \rangle$. By Lemma~\ref{lem:injectivex} there exist $u,v \in \mL$ and an injective $f \in \Pol^{(2)}(\Gamma)$ such that $f(u,u)f(v,v)|f(u,v)f(v,u)$. By Theorem \ref{thm:canpol}, the set $\{f\} \cup\Aut(\mL;C)$ 
generates a binary injective function $g$ which is canonical with respect to $(\mL;C,\prec,u,v)$ 
and identical with $f$ on $\{u,v\}$. This implies that $g(u,u)g(v,v)|g(u,v)g(v,u)$. 


Arbitrarily choose $u',v' \in \mL$ such that $u'u|v$, $u|v'v$, $u\neq u'$, and $v\neq v'$. 
Let $A_0:=\{x\in \mL:xu'|u \}$ and 
$A_1:=\{x\in \mL:xv'|v\}$. 
Since $g$ preserves $C$, we have 
\begin{align*}
\{g(u,u),g(v,v)\}\cup g((A_0^2) \cup (A_1^2)) \; | \; 
& \{g(u,v),g(v,u)\}\cup g((A_0\times A_1) \cup (A_1\times A_0)) \\
\{g(u,u)\}\cup g(A_0^2) \; | \;
& \{g(v,v)\}\cup g(A_1^2) \\ 
\{g(u,v)\}\cup g(A_0\times A_1) \; | \;
& \{g(v,u)\}\cup g(A_1\times A_0) \, .
\end{align*} 

Observe that the substructures of $(\mL;C,\prec)$ induced by $A_0$ and $A_1$ are isomorphic to $(\mL;C,\prec)$. 
For arbitrary $i\in \{0,1\}$, two tuples $x,y \in A_i^3$ 
are in the same orbit under $\Aut(\mL;C,\prec)$ if and only if 
they are in the same orbit under $\Aut(\mL;C,\prec,u,v)$. 
This implies that for any $i\in \{0,1\}$ and $j\in \{0,1\}$ 
the function $g$ is canonical on $A_i \times A_j$
with respect to $(\mL;C,\prec)$. We can therefore 
apply Lemma \ref{lem:dominance} to the restriction
of $g$ to $A_i \times A_j$ and obtain that 
$g$ is dominated by the first argument or by the second argument on $A_i \times A_j$.

Let $t,t' \in R$ be such that $t_i \neq t_j$ for some $i,j \in \{1,\dots,k\}$, and let $s$ be a split vector of $t$. 
We analyse a number of cases.

\begin{itemize}
  \item There exists an $i\in \{0,1\}$ such that $g$ is dominated by the first argument on 
$A_i \times A_0$ and on $A_i\times A_1$.
By the homogeneity of $(\mL;C)$, 
there are $\alpha,\beta \in \Aut(\mL;C)$ 
such that $\alpha(\{t'_1,t'_2,\dots,t'_k\}) \subseteq A_i$, $\beta(\{t_j:s_j=0\}) \subseteq A_0$, 
and $\beta(\{t_j:s_j=1\}) \subseteq A_1$. Let $t'':=g(\alpha(t'),\beta(t))$. Let $X:=\{g(u,u),g(v,v)\}\cup g(\{ (t_j,t'_j):s_j=i\})$ and $Y:=\{g(u,v),g(v,u)\}
\cup g(\{ (t_j,t'_j):s_j=1-i\})$. Recall that $X|Y$. Then $t''$ has split vector $s$, and condition S2 from Definition \ref{def:separatedness} holds. 
 Since $g$ is injective, 
property S1 follows directly. 
Since $g$ is dominated 
by the first argument on $A_i\times A_0$ and on $A_i\times A_1$, it is straightforward to verify that $t''$ 
satisfies S3.
\item There exists an index $i\in \{0,1\}$ such that $g$ is dominated by the second argument on $A_0\times A_i$ and on $A_1\times A_i$. This case is analogous to the previous case since we can
work with the polymorphism $g'(x,y):=g(y,x)$ instead of $g$.
\item For arbitrary $i,j\in \{0,1\}$, the operation $g$ is dominated on 
$A_i\times A_j$ by the first argument if $i=j$, 
and by the second argument if $i\neq j$. 

By the homogeneity of $(\mL;C)$, we can choose 
$\alpha,\beta\in \Aut(\mL;C)$ such that $\alpha(\{t'_1,t'_2,\dots,t'_k\}) \subseteq A_0$, 
$\beta(\{t_l:s_l=0\}) \subseteq A_0$, and $\beta(\{t_l:s_l=1\})\subseteq A_1$. 
Let $h:=g(\alpha(t'),\beta(t))$. The following facts are 
straightforward to verify.
\begin{itemize}
  \item For all $l,m \in \{1,\dots,k\}$, if $t_l \neq t_m$ or $t'_l \neq t'_m$ then $h_l \neq h_m$.
  \item The vector $s$ is a split vector of $h$.
\end{itemize}
By the assumptions concerning dominance properties of $g$, we also have the following.
\begin{itemize}
  \item for all $l,m,n\in \{1,2,\dots,k\}$ if $s_l=s_m=s_n=0$ and $t'_l|t'_mt'_n$ then $h_l|h_mh_n$, since $g$ is dominated by the first argument on $A_0\times A_0$. 
  \item for all $l,m,n\in \{1,2,\dots,k\}$ if $s_l=s_m=s_n=1$ and $t_l|t_m t_n$ then $h_l|h_mh_n$, since $g$ is dominated by the second argument on $A_0\times A_1$. 
\end{itemize}  
Choose $\alpha',\beta' \in \Aut(\mL;C)$ such that $\beta'(\{h_l:s_l=0\})\subseteq A_0$, 
$\beta'(\{h_l:s_l=1\})\subseteq A_1$ and $\alpha'(\{t'_1,t'_2,\dots,t'_k\})\subseteq A_1$. 
Let $t'':=g(\alpha'(t'),\beta'(h))$. We have the following.
\begin{itemize}
  \item $s$ is a split vector of $t''$. 
  \item For all $l,m \in \{1,\dots,k\}$, if $t_l \neq t_m$ or $t'_l\neq t'_m$ then $t''_l\neq t''_m$. This follows the fact that if $t_l \neq t_m$ or $t'_l \neq t'_m$ then $h_l \neq h_m$ which implies that $t''_l \neq t''_m$. 
\item For all $l,m,n\in \{1,\dots,k\}$ if $s_l=s_m=s_n=0$ and $t'_l|t'_mt'_n$ then $t''_l|t''_mt''_n$. 
Note that if $s_l=s_m=s_n=0$ and $t'_l|t'_mt'_n$, then $h_l|h_mh_n$. Hence, $t''_l|t''_mt''_n$ since $g$ is dominated by the second argument on $A_1\times A_0$.
\item For any $l,m,n\in \{1,\dots,k\}$ if $s_l=s_m=s_n=1$ and $t'_l|t'_mt'_n$ then $t''_l|t''_mt''_n$. This follows directly from the fact that $g$ is dominated by the first argument on $A_1\times A_1$.
\end{itemize}  
It follows from the above conditions that $t''$ satisfies S1--S3.
\item For arbitrary $i,j\in \{0,1\}$, the operation $g$ is dominated on $A_i\times A_j$ by the second argument  if $i=j$, and by the first argument if $i\neq j$.

This case can be treated similarly to the previous case by considering the polymorphism $g'(x,y) := g(y,x)$
instead of $g$.
\end{itemize}  
These cases are exhaustive, and this concludes the proof. 
\end{proof}

\subsection{Freeness}
\label{sect:free}
\def\csplit{cone split}
We introduce the notion of \emph{free} relations.
\begin{definition}\label{def:freeness}
A relation $R\subseteq \mL^k$ is called \emph{free} if for all $t,t' \in R$ that have a common split vector $s$ and that lie in the same orbit under $\Aut(\mL;=)$, there is a tuple $t'' \in R$ such that 
\begin{itemize}
\item $s$ is a split vector of $t''$,
\item for all $i,j,l \in \{1,\dots,k\}$ such that $s_i=s_j=s_l=0$, we have $t''_i|t''_j t''_l$  if and only if  $t_i|t_j t_l$, and 
\item for all $i,j,l \in \{1,\dots,k\}$ such that $s_i=s_j=s_l=1$, we have $t''_i|t''_j t''_l$ if and only if $t'_i|t'_j t'_l$.
\end{itemize}
\end{definition}


Also in this section, we assume that $C \in \langle \Gamma \rangle$ and that $N \notin \langle \Gamma \rangle$. Under these assumptions, we will prove that then every relation in $\langle \Gamma \rangle$ is free.
To do so,
we introduce the notion of \emph{\csplit s}. 

\begin{definition}
Let $t \in \mL^k$.
A sequence $I_0,I_1,\dots,I_p$ of subsets of $\{1,\ldots,k\}$ for $p \in \mathbb \{1,\dots,k\}$ is called a \emph{\csplit} of $t$ 
if $\{I_0,I_1,\dots,I_p\}$ is a partition of $\{1,\dots,k\}$ and for every $j \in \{0,\dots,p-1\}$ 
we have 
$\{t_i:i\in \bigcup_{l=0}^j I_l\} \; \big | \; \{t_i:i\in I_{j+1}\}$.
\end{definition}
The definition of cone split can be understood as assigning a level to each leaf of a tree $t$ such that any two leaves of low levels are closer to each other than a leaf of higher level, and two leaves of the same level are closer to each other than any leaf of different level.  

\begin{observations}\label{obs:csplit}\text{}
\begin{enumerate}
  \item Cone splits may not be unique.
  \item If $I_0,I_1,\dots,I_p$ is a {\csplit} of $t$, then $I_1,I_0,I_2,\dots,I_p$ is also a {\csplit} of $t$.
  \item Let $I_0,I_1,\dots,I_p$ and $J_0,J_1,\dots,J_q$ be cone splits of $t \in \mL^k$. 
If $I_0=J_0$ then $p=q$ and $I_i=J_i$ for any $0\leq i\leq p$. That is, cone splits are uniquely determined by their first set. 
  \item For any two $x,y \in \mL^k$ having the same {\csplit} $I_0,I_1,\dots,I_p$, we have that $I_0,I_1,\dots,I_p$  
is a {\csplit} of $f(x,y)$, where $f$ is an arbitrary binary polymorphism of $\Gamma$. This observation follows from the fact that $f$ must preserve $C$.
\end{enumerate}  
\end{observations}

By Lemma \ref{lem:injectivex}, there are $u,v \in \mL$ and a binary injective 
polymorphism $g$ of 
$\Gamma$ such that $g(u,u)g(v,v)|g(u,v)g(v,u)$. 
By Theorem \ref{thm:canpol}, $\Aut(\Gamma)\cup \{g\}$ generates an injective 
binary polymorphism $f$ of $\Gamma$ which is canonical with respect to $(\mL;C,\prec,u,v)$, 
and identical with $g$ on $\{u,v\}$. 
Note that this implies that $f(u,u)f(v,v)|f(u,v)f(v,u)$.

We define two subsets $A_0,A_1$ as we did in Section~\ref{sect:separated}:
arbitrarily choose $u',v'\in \mL$ such that $u'u|v$, $u|v'v$, $u\neq u'$, and $v\neq v'$, and define 
$A_0:=\{x\in \mL:xu'|u \}$ and 
$A_1:=\{x\in \mL:xv'|v\}$. 
By the canonicity of $f$ and Lemma~\ref{lem:dominance}, 
for all $i,j \in \{0,1\}$ the function $f$ is dominated on $A_i \times A_j$ by either the first or the second argument.

Let $S:=\{x\in \mL:uv|x\}$. For each $x\in S$, define $E_x:=\{y\in \mL:xy|uv \}$.

\begin{observations}\label{obs:csplitoutside}
\begin{enumerate}
\item If $y \in E_x$, then $y \in S$ and $E_x=E_y$.
\item The substructure of $(\mL;C,\prec)$ induced by $E_x$ is isomorphic to $(\mL;C,\prec)$.
\item For arbitrary $a,b,c,a',b',c'\in E_x$, we have that $(a,b,c)$ and $(a',b',c')$ are in the same orbit under $(\mL;C,\prec,u,v)$ if and only if  $(a,b,c)$ and $(a',b',c')$ are in the same orbit under $(\mL;C,\prec)$. This implies that for all $x,y\in S$ the function $f$ is canonical 
on $E_x \times E_y$ with respect to $(\mL;C,\prec)$.
The isomorphism from the previous item of the observation and Lemma~\ref{lem:dominance}
 imply that
$f$ is dominated on $E_x\times E_y$ by either the first or the second argument.
\item If $f$ is dominated by the $i$-th argument on $E_x \times E_y$ for some $x,y\in S$ and $i \in \{1,2\}$, 
then $f$ is dominated by the $i$-th argument on $E_{x'}\times E_{y'}$ for all $x',y'\in S$. This follows from the fact that $f$ is canonical with respect to $(\mL;C,\prec,u,v)$.
\end{enumerate}
\end{observations}

Let $X,Y,X',Y'$ be arbitrary subsets of $\mL$.  
We say that $g \colon \mL^2 \to \mL$ \emph{has the same domination} 
on $X\times Y$ and $X'\times Y'$ if $g$ is dominated by the first argument on both $X\times Y$ and $X'\times Y'$, 
or dominated by the second argument on both $X\times Y$ and $X'\times Y'$. Otherwise we say that 
$g$ \emph{has different domination} on $X\times Y$ and $X'\times Y'$. 

Fix $w\in S$.  
Observation~\ref{obs:csplitoutside}(3) implies that $f$ is 
dominated on $E_w \times E_w$ by either the first or the second argument. Hence, one of the following two cases applies:
\begin{enumerate}
\item
$f$ has different domination on $E_w\times E_w$ and $A_i\times A_j$ for some $i,j\in \{0,1\}$, or
\item
$f$ has the 
same domination on $E_w\times E_w$ and $A_i\times A_j$ for every choice of $i,j\in \{0, 1\}$.
\end{enumerate}

We deal with the first case in Lemma~\ref{lem:freeness1} and the second case (with the aid of Lemma \ref{lem:prefree}) in Lemma \ref{lem:freeness}.

\begin{lemma}\label{lem:freeness1}
If $f$ has different domination on $E_w\times E_w$ and $A_p\times A_q$ for some $p,q\in \{0,1\}$, then every relation in
$\langle \Gamma \rangle$ is free.
\end{lemma}
\begin{proof}
Let $R \in \langle \Gamma \rangle$ be of arity $k$,
 and arbitrarily choose 
two tuples $t,t' \in R$ 
such that
\begin{enumerate}
\item
$t$ and $t'$ have the same split vector $s$, and
\item
$t$ and $t'$ are in the same orbit of $k$-tuples of $\Aut(\mL;=)$.
\end{enumerate}
We assume that $f$ 
is dominated by the first argument on $E_w\times E_w$ since otherwise we may consider the 
polymorphism $f'(x,y) := f(y,x)$ instead of $f$. 
Then the function $f$ is dominated by the second argument on $A_p\times A_q$ by assumption. 
Note that $E_w|A_p$ and $E_w|A_q$. 

By the homogeneity of $(\mL;C)$, we can choose 
$\alpha,\beta\in \Aut(\mL;C)$ such that $\alpha(\{t_i:s_i=0\})\subset E_w$, 
$\alpha(\{t_i:s_i=1\})\subset A_p$, $\beta(\{t'_i:s_i=0\})\subset E_w$, and 
$\beta(\{t'_i:s_i=1\})\subset A_q$. Let $t'':=f(\alpha(t),\beta(t'))$. 
The tuples $t,t',t''$ are in 
the same orbit of $\Aut(\mL;=)$ since $f$, $\alpha$, and $\beta$ are injective functions. 
Since $f$, $\alpha$, and $\beta$ preserve $C$, 
we see that $\{t''_i:s_i=0\}|\{t''_i:s_i=1\}$. This implies that $s$ 
is a split vector of $t''$. 
For all $i,j,l \in \{1,\dots,k\}$ such that 
$s_i=s_j=s_l=0$, we have $t''_i|t''_jt''_l$ if and only if 
$t_i|t_jt_l$, since $f$ is dominated by the first argument on $E_w \times E_w$. 
For all $i,j,l \in \{1,\dots,k\}$ such that $s_i=s_j=s_l=1$, 
we have $t''_i|t''_jt''_l$ if and only if $t'_i|t'_jt'_l$,
since $f$ is dominated by the second argument 
on $A_p \times A_q$. Hence, $t''$ has the desired properties in 
Definition \ref{def:freeness}.
\end{proof}

\begin{lemma} \label{lem:prefree}
Assume that $f$ has for all $i,j\in \{0, 1\}$ the 
same domination on $E_w\times E_w$ and $A_i\times A_j$.
Let $R$ be a $k$-ary relation in $\langle \Gamma \rangle$. 
Let $t,t' \in R$ be such that $t$ and $t'$ are in the 
same orbit under $\Aut(\mL;=)$ and have the same {\csplit} $I_0,I_1,\dots,I_p$. Then there is a tuple 
$t'' \in R$ such that 
\begin{itemize}
\item $t,t',t''$ are in the same orbit under $\Aut(\mL;=)$,
\item $I_0,I_1,\dots,I_p$ is a {\csplit} of $t''$,
\item for all $i,j,l \in I_0$ we have $t''_i|t''_j t''_l$ if and only if $t_i|t_jt_l$, and
\item for all $m \in \{1,\dots,p\}$ and $i,j,l\in I_m$ we have $t''_i|t''_jt''_l$ if and only if $t'_i|t'_j t'_l$. 
\end{itemize}
\end{lemma}
\begin{proof}
Suppose without loss of generality that $f$ is dominated by the second argument on $E_w \times E_w$. 
We prove the lemma by induction on $n := |\{t_i : i\in I_0\}|$. 

{\em Base case.} If $n=1$, then $t'':=t'$ has the required properties. 

{\em Inductive step.} 
Assume that the statement holds whenever $n \leq n_0$ for some $n_0 \geq 1$.
We prove that it holds for $n = n_0+1$.
Since $n > 1$ there exists a partition $\{I_{0,0},I_{0,1}\}$ of $I_0$ 
such that $\{t_i : i\in I_{0,0}\}|\{t_i : i\in I_{0,1}\}$. Consider the two cone splits $I_{0,0},I_{0,1},I_1,\dots,I_p$
and $I_{0,1},I_{0,0},I_1,\dots,I_p$ in the sequel.
Note that $|I_{0,0}| < |I_0|$ and $|I_{0,1}| < |I_0|$, so the inductive
hypothesis is applicable to these two cone splits.
By the homogeneity of $(\mL;C)$ there exist $\alpha,\beta \in \Aut(\mL;C)$ such that 
\begin{itemize}
\item
$\alpha(\{t_i : i\in I_{0,0}\})\subset A_0$ and $\alpha(\{t_i : i\in I_{0,1}\}) \subset A_1$, 
\item
$\alpha\big (\{t_i : i\in \bigcup_{j=1}^p I_j\} \big) \subset S$, 
\item
$\beta(\{t'_i : i\in I_0\}) \subset A_0$, and 
\item
$\beta\big(\{t'_i: i \in \bigcup^p_{j=1} I_j \} \big)\subset S$.
\end{itemize}
Note that $\alpha$ 
maps for each $m \in \{1,\dots,p\}$ the set 
$\{t_i : i\in I_m\}$ 
to the set $E_x$ for some $x \in S$. 
Let $h:=f(\alpha(t),\beta(t'))$. By Observation~\ref{obs:csplit}(4), 
we have that $I_0,I_1,\dots,I_p$ is a {\csplit} of $h$. Moreover, the tuple $h$ has the 
following additional properties.
\begin{itemize}
  \item The tuples $t,t',h$ are in the same orbit of $\Aut(\mL;=)$, because
$f$, $\alpha$, and $\beta$ are injective.
  \item $\{h_i : i\in I_{0,0}\}|\{h_i : i\in I_{0,1}\}$. This follows from the observation made
in connection with the definitions of $A_0$ and $A_1$ above.
\item For all $i,j,l \in I_{0,0}$, or $i,j,l \in I_{0,1}$, or $i,j,l\in I_m,m \geq 1$ we have 
$h_i|h_j h_l$ if and only if $t'_i|t'_j t'_l$. This follows from the fact that $f$ is not dominated by the 
second argument on $A_0 \times A_0$, $A_1 \times A_0$, and $E_x \times E_y$ for all $x,y \in S$ --- 
see Observation~\ref{obs:csplitoutside}(4).
\end{itemize}  
The second property implies that $I_{0,0},I_{0,1},I_1,I_2,\dots,I_p$ is a {\csplit} of $t$ and $h$. 
By the inductive assumption, there is a tuple $h'\in R$ such that 
\begin{itemize}
\item $t$, $h$, and $h'$ are in the same orbit under $\Aut(\mL;=)$,
\item $I_{0,0},I_{0,1},I_1,I_2,\dots,I_p$ 
is a {\csplit} of $h'$,
\item for all $i,j,l \in I_{0,0}$ we have $h'_i|h'_j h'_l$ if and only if $t_i|t_j t_l$, and
\item for all $i,j,l \in I_{0,1}$ or $i,j,l\in I_m$, $m\geq 1$, we have $h'_i|h'_j h'_l$ 
if and only if $h_i|h_j h_l$. 
\end{itemize}
We conclude that $h'_i|h'_j h'_l$ if and only if $t'_i|t'_j t'_l$. Then $I_{0,1},I_{0,0},I_1,I_2,\dots,I_p$ is a {\csplit} of $t$ and 
$h'$ (Observation~\ref{obs:csplit}(2)). Applying the inductive assumption to $t$ and $h'$ with this {\csplit}, we obtain 
a tuple $t'' \in R$ with the following properties.
\begin{itemize}
\item
The $k$-tuples $t$, $h'$, and $t''$ are in the same orbit under $\Aut(\mL;=)$.
\item  For all $i,j,l \in I_{0,1}$ we have $t''_i|t''_j t''_l$ if and only if $t_i|t_j t_l$. 
\item For all $i,j,l \in I_{0,0}$ we have $t''_i|t''_j t''_l$ if and only if $h'_i|h'_j h'_l$. 
Hence, $t''_i|t''_j t''_l$ if and only if $t_i |t_j t_l$.
\item For all $i,j,l \in I_m$ and $m\geq 1$ we have 
$t''_i|t''_j t''_l$ if and only if $h'_i|h'_l h'_l$. This
implies that $t''_i|t''_j t''_l$ if and only if $t'_i|t'_j t'_l$.
\end{itemize}
The second and third condition imply that $t''_i|t''_j t''_l$ if and only if $t_i|t_j t_l$ for all $i,j,l \in I_0$. 
Finally, $t,t',t''$ are in the same orbit under $\Aut(\mL;=)$. 
Thus, the tuple $t''$ has all the desired properties.
\end{proof}

\begin{lemma}\label{lem:freeness}
Every relation in $\langle \Gamma \rangle$ is free.
\end{lemma}
\begin{proof} 
Let $f \in \Pol^{(2)}(\Gamma)$ and $w \in \mL$ be as introduced after 
Observation~\ref{obs:csplitoutside}. 
If $f$ has  for some $i,j \in \{0,1\}$ 
different domination on
$E_w \times E_w$ and $A_i \times A_j$, then we are done by Lemma~\ref{lem:freeness1}.
Hence, we may assume that $f$ has the 
same domination on $E_w\times E_w$ and $A_i \times A_j$ for all $i,j\in \{0,1\}$. 
We assume that $f$ is 
dominated by the second argument on $E_w\times E_w$, since otherwise we consider the polymorphism $f'(x,y)= f(y,x)$ instead.

Let $R \in \langle \Gamma \rangle$ be a $k$-ary relation, 
and let $t,t' \in R$ be such that
they lie in the 
same orbit under $\Aut(\mL;=)$ and have a common split vector $s$. We define $J_0:=\{i \in \{1,\dots,k\} : s_i=0\}$ 
and $J_1:=\{i \in \{1,\dots,k\} : s_i=1\}$. 
Clearly, $J_0,J_1$ is a {\csplit} of $t$ and $t'$. Applying Lemma~\ref{lem:prefree} to $t$ and $t'$ with this cone split there exists a tuple $t''$ such that 
\begin{enumerate}
\item
$J_0,J_1$ is a {\csplit} of $t''$,  and thus $s$ is a
split vector of $t''$, 
\item
for all $i,j,l \in J_0$ we have $t''_i|t''_j t''_l$ if and only if $t_i|t_jt_l$, and
\item
for all $i,j,l \in J_1$ we have 
$t''_i | t''_j t''_l$ if and only if $t'_i|t'_j t'_l$. 
\end{enumerate}
Hence, $t''$ satisfies the conditions in Definition \ref{def:freeness}.
\end{proof}


\section{Affine Horn Relations}
\label{sect:Horn}

Throughout this section we work with a reduct $\Gamma$ of $(\mL;C)$ such that $C \in \langle \Gamma \rangle$, and all relations in $\langle \Gamma \rangle$ are separated, free, and have affine splits. We are going to show that all relations in $\Gamma$ 
 can be defined in $(\mL;C)$ 
by quantifier-free formulas 
of a special syntactically restricted form, 
which we call \emph{affine Horn formulas}. 
Affine Horn formulas are introduced in Section~\ref{sect:syntax},
and the mentioned characterisation is shown
in two steps in Section~\ref{sect:inj-syntax}
and Section~\ref{sect:general-syntax}. 
We finally present in Section~\ref{sect:alg} a polynomial-time  algorithm for testing satisfiability of a given affine Horn formula, and this also gives a polynomial-time algorithm for
$\Csp(\Gamma)$ when $\Gamma$ has a finite signature.  

\subsection{Affine Horn Formulas}
\label{sect:syntax}
Recall that a Boolean relation $R$ is called \emph{affine} if 
can be defined by a system of 
 linear equation systems over the 2-element field. 
 It is well-known (see e.g. Chen~\cite{Rendezvous})
that a Boolean relation is affine if and only if it is preserved by the function $(x,y,z) \mapsto x +y + z \; ({\rm mod} \; 2)$. 


\begin{definition}
\label{def:phi-affine}
Let $B \subseteq \{0,1\}^n$ be a Boolean relation. 
Then $\phi_B(z_1,\dots,z_n)$ stands for the formula 
\begin{displaymath}
  z_1 = \cdots = z_n \vee \bigvee_{t \in B \backslash \{(0,0,\dots,0),(1,1,\dots,1)\}} \{z_i : t_i=0\} | \{z_i : t_i=1\} \; .
\end{displaymath}
\smallskip
The formula $\phi_B$ is called \emph{affine}
if $B \cup \{(0,0,\dots,0),(1,1,\dots,1)\}$ is affine. 
\end{definition}


\begin{definition}\label{def:eqHorn}
An \emph{affine Horn clause} is a formula 
\begin{align*}
\text{ of the form } \quad & x_1 \neq y_1 \vee \cdots \vee x_n \neq y_n,\\
\text{ or of the form } \quad & x_1 \neq y_1 \vee \cdots \vee x_n \neq y_n \vee \phi(z_1,\dots,z_k), \end{align*}
where $\phi$ is an affine formula. An \emph{affine Horn formula} is a conjunction of affine Horn clauses.
A relation $R \subseteq \mL^k$ is called \emph{affine Horn} if it can be defined by an affine Horn formula over $(\mL;C)$. 
A phylogeny constraint language is called \emph{affine Horn} if all its relations are affine Horn.
\end{definition}

Note that in Definition~\ref{def:eqHorn},
$n$ can be equal to $0$, and some of the variables $x_1,\dots,x_n$, $y_1,\dots,y_n$, $z_1,\dots,z_k$ might be equal.

\begin{example} 
A relation that is affine Horn is 
\begin{displaymath}
 \{(z_1,z_2,z_3,z_4) \in \mL^4 : z_1 z_2 | z_3 z_4 \text{ and } z_1 = z_2 \Leftrightarrow z_3 = z_4\}.
\end{displaymath}
To see this, first note that it can equivalently be defined by the formula 
\begin{displaymath}
(z_1z_2|z_3z_4 \vee z_1=z_2=z_3=z_4) \wedge (z_1 \neq z_2 \vee z_3 = z_4) \wedge (z_3 \neq z_4 \vee z_1 = z_2) \wedge z_1 \neq z_3 \; .
\end{displaymath}
It is now sufficient to verify
that each conjunct is an affine Horn formula. This is obvious for the second, third, and fourth conjunct.
For the first conjunct,
consider the relation
\begin{displaymath}
R:=\{(0,0,0,0),(1,1,0,0),(0,0,1,1),(1,1,1,1)\} \, .
 \end{displaymath}
 This Boolean relation is affine since $(z_1,z_2,z_3,z_4) \in R$ if and only if $z_1+z_2=0 \; ({\rm mod} \; 2)$ and $z_3+z_4=0 \; ({\rm mod} \; 2)$.
We see that $\phi_R(z_1,z_2,z_3,z_4)$ 
is equivalent to $z_1=z_2=z_3=z_4 \vee z_1z_2|z_3z_4$. 
\end{example}

Let us mention that a consequence of Theorem~\ref{thm:main} below is that all affine Horn relations
are separated, free, and have affine split relations. 
The converse is not true, as we see in the following. 

\begin{example}
Consider the relation $R$ defined as follows.
\begin{displaymath} 
R := \big\{(x,y,z,u) \in \mL^4 : xyz|u \wedge N(x,y,z)\big\}\;.
\end{displaymath}
Note that the relation $R$ is separated, free, and has an affine split relation. However, $R$ is \emph{not} affine Horn. 
To see this, first observe that $N(x,y,z)$ is
equivalent to the primitive positive formula $\exists u. R(x,y,z,u)$, and therefore 
$N \in \langle (\mL;R) \rangle$. The relation $N$ has the split relation
$\{(0,0,1),(0,1,1),(1,1,0),(1,0,0)\}$, which is not affine. 
Since the class of all affine Horn relations is closed
under primitive positive definability (see Corollary~\ref{cor:horn-tx} below), this shows that $R$ is not affine Horn.  
\end{example}



\subsection{The Injective Case}
\label{sect:inj-syntax}
In this section we study $k$-ary relations $R$ from 
$\langle \Gamma \rangle$ 
such that all tuples in $R$ have pairwise
distinct entries. From now on, we let $R$ denote an arbitrary such relation. The main result of this section can be found in Lemma \ref{lem:inj-syntax}.

For $s = (s_1,\dots,s_k) \in S(R)$,
define $R_s := \big\{(t_1,\dots,t_k) \in R : \{t_i:s_i=0\}|\{t_i:s_i=1\}\big\}$. By convention if $s$ is an all-equal tuple (a tuple whose entries are equal), then $R_s:=\top$. Note that $R_s \in \langle \Gamma \rangle$, and that
\begin{align*}
R(x_1,x_2,\dots,x_k) \; \Leftrightarrow \bigvee_{s \in S(R)} R_s(x_1,x_2,\dots,x_k) \; .
\label{eq:Rs}
\end{align*}

Let $i_1,\dots,i_p \in \{1,2,\dots,k\}$ with $i_1 < \cdots < i_p$ and
define 
\[R[i_1,\dots,i_p] := \{(t_{i_1},\dots,t_{i_p}) \mid t \in R\}.\]
Clearly, $R[i_1,\dots,i_p]$ is a member of $\langle \Gamma \rangle$
and our assumptions imply that the split relation $S(R[i_1,\dots,i_p]) \cup \{(0,\dots,0),(1,\dots,1)\}$ is affine.

\begin{lemma}\label{lem:boolreduc}
Let $s \in S(R)$,
and let $1 \leq i_1 < \cdots < i_p \leq k$ be either from $\{i \in \{1,\dots,k\} : s_i=0\}$ or from $\{i \in \{1,\dots,k\} : s_i=1\}$. 
Then $S(R[i_1,\dots,i_p]) = S(R_s[i_1,\dots,i_p])$.
\end{lemma}
\begin{proof}
Clearly, $S(R_s[i_1,\dots,i_p]) \subseteq S(R[i_1,\dots,i_p])$. 
To prove the reverse inclusion, it suffices to show that 
for every $r \in R$ there exists a $t \in R_s$ 
such that $(r_{i_1},\dots,r_{i_p})$ and $(t_{i_1},\dots,t_{i_p})$
have a common split vector. 
The relation $R$ is separated, so there is a $t \in R$ with split vector $s$ and for all $i,j,l \in \{1,\dots,k\}$ it holds that if $s_i=s_j=s_l$, then $t_i|t_j t_l$ if and
only if $r_i|r_jr_l$. 
It follows that $t \in R_s$, and that $(t_{i_1},\dots,t_{i_p})$ and $(r_{i_1},\dots,r_{i_p})$ 
have a common split vector. 
\end{proof}

Define the affine Horn formulas $\theta_R$ and $\psi_R$ as follows.
\begin{align*}
\theta_R(x_1,\dots,x_k) := & \; \bigwedge_{1 \leq i_1 < \cdots < i_p \leq k} 
\phi_{S(R[i_1,\dots,i_p])}(x_{i_1},\dots,x_{i_p}) \\
\psi_R(x_1,\dots,x_k) := & \; \alldiff(x_1,\dots,x_k)  \wedge \theta_R(x_1,\dots,x_n) 
\end{align*}

\begin{lemma}\label{lem:inj-syntax}
The relation $R$ has the affine Horn definition
$\psi_R(x_1,\dots,x_k)$. 
\end{lemma}
\begin{proof}
Our proof is by induction on the arity $k$ of $R$. 
The statement is trivial for $k = 1$. 
For $k>1$, we first show that the following expression defines $R_s$. 
\begin{align}
R_s(x_1,x_2,\dots,x_n) \; \Leftrightarrow \; 
\{x_i:s_i=0\}|\{x_i:s_i=1\} \wedge \psi_R(x_1,\dots,x_n) 
\label{eq:claim}
\end{align}

Let $\{i_1,\dots,i_p\} = \{i \in \{1,\dots,k\} : s_i = 0\}$ and 
$\{j_1,\dots,j_q\} = \{i \in \{1,\dots,k\} : s_i = 1\}$ be
such that
$i_1<i_2<\dots<i_p$ and $j_1<j_2<\dots<j_q$.
Note that $p + q = k$. 
Also note that the relations $P := R_s[i_1,\dots,i_p]$ and $Q := R_s[j_1,\dots,j_q]$
are in $\langle \Gamma \rangle$ and only contain tuples with
pairwise distinct entries.
For notational convenience, let us without loss of generality assume that
$i_1=1,\dots,i_p=p$ and $j_1 = p+1,\dots,j_q = k$. 


By the inductive assumption, $P$ 
has the definition $\psi_P(x_1,\dots,x_p)$ and $Q$ has the definition
$\psi_Q(x_{p+1},\dots,x_k)$.
 By assumption, 
$R_s$ is free so
\begin{align}
R_s(x_1,x_2,\dots,x_n) \; \Leftrightarrow \; &  \{x_1,\dots,x_p\} | \{x_{p+1},\dots,x_k\}  \nonumber \\
& \land \; \psi_{P}(x_1,\dots,x_p) \land \psi_{Q}(x_{p+1},\dots,x_k)  \label{formu:free} \; .
\end{align}

Thus it is sufficient to show that the conjunction on the right-hand side of (\ref{eq:claim}) is equivalent to the conjunction on the right-hand side of (\ref{formu:free}). Let $L = \{l_1,\dots,l_r\}$ be an arbitrary subset of $\{1,2,\dots,k\}$. We can without loss of generality assume that $l_1 < l_2 < \cdots < l_r$. If $L\subseteq \{1,2,\dots,p\}$, then  by definition we have $P = R_s[1,\dots,p]$, and 
$S(P[l_1,\dots,l_r]) = S(R_s[l_1,\dots,l_r]) = S(R[l_1,\dots,l_r])$ by Lemma~\ref{lem:boolreduc}.
Therefore, 
\begin{align*}
\phi_{S(P[l_1,\dots,l_r])}(x_{l_1},\dots,x_{l_r})
= \phi_{S(R[l_1,\dots,l_r])}(x_{l_1},\dots,x_{l_r}) \; .
\end{align*}

Similarly, if $L \subseteq \{p+1,\dots,k\}$, then 
\begin{align*}\label{eq:right}
\phi_{S(Q[l_1-p,\dots,l_r-p])}(x_{l_1},\dots,x_{l_r})
= \phi_{S(R[l_1,\dots,l_r])}(x_{l_1},\dots,x_{l_r}) \; .
\end{align*}

If $L \cap \{1,\dots,p\} \neq \emptyset$
and $L \cap \{p+1,\dots,k\} \neq \emptyset$, 
then $\{x_1,\dots,x_p\}|\{x_{p+1},\dots,x_k\}$ implies 
that $\phi_{S(R[l_1,\dots,l_r])}(x_{l_1},\dots,x_{l_r})$. 
To see this, observe that every $t \in \mL^k$ 
with split vector $s$ satisfies
the disjunct $\{x_{l_i} : s_{l_i}=0\} | \{x_{l_i} : s_{l_i}=1\}$
in the formula $\phi_{S(R[l_1,\dots,l_r])}(x_{l_1},\dots,x_{l_r})$. 
It follows from these three cases of $L$ and the definition of $\psi_R$, $\psi_P$ and $\psi_Q$ that
\begin{align*}
& \{x_i:s_i=0\}|\{x_i:s_i=1\} \wedge 
\psi_{R}(x_1,x_2,\dots,x_k) \\
\Leftrightarrow \quad & 
\{x_i:s_i=0\}|\{x_i:s_i=1\} \wedge  \psi_P(x_{i_1},x_{i_2},\dots,x_{i_p}) \wedge \psi_Q(x_{j_1},x_{j_2},\dots,x_{j_q}) \; ,
\end{align*}
which together with (\ref{formu:free}) implies (\ref{eq:claim}). 
To conclude, we have that
\begin{align*}
R(x_1,x_2,\dots,x_k) \Leftrightarrow & \; \bigvee_{s\in S(R)} R_s(x_1,x_2,\dots,x_k)\\
\Leftrightarrow & \;
\bigvee_{s \in S(R)} \{x_i:s_i=0\}|\{x_i:s_i=1\} \wedge \psi_R(x_1,\dots,x_k) \\
\Leftrightarrow & \; \psi_R(x_1,x_2,\dots,x_k) \wedge 
\bigvee_{s \in S(R)} \{x_i:s_i=0\}|\{x_i:s_i=1\} 
\\
\Leftrightarrow
& \; \psi_R(x_1,x_2,\dots,x_k) \wedge \phi_{S(R)}(x_1,x_2,\dots,x_k) \\
\Leftrightarrow & \; \psi_R(x_1,x_2,\dots,x_k).
\end{align*}
\end{proof}

\subsection{The General Case} \label{sect:general-syntax}
In this part, we finish the proof that every relation $R$ in $\langle \Gamma \rangle$ 
is affine Horn (Proposition~\ref{prop:gen-syntax} below). 
Let $k$ be the arity of $R$. 
For $a \in R$, define 
\[\chi_a(x_1,x_2,\dots,x_k) :=\left(\bigwedge_{a_i=a_j} x_i=x_j \right) \wedge \left( \bigwedge_{a_i\neq a_j} x_i\neq x_j \right) \; .\] 
Let $R_1,R_2,\dots,R_m$ be an enumeration of all relations
that can be defined by a formula 
$\chi_a(x_1,\dots,x_k) \wedge R(x_1,\dots,x_k)$ for some $a \in R$. 
We have only finitely many such relations $R_i$ since $R$ is a union of finitely many orbits of $\Aut(\mL;C)$ and each $R_i$ is a union of some of those orbits. Note that $R_1,R_2,\dots,R_m$ form a partition of $R$, that they are all from $\langle \Gamma \rangle$, and that for distinct $i,j \in \{1,\dots,m\}$ the relations
$R_i$ and $R_j$ are
contained in different orbits under $\Aut(\mL;=)$.
Pick $a^1,\dots,a^m$ such that for every $i \in \{1,\dots,m\}$ the formula $\chi_{a^i}(x_1,\dots,x_k) \wedge R(x_1,\dots,x_k)$ defines $R_i$.

\begin{lemma}\label{lem:eq-bin-inj}
The formula $\bigvee_{i=1}^m \chi_{a^i}$
is preserved by an injection $f \colon \mL^2 \to \mL$. 
\end{lemma}
\begin{proof}
Suppose that $t,t' \in \mL^k$ 
both satisfy the formula. Then $t$ satisfies 
$\chi_{a^i}$ and $t'$ satisfies $\chi_{a^j}$ for 
some $i,j \in \{1,\dots,m\}$. 
By Corollary~\ref{cor:bin-inj}, $\Gamma$ has a binary
injective polymorphism $f$. Since 
$f$ is injective, $f(t,t')$ and $f(a^i,a^j)$ are
in the same orbit under $\Aut(\mL;=)$. 
Since $f(a^i,a^j) \in R$, it follows that
$f(t,t')$ satisfies the formula, too. 
\end{proof}

It follows from Lemma~\ref{lem:eq-bin-inj}
in combination with Lemma~\ref{lem:Horn}
that $\bigvee_{i=1}^m \chi_{a^i}$ has a quantifier-free Horn definition $\psi_0$ over $(\mL;=)$.

Let $1 \leq q(i,1) < \cdots < q(i,p_i) \leq k$
be such that each entry of $a^i$ equals
the $q(i,l)$-th entry of $a^i$, for exactly one 
$l \in \{1,\dots,p_i\}$. The numbers $q(i,j)$ are chosen such that $a^i_{q(i,1)},a^i_{q(i,2)},\dots,a^i_{q(i,p_i)}$ is an enumeration of the elements in $\{a^i_1,a^i_2,\dots,a^i_k\}$, and therefore $p_i=|\{a^i_1,a^i_2,\dots,a^i_k\}|$. Define the affine Horn formula $\sigma_i$
as follows. 
\begin{displaymath}
\sigma_i := \left( \bigvee_{a_j^i = a_l^i} x_j \neq x_l \right) \vee \theta_{R_i[q(i,1),\dots,q(i,p_i)]}. \end{displaymath}

\begin{proposition}\label{prop:gen-syntax}
The relation $R$ has the affine Horn definition
\begin{displaymath}
\psi := \psi_0 \wedge \bigwedge_{i=1}^m \sigma_i(x_1,\dots,x_k)\;.
\end{displaymath}
\end{proposition}

Proposition~\ref{prop:gen-syntax} is the main result of this section. 
Before we present its proof, 
we establish a fact concerning
relations that are separated.

\begin{lemma}\label{lem:booleandomination} 
Let $u,v \in R$ be such that for all $j,l \in \{1,\dots,k\}$ 
if $u_j \neq u_l$ then $v_j \neq v_l$,
and let $i_1,\dots,i_p \in \{1,\dots,k\}$ be such that not all entries of $(u_{i_1},\dots,u_{i_p})$
are equal. 
Then there exists a $w \in R$ such that
\begin{itemize}
  \item $w$ and $v$ lie in the same orbit 
of $k$-tuples of $\Aut(\mL;=)$, and 
  \item $(w_{i_1},\dots,w_{i_p})$ and $(u_{i_1},\dots,u_{i_p})$ have a common split vector that is not an all-equal tuple. 
\end{itemize}  
\end{lemma}
\begin{proof}
Since not all entries of $(u_{i_1},\dots,u_{i_p})$ are equal, so are all entries of $u$ and $v$. Let $s$ be a split vector of $u$ that is not an all-equal tuple. Let $A :=\{s_{i_1},\dots,s_{i_p}\}$. If $|A|=1$, then by applying the separation of $R$ to $t := v$ and $t' := u$ there is a $w\in R$ such that the following 
holds. 
\begin{itemize}
\item For all $j,l \in \{1,\dots,k\}$ we have $w_j \neq w_l$ if and only if $u_j \neq u_l$ or $v_j \neq v_l$. Hence,  $w_j \neq w_l$ if and only if $v_j \neq v_l$ for all $j,l \in \{1,\dots,k\}$, and $w$ and $v$ lie in the same orbit under $\Aut(\mL;=)$.
\item For all $i,j,l \in \{1,\dots,k\}$ we have $w_i|w_j w_l$ 
whenever $s_i=s_j=s_l$ and $u_i|u_j u_l$. Hence, $(u_{i_1},\dots,u_{i_p})$ and $(w_{i_1},\dots,w_{i_p})$ have a common split vector.  
\end{itemize}
Therefore, $w$ has the desired properties. 
If $|A|=2$, then by applying the separation of $R$
 to $t := u$ and $t' := v$, there is a $w \in R$ such that 
the following holds. 
\begin{itemize}
  \item 
Again, $w$ and $v$ 
lie in the same orbit under $\Aut(\mL;=)$.
  \item The tuples $u$ and $w$ have a common split vector. Hence, $(w_{i_1},\dots,w_{i_p})$
  and $(u_{i_1},\dots,u_{i_p})$ have a common split vector. 
\end{itemize}  
Hence, also in this case $w$ has the desired properties.
\end{proof}

\begin{proof} (Proposition~\ref{prop:gen-syntax})
We first show that every $k$-tuple $t$ that satisfies $\psi$ is a member of $R$. Since $t$ satisfies $\psi_0$, there exists an $i \in \{1,\dots,m\}$
such that $t$ satisfies $\chi_{a^i}$. 
That is, for all $j,l$ it holds that 
$t_j=t_l$ if and only if $a^i_j=a^i_l$.
Let $p := p_i$ and 
$i_1 := q(i,1),\dots, i_p := q(i,p)$. 
Since $t$ satisfies $\sigma_i$  
it must therefore also satisfy $\theta_{R_i[i_1,\dots,i_p]}$. Since $R_i[i_1,\dots,i_p] \in \langle \Gamma \rangle$ only contains tuples
with pairwise distinct entries, 
we have that $\alldiff(x_{i_1},\dots,x_{i_p}) \wedge \theta_{R_i[i_1,\dots,i_p]}(x_{i_1},\dots,x_{i_p})$
defines $R_i[i_1,\dots,i_p]$
by Lemma~\ref{lem:inj-syntax}.
The tuple $t$ satisfies $\alldiff(x_{i_1},\dots,x_{i_p})$, and it follows that 
$t \in R_i \subseteq R$. 

It remains to be shown that every $t \in R$ satisfies $\psi$. Clearly, $t$ satisfies $\psi_0$.
Let $i \in \{1,\dots,m\}$; 
we have to verify that $t$ satisfies $\sigma_i$. 
If there are indices $j,l \in \{1,\dots,k\}$ such that $t_j \neq t_l$ and $a^i_j=a^i_l$, then $t$ 
satisfies $\sigma_i$ since $\sigma_i$ contains the disjunct $x_j \neq x_l$.
We are left with the
case that for all $j,l \in \{1,\dots,k\}$ 
if $t_j \neq t_l$, then $a^i_j \neq a^i_l$. 
Again, let $p := p_i$ and $i_1 := q(i,1),\dots, i_p := q(i,p)$.  
In order to show that $t$ satisfies 
$\theta_{R_i[i_1,\dots,i_p]}$, 
we have to show that 
$(t_{i_1},\dots,t_{i_p})$ has a split vector from $S(R_i[i_1,\dots,i_p])$. 
Lemma~\ref{lem:booleandomination} 
applied to $u := t$ and $v := a^i$ shows
that there exists a $w \in R$ such that $w$
and $a^i$ lie in the same orbit under $\Aut(\mL;=)$,
and $(w_{i_1},\dots,w_{i_p})$ and $(t_{i_1},\dots,t_{i_p})$ have a common split vector. 
Since $w \in R_i$, this split vector is in $S(R_i[i_1,\dots,i_p])$. This concludes the proof
that $t$ satisfies $\sigma_i$.
\end{proof}

\begin{corollary}
\label{cor:syntax}
Let $\Gamma$ be a reduct of $(\mL;C)$ such that 
$C \in \langle \Gamma \rangle$ and $N \notin \langle \Gamma \rangle$. Then all relations
in $\langle \Gamma \rangle$ are affine Horn. 
\end{corollary}
\begin{proof} 
By Lemma~\ref{lem:affine}, \ref{lem:separation}, and \ref{lem:freeness},
all relations in $\langle \Gamma \rangle$ have affine splits,  are free, and separated. By Proposition~\ref{prop:gen-syntax}, all relations
in $\langle \Gamma \rangle$ are affine Horn. 
\end{proof}

\subsection{Testing Satisfiability of Affine Horn Formulas}\label{sect:alg}
We will show (in Theorem \ref{thm:alg}) that satisfiability
of affine Horn formulas over $(\mL;C)$ can be solved in polynomial time. In the formulation of the algorithm, we need the following concept which originates 
from Bodirsky \& Mueller~\cite{phylo-long}.

\begin{definition}
Let $\Phi$ be an affine Horn formula. 
Then the \emph{split problem $\Psi$ for $\Phi$} is the Boolean constraint
satisfaction problem on the same variables as $\Phi$ 
which contains for each conjunct of $\Phi$
of the form $\phi_R(z_1,\dots,z_k)$
the (affine) Boolean constraint $R(x_1,\dots,x_n)$.
\end{definition}

When $\Phi$ is a formula and $X$ a non-empty set of variables of $\Phi$ then the \emph{contraction of $X$ in $\Phi$}
is the formula obtained from $\Phi$ by 
\begin{itemize}
\item replacing all variables from $X$ in $\Phi$ by a new variable $x$ and 
\item subsequently removing all disjuncts of the form $x \neq x$ and all conjuncts of the form $x=x$.
\end{itemize}

\begin{figure*}
\begin{center}
\fbox{
\begin{tabular}{l}
Solve($\Phi$): \\
// Input: An affine Horn formula $\Phi$ \\
// Output: \emph{satisfiable} or \emph{unsatisfiable} \\
If Spec$(\Phi)=$ \emph{satisfiable} then return \emph{satisfiable} \\
  else  \\
\hspace{.5cm}    $X :=$ Spec$(\Phi)$ \quad // $X$ is a set of variables. \\ 
\hspace{.5cm}      Let $\Psi$ be the contraction of $X$ in $\Phi$. \\
\hspace{.5cm}      If $\Psi$ contains an empty clause then return \emph{unsatisfiable} \\
\hspace{.5cm}      else return Solve($\Psi$) \\
end if 
\end{tabular} } \\
\fbox{
\begin{tabular}{l}
Spec$(\Phi)$: \\
// Input: An affine Horn formula $\Phi$ with variables $V$ \\
// Output: \emph{satisfiable}, or a subset $X$ of $V$ \\
If there is no non-trivial solution to the split problem $\Psi$ for $\Phi$ \\
\hspace{.5cm} return $V$ \\
else \\
\hspace{.5cm} Let $s$ be the non-trivial solution to $\Psi$. \\
\hspace{.5cm} If Spec$(\Phi[s^{-1}(0)]) = X_0 \subseteq V$ then return $X_0$ \\
\hspace{.5cm} else if Spec$(\Phi[s^{-1}(1)]) = X_1 \subseteq V$ then return $X_1$ \\
\hspace{.5cm} else return \emph{satisfiable} \\
\hspace{.5cm} end if \\
end if
\end{tabular}
}
\end{center}
\caption{A polynomial-time procedure for satisfiability of affine Horn formulas.}
\label{fig:algorithm}
\end{figure*}

\begin{lemma}\label{lem:neumann}
Let $L_1,L_2$ be two finite subsets of $\mathbb L$.
Then there is an automorphism $\alpha$ of $({\mathbb L};C)$ such that
$\alpha(L_1)|L_2$. 
\end{lemma}
\begin{proof}
An immediate consequence of the fact that every finite leaf structure embeds into $(\mL;C)$, and 
the homogeneity of $(\mL;C)$. 
\end{proof}

\begin{theorem}\label{thm:alg}
There is a polynomial-time algorithm that decides
whether a given affine Horn formula is satisfiable or not
over $(\mL;C)$. 
\end{theorem}

\begin{proof}
The algorithm can be found in Figure~\ref{fig:algorithm}. 
The correctness of the algorithm directly follows from the claim that the sub-procedure Spec described
in Figure~\ref{fig:algorithm} 
has the following properties:
\begin{itemize}
\item If Spec$(\Phi)$ returns \emph{satisfiable} then there exists
an \emph{injective} solution to $\Phi$, that is, a solution to $\Phi$ where all
variables take distinct values in ${\mathbb L}$. 
\item If Spec$(\Phi)$ returns a set of variables $X$, then the variables from $X$ take equal value in every solution of $\Phi$ ($\Phi$ may not have a solution).
\end{itemize}

We prove the claim by induction over the recursive structure of Spec.
The split problem $\Psi$ of $\Phi$ is an affine Boolean CSP, and so it can be decided in polynomial time 
by Gaussian elimination whether $\Psi$ has a non-constant solution or not.
If $\Phi$ has a non-injective solution, then the split problem for $\Phi$ has
a non-constant solution (the non-trivial solution is induced by the left and the right children of the root of the solution).
Hence, if the split problem does not have a non-constant solution,
then all variables from $\Phi$ must take equal values, and 
the output $V$ of the algorithm satisfies the claim made above.

So suppose that the split problem $\Psi$ for $\Phi$ 
does have a non-trivial solution $s$, and let $S_0 := s^{-1}(0)$
and $S_1 := s^{-1}(1)$.
If one of the recursive calls in Spec does not return \emph{satisfiable} 
but rather returns a set of variables $X$, then the correctness of the claim
follows by the inductive assumption: the reason
is that when a subset of the constraints in $\Phi$ implies that some variables must denote equal values in all solutions, then also $\Phi$ implies that 
those variables must be equal in all solutions. 
Hence, when Spec returns $X$ on input $\Phi$, this answer is correct.  

If both recursive calls return \emph{satisfiable},
then we argue that there exists a solution to $\Phi$.
We know that there are injective solutions $t_1 \colon S_1 \rightarrow \mathbb L$ and $t_2 \colon S_2 \rightarrow \mathbb L$ to $\Phi[S_0]$ and $\Phi[S_1]$, 
respectively. Let $t \colon V \rightarrow \mathbb L$
be the mapping such that $t(x)=t_1(x)$ for $x \in S_1$,
and $t(x) = \alpha(t_2(x))$ for $x \in S_2$ and an
automorphism $\alpha$ of $(\mL;C)$ such that $t(S_1) | t(S_2)$ 
(such an automorphism $\alpha$ exists due to Lemma~\ref{lem:neumann}).
We claim that $t$ is an (injective) solution to $\Phi$.

Let $\phi$ be a conjunct from $\Phi$. If
$\phi$ contains disjuncts of the form $x \neq y$,
then $\phi$ is satisfied by $t$ since $t$ is injective. 
Otherwise, if all variables of $\phi$
are in $S_0$ or all are in $S_1$, 
then $\phi$ is satisfied by $t$ by 
inductive assumptions for $t_1$ and $t_2$,
respectively. 
If $\phi$ contains variables from both sides,
suppose that $x_1,\dots,x_n$ are the free variables of $\phi$.
Let $R$ be the $n$-ary Boolean relation such that $\phi = \phi_R$. 
Then $R$ must contain a tuple $(p_1,\dots,p_n)$ such that $x_i \in S_0$
if and only if $p_i=0$, 
since $s$ is a solution to the split problem. But
then $\phi_R$ is satisfied by $t$ since it contains the disjunct 
$\{x_i : p_i=0\} | \{x_i : p_i=1\}$.

We finally show that
the running time is polynomial in the size of the input. 
There are at most $n-1$ variable contractions that can be performed,
and this bounds the number of recursive calls of the procedure Solve.
Finally, in the procedure Spec we have to solve an at most linear number of Boolean affine split problems, which can be done in polynomial time as well.
\end{proof}

\begin{corollary}\label{cor:alg}
Let $\Gamma$ be a reduct of $(\mL;C)$ which is affine Horn and has a finite signature. 
Then $\Csp(\Gamma)$ can be solved in polynomial time. 
\end{corollary}
\begin{proof}
Let $\Psi$ be an instance of $\Csp(\Gamma)$. 
Each conjunct of $\Psi$ has a definition over $(\mL;C)$ by a conjunction of affine Horn formulas. Since $\Gamma$ contains only finitely many 
relation symbols, replacing each conjunct $\psi(x_1,\dots,x_n)$ by its defining formula over $(\mL;C)$ only changes the size of the formula by a constant factor. The resulting formula $\Phi$ is a conjunction of affine Horn formulas, 
and is satisfiable over $(\mL;C)$ if and only if $\Psi$
is satisfiable over $\Gamma$.
\end{proof}

\section{Affine Tree Operations}
\label{sec:treeoperations}
The border between NP-hardness and 
polynomial-time tractability for phylogeny problems
can be stated in terms of polymorphisms, as announced in
Theorem~\ref{thm:main-polym}. 
In order to prove this result, we introduce a certain
kind of binary operations over $\mL$ which we call \emph{affine tree operations}. In this section we often use the fact that for every non-empty finite subset $X$ of $\mL$ there is a unique partition $\{X_1,X_2\}$ of $X$ such that $X_1|X_2$ and $X_1\prec X_2$. This fact follows easily from Lemma 9 in \cite{BodJonsPham} and the convexity of $\prec$.

To define affine tree operations, we need the concept of \emph{perfect dominance} which is stronger than the notion of domination introduced in Section~\ref{sec:dominance}.
Let $U,V$ be two finite subsets of $\mL$. A function $f \colon \mL^2 \to \mL$ is called \emph{perfectly dominated by the first argument on $U \times V$} if the following conditions holds.
\begin{itemize}
\item For all $u_1,u_2,u_3 \in U$ and $v_1,v_2,v_3 \in V$, if $u_1|u_2u_3$ then
\begin{displaymath}
f(u_1,v_1)|f(u_2,v_2)f(u_3,v_3)\;.
 \end{displaymath}
\item for all $u \in U$ and $v_1,v_2,v_3 \in V$, if $v_1|v_2v_3$ then $f(u,v_1)|f(u,v_2)f(u,v_3)$. 
\end{itemize}
Similarly, $f \colon \mL^2 \to \mL$
is called \emph{perfectly dominated by the second argument on $U \times V$} if the function $(x,y)\mapsto f(y,x)$ is perfectly dominated by the first argument on $U \times V$. 

Let $f \colon \mL^2\to \mL$ be an injective function, and $U$ be a finite subset of $\mL$. We inductively define whether $f$ is \emph{semidominated on $U\times U$} as follows.
\begin{itemize}
  \item If $U=\emptyset$ or $|U|=1$ then $f$ is semidominated on $U\times U$.
  \item Otherwise, $f$ is semidominated on $U\times U$ if 
  there are subsets $U_1,U_2 \subseteq U$ such that $U=U_1\cup U_2$, $U_1|U_2$, and 
  the following conditions hold.
\begin{itemize}
  \item $f$ is semidominated on $U_1\times U_1$ and $U_2\times U_2$,
  \item $f(U_1\times U_1)|f(U_2\times U_2)$ and $f(U_1\times U_2)|f(U_2\times U_1)$,
  \item $f((U_1\times U_1)\cup (U_2\times U_2))|f((U_1\times U_2)\cup (U_2\times U_1))$, and
  \item $f$ is perfectly dominated by the first argument on $U_1\times U_2$ and $f$ is perfectly dominated by the second argument on  $U_2\times U_1$.
\end{itemize}  
\end{itemize}  

\begin{definition}
An operation $f \colon \mL^2\to \mL$ is called an \emph{affine tree operation} if 
$f$ is semidominated on $U\times U$
for every finite subset $U$ of $\mL$.
\end{definition}

\subsection{Existence of Affine Tree Operations}
\label{sec:txoperation}
We prove the existence of an affine tree operation
which we denote by $\tx$. 
We start with a finite version of this statement.
We write $U \prec V$ if $u \prec v$ for every $u \in U$ and $v \in V$. 
In the following, we use the notation $f\upharpoonright_X$ to denote the restriction of a function $f$ on a subset $X$ of the domain of $f$.

\begin{lemma}\label{lem:finitetx}
For every finite subset $X$ of $\mL$ there is a function $f \colon X\times X\to \mL$ such that 
\begin{itemize}
\item for every non-empty subset $U$ of $X$ the function $f$ is semidominated on $U\times U$, and 
\item 
for all $U_0,U_1 \subseteq X$ with $U_0|U_1$ and $U_0 \prec U_1$
 we have that $f$
is perfectly dominated by the first argument on $U_0 \times U_1$. 
\end{itemize}
\end{lemma}
\begin{proof}
The proof is by induction on $|X|$. The claim is trivial if $|X|=1$ since the function $f \colon X\times X\to \mL$ given by $f(x,x):=a$ for an arbitrary $a \in \mL$ has the required properties. Suppose that $|X| \geq 2$. Let $\{X_0,X_1\}$ be a partition of $X$ such that $X_0|X_1$ and $X_0 \prec X_1$. By the inductive assumption there are functions $f_{0,0} \colon X_0\times X_0\to \mL$ and $f_{1,1} \colon X_1\times X_1\to \mL$ such that $f_{0,0}$ and $f_{1,1}$ satisfy the conditions of the claim for $X_0$ and for $X_1$, respectively. We can assume that 
$f_{0,0}(X_0\times X_0)|f_{1,1}(X_1\times X_1)$ since otherwise choose $\alpha,\beta\in \Aut(\mL;C)$ such that $\alpha(f_{0,0}(X_0\times X_0))|\beta(f_{1,1}(X_1\times X_1))$. Then we continue the argument with $\alpha \circ f_{0,0}$ and $\beta\circ f_{1,1}$ instead of $f_{0,0}$ 
and $f_{1,1}$. 

Let $f_{0,1} \colon X_0\times X_1\to \mL$ and $f_{1,0} \colon X_1\times X_0\to \mL$ be such that $f_{0,1}$ is perfectly dominated by the first argument on $X_0 \times X_1$ and $f_{1,0}$ is perfectly dominated by the second argument on $X_1 \times X_0$.  We can assume that 
\[f_{0,0}(X_0\times X_0)\cup f_{1,1}(X_1\times X_1) \big |f_{0,1}(X_0\times X_1)\cup f_{1,0}(X_1\times X_0)\]
and that 
\[f_{0,1}(X_0\times X_1)\big |f_{1,0}(X_1\times X_0)\] 
by reasoning as above. Let $f \colon X\times X\to \mL$ be given by $f(x,y):=f_{i,j}(x,y)$ if $x\in X_i,y\in X_j$. 
We show that $f$ satisfies the two conditions of the statement. 
If $U\subseteq X_0$ (resp.  $U\subseteq X_1$) then we are done since $f_{0,0}$ (resp. $f_{1,1}$) is semidominated on $U \times U$. Otherwise let $\{U_0,U_1\}$ be a partition of $U$ such that $U_0|U_1$ and $U_0\prec U_1$. Clearly we have $U_0\subset X_0$ and $U_1\subset X_1$. It is straightforward to verify that 
\begin{align*}
&& f(U_0\times U_0)\cup f(U_1\times U_1) \; \big | & \; f(U_0\times U_1)\cup f(U_1\times U_0) \, , \\
&&  f(U_0\times U_0) \; \big | & \; f(U_1\times U_1)\, , \\
\text{ and} && f(U_0\times U_1) \; \big | & \; f(U_1\times U_0) \, .
\end{align*}
Since $f_{0,1}$ (resp. $f_{1,0}$) is perfectly dominated by the first argument (resp. the second argument) on $X_0\times X_1$ (resp. on $X_1\times X_0$), it follows that $f$ is perfectly dominated by the first argument on $U_0\times U_1$ (resp. perfectly dominated by the second argument on $U_1\times U_0$). By the inductive assumption $f_{0,0}$ (resp. $f_{1,1}$) is semidominated on $U_0\times U_0$ (resp. $U_1\times U_1$), and so we have that $f$ is semidominated on $U_0\times U_0$ and on $U_1\times U_1$. 
\end{proof}

We can now prove that there exists an affine tree operation.

\begin{proposition}\label{prop:txexistence}
There exists an affine tree operation which we call $\tx$.
The operation $\tx$ has the property that for all
finite $X \subset \mL$ there exists an $\alpha \in \Aut(\mL;C)$ such that $\tx(x,y) = \alpha(\tx(y,x))$ for
all $x,y \in X$. 
\end{proposition}
\begin{proof}
Let $X$ be an non-empty finite subset of $\mL$. Let $f,g \colon X\times X\to \mL$ be two arbitrary 
binary functions that satisfy the two conditions of Lemma~\ref{lem:finitetx}. 
We prove by induction on $|X|$ that there is an automorphism $\alpha\in\Aut(\mL;C)$ such that $f(x,y)=\alpha(g(x,y))$ 
for arbitrary $x,y\in X$. This is trivial 
if $|X|=1$. If $|X|\geq 2$, let $\{X_0,X_1\}$ be a partition of $X$ such that $X_0|X_1$ and $X_0 \prec X_1$. 
By the inductive assumption, there exist $\alpha_{0,0},\alpha_{1,1}\in \Aut(\mL;C)$ such 
that $f(x,y)=\alpha_{0,0}(g(x,y))$ for all $x,y\in X_0$ and $f(x,y)=\alpha_{1,1}(g(x,y))$ 
for all $x,y\in X_1$. Since $f$ and $g$ are perfectly dominated by the first argument 
on $X_0\times X_1$ and by the second argument on $X_1\times X_0$, there are  
$\alpha_{0,1},\alpha_{1,0}\in \Aut(\mL;C)$ such that $f(x,y)=\alpha_{0,1}(g(x,y))$ 
for any $(x,y)\in X_0\times X_1$ and $f(x,y)=\alpha_{1,0}(g(x,y))$ for any 
$(x,y)\in X_1\times X_0$. Let $\beta \colon g(X\times X)\to f(X\times X)$ be given 
by $\beta(g(x,y))=\alpha_{i,j}(g(x,y))$ when $x\in X_i$ and $y\in X_j$. It follows 
from the conditions
\begin{align*}
f(X_0\times X_0)\cup f(X_1\times X_1) \; \big | & \; f(X_0\times X_1)\cup f(X_1\times X_0) \\
f(X_0\times X_0)\; \big | & \; f(X_1\times X_1) \\ 
f(X_0\times X_1)\; \big | & \; f(X_1\times X_0)\;,
\end{align*}
and 
\begin{align*}
g(X_0\times X_0)\cup g(X_1\times X_1)\; \big | & \; g(X_0\times X_1)\cup g(X_1\times X_0) \\
g(X_0\times X_0) \; \big | & \; g(X_1\times X_1) \\
g(X_0\times X_1) \; \big | & \; g(X_1\times X_0)
\end{align*} 
that $\beta$ is a partial isomorphism from $(\mL;C)$ to $(\mL;C)$.  By the homogeneity of $(\mL;C)$, it can be extended to an automorphism $\alpha$ of $(\mL;C)$. The claim follows.

This claim combined with Lemma~\ref{lem:finitetx} implies that for arbitrary finite subsets $X,Y$ of $\mL$ such that $X\subseteq Y$ 
and any function $f \colon X\times X\to \mL$ which satisfies the conditions in Lemma~\ref{lem:finitetx}, there is a function 
$f' \colon Y\times Y\to \mL$ which (1) satisfies the conditions in Lemma~\ref{lem:finitetx} and (2) satisfies $f' \upharpoonright_{X\times X}=f$. 
Since $\mL$ is 
countable, it follows that there exists an operation $\tx \colon \mL^2 \to \mL$ such that 
$\tx$ is semidominated on $U\times U$ for every finite subset $U$ of $X$. 

We prove the second statement of the proposition by induction on $|X|$. Clearly the claim holds if $|X|\leq 1$. 
We consider the case $|X|\geq 2$. Let $\gamma \colon 
\tx(X^2) \to \tx(X^2)$ be given by $\gamma(\tx(x,y))=\tx(y,x)$ for any $x,y\in X$. We will show that $\gamma$ is a partial isomorphism of $(\mL;C)$. Let $X_1,X_2$ be a partition of $X$ such that $X_1\cup X_2=X$ and $X_1|X_2$ holds in $(\mL;C)$. We can assume that $X_1\prec X_2$. Since $\tx$ is perfectly dominated by the first argument on $X_1\times X_2$, perfectly dominated by the second argument on $X_2\times X_1$ and $\tx(X_1\times X_2)|\tx(X_2\times X_1)$, the map $\gamma\upharpoonright_{\tx((X_1\times X_2)\cup (X_2\times X_1))}$ is a partial isomorphism of $(\mL;C)$. By the inductive assumption we also have that $\gamma\upharpoonright_{\tx(X_1\times X_1)}$ and $\gamma\upharpoonright_{\tx(X_2\times X_2))}$ are partial isomorphisms of $(\mL;C)$. Let $A_1:=\tx(X_1\times X_1),A_2:=\tx(X_2\times X_2), A_3:=\tx((X_1\times X_2)\cup (X_2\times X_1))$. By the properties of $\tx$ we have $A_i|A_j$ for all $i\neq j$, and $\gamma(A_i)=A_i$ for $1\leq i\leq 3$. It follows that $\gamma$ is a partial isomorphism of $(\mL;C)$ which can be extended to an automorphism $\alpha$ of $\Aut(\mL;C)$ by the homogeneity of $(\mL;C)$. 
\end{proof}

\subsection{The Operation $\tx$ and Affine Horn Formulas}
Note that $\tx$ is injective and preserves $C$.
In fact, $\tx$ preserves the much larger class of affine Horn formulas, too. We first show that $\tx$ preserves affine formulas (Definition~\ref{def:phi-affine}).

\begin{lemma}\label{lem:affinepreservation}
Let $R \subseteq \{0,1\}^k$ be such that $R\cup \{(0,\dots,0),(1,\dots,1)\}$ is affine. Then $\tx$ preserves the formula $\phi_R$.
\end{lemma}
\begin{proof} 
Let $a=(a_1,a_2,\dots,a_k)$ and $b=(b_1,b_2,\dots,b_k)$ be two tuples from $\mL^k$ that satisfy $\phi_R$. 
Let $A$ and $B$ denote the sets 
$\{a_1,a_2,\dots,a_k\}$ 
and $\{b_1,b_2,\dots,b_k\}$, respectively. 
We first consider the case $|A|=1$. 
If $|B|=1$ then the claim is trivial. Assume
instead that $|B|>1$. 
The following cases are exhaustive.
\begin{itemize}
  \item $A|B$. Since $\tx$ is perfectly dominated on $A\times B$, it follows that $\tx(a,b)$ and $b$ have a common split vector. 
Hence, $\tx(a,b)$ satisfies $\phi_R$.
\item There are $B_0,B_1 \subset B$ such that $B_0|B_1$ and $(B_0\cup A)|B_1$. Since $\tx$ is semidominated on 
$A\cup B$, we have $\tx((B_0\cup A)\times (B_0\cup A))|\tx((B_0\cup A)\times B_1)$ so $\tx(A\times B_0)|\tx(A\times B_1)$. 
This implies that $\tx(a,b)$ and $b$ have a common split vector,
and therefore $\tx(a,b)$ satisfies $\phi_R$. 
\end{itemize}   
We argue similarly in the case $|B|=1$. It remains to consider the case that $|A| \geq 2$ and $|B| \geq 2$. 
Let $X:=A\cup B=\{x_1,x_2,\dots,x_m\}$. Let $x$ denote 
the tuple $(x_1,x_2,\dots,x_m)$, and let
$s$ be a split vector of $x$. 
In the following, we view $s$ as a function 
from $\{x_1,\dots,x_m\}$ to $\{0,1\}$,
mapping $x_i$ to $s_i$. If $s$ is constant
on $A$ and on $B$ then $\tx$ is perfectly dominated 
on $A\times B$. 
This implies that $\tx(a,b)$ has a common
split vector with $a$ or with $b$. Hence, 
$\tx(a,b)$ satisfies $\phi_R$, and we are done. 

Next, consider the case that $s$ is constant on $A$, but not on $B$. 
Let $B_0:=\{b_i: s(b_i)=s(a_1)\}$ 
and $B_1:=\{b_i:s(b_i)\neq s(a_1)\}$. 
Since $(A\cup B_0)|B_1$, it follows that $\tx((A\cup B_0)^2)|\tx((A\cup B_0)\times B_1)$, because
$\tx$ is semidominated. 
Therefore, $\tx(A\times B_0)|\tx(A\times B_1)$. 
Hence, $\tx(a,b)$ has a common split vector with $b$, and hence satisfies $\phi_R$. 
We argue similarly for the case when $s$ is constant on $B$, but not constant on $A$. We are left with the case that $s$ is neither constant on $A$ nor on $B$. 
Let $\{X_0,X_1\}$ be a partition of $X$ such that $X_0|X_1$.  
Consider the case that $a$ and $b$ have a common split vector $s'$. It follows from $\tx(X_0\times X_0)|\tx(X_1\times X_1)$ that $\tx(a,b)$ also has the split vector $s'$, and hence $\tx(a,b)$ satisfies $\phi_R$. 

Finally, suppose that $a$ and $b$ do not have a common split vector. Let $s' \in R$ be a split vector of $a$, and let $s'' \in R$ be a split vector of $b$. Let $U:=\{\tx(a,b)_i : s'_i=s''_i\}$  and 
$V:=\{\tx(a,b)_i:s'_i\neq s''_i\}$; by assumption, $U$ and $V$ 
are non-empty. 
Then $f((X_0\times X_0)\cup (X_1\times X_1))|f((X_0\times X_1)\cup (X_1\times X_0))$ implies that $U|V$. Therefore, $s' \oplus s''$ is a split vector of $\tx(a,b)$. Since $R\cup \{(0,0,\dots,0),(1,1,\dots,1)\}$ is affine, $s' \oplus s'' \in R$, and
$\tx(a,b)$ satisfies $\phi_R$. 
\end{proof}

We will now generalise Lemma \ref{lem:affinepreservation} to arbitrary affine Horn formulas.

\begin{proposition}\label{prop:txpreservation}
The function $\tx$ preserves all affine Horn formulas. 
\end{proposition}
\begin{proof}
It suffices to show that $\tx$ preserves affine Horn clauses, this is, formulas
of the form 
\begin{displaymath}
x_1\neq y_1\vee x_2\neq y_2\vee \cdots \vee x_n\neq y_n \vee \phi(z_1,z_2\dots,z_k)\;, \end{displaymath}
where $\phi$ is an affine formula. Let 
$u,u'$ be two tuples which satisfy this clause, and let
$u'' := \tx(u,u')$. 
If $u$ or $u'$ satisfies $x_i \neq y_i$ for some $i \in \{1,\dots,n\}$,
then $u''$ satisfies $x_i \neq y_i$, too, since $\tx$ is injective. Hence, $u''$ satisfies the clause. 
If $u$ and $u'$ satisfy $x_i = y_i$ for all 
$1\leq i \leq k$, then $u$ 
and $u'$ must satisfy $\phi$. We have seen in 
Lemma~\ref{lem:affinepreservation} that
$\tx$ preserves the affine formula $\phi$,
so it follows that $u''$ satisfies $\phi$, and therefore
satisfies the clause.
\end{proof}

\begin{corollary}\label{cor:horn-tx}
Let $\Gamma$ be a reduct of $(\mL;C)$.
Then the following are equivalent. 
\begin{enumerate}
\item $\Gamma$ is preserved by $\tx$;
\item all relations in $\langle \Gamma \rangle$ are affine Horn;
\item all relations in $\Gamma$ are affine Horn.
\end{enumerate}
\end{corollary}
\begin{proof}
1 implies 2. The operation $\tx$ preserves $C$.
Hence, the expansion $\Gamma'$ of $\Gamma$ 
by the additional relation $C$ is also preserved by $\tx$. 
Note that the operation $\tx$ 
does not preserve the relation $N$. To see this,
arbitrarily choose pairwise distinct elements 
$x,y,y',z \in \mL$ such that $xy|z$ and $x|y'z$. 
Clearly, $(x,y,z) \in N$ and $(x,y',z) \in N$. 
By using the semidomination property of $\tx$ for 
$U:=\{x,y\}$ and $V:=\{y',z\}$ (note that $U|V$), we have $f(x,x)f(z,z)|f(y,y')$. This implies that $(f(x,x),f(y,y'),f(z,z))$ 
is not in $N$. Hence, $\tx$ does not preserve $N$,
and $N \notin \langle \Gamma' \rangle$ by Theorem~\ref{thm:inv-pol}. 
Corollary~\ref{cor:syntax} shows that all relations
in $\langle \Gamma \rangle$ are affine Horn. 

2 implies 3. Trivial. 

3 implies 1. Follows from Proposition~\ref{prop:txpreservation}. 
\end{proof}

\subsection{Symmetry Modulo Endomorphisms}
\label{sect:lift}
In this section we prove the existence of endomorphisms $e_1,e_2$
of $(\mL;C)$ such that
\begin{displaymath} 
e_1(\tx(x,y)) = e_2(\tx(y,x)) \; .
\end{displaymath}

The idea of the following lemma comes from the proof of 
 Proposition~6.6 in Bodirsky, Pinsker, and Pongracz~\cite{BPP-projective-homomorphisms}. 
 
\begin{lemma}\label{lem:lift}
Let $\Gamma$ be $\omega$-categorical, and
$f \in \Pol^{(2)}(\Gamma)$. 
Suppose that for every finite subset $A$ of 
the domain $D$ of $\Gamma$ there exists an
$\alpha \in \Aut(\Gamma)$ such
that $f(x,y) = \alpha(f(y,x))$ for all $x,y \in A$. Then 
there are $e_1,e_2 \in \overline{\Aut(\Gamma)}$
such that $e_1(f(x,y)) = e_2(f(y,x))$ for all $x,y \in D$. 
\end{lemma}
\begin{proof}
Construct a rooted tree as follows. 
Each vertex of the tree lies on some level $n \in \mathbb N$.
Let $d_1,d_2,\dots$ be an enumeration
of $D$. Let $F_n$ be the set of partial isomorphisms of
$\Gamma$ with domain $D_n := \{d_1,\dots,d_n\}$, and define
the equivalence relation $\sim$ on $F_n^2$ as follows:
$(\alpha_1,\alpha_2) \sim (\beta_1,\beta_2)$
if there exists a $\delta \in \Aut(\Gamma)$
such that $\alpha_i = \delta \circ \beta_i$ for $i \in \{1,2\}$.
Note that for each $n$, the relation $\sim$ has finitely
many equivalence classes on $F_n^2$, by the $\omega$-categoricity of $\Gamma$ and Theorem~\ref{thm:Ryll}. 

Now, the vertices of the tree on level $n$ are precisely 
the equivalence classes $E$ of $\sim$ on $F_n^2$ 
such that for every (equivalently, for some) $(\alpha_1,\alpha_2) \in E$ and $x,y \in D$
satisfying $\{f(x,y),f(y,x)\} \subseteq D_n := \{d_1,\dots,d_n\}$
we have $\alpha_1(f(x,y))=\alpha_2(f(y,x))$.  

The equivalence class of the partial map with the empty domain $D_0$ becomes the root of the tree, on level $n=0$.  
We define adjacency in the tree by restriction as follows: 
when $E$ is a vertex on level $n$, and $E'$ a vertex on level $n+1$,
and $E$ contains $(\alpha_1,\alpha_2)$ and $E'$ contains
$(\alpha'_1,\alpha'_2)$ such that $\alpha_1 = \alpha'_1\upharpoonright_{D_n}$
and $\alpha_2 = \alpha'_2\upharpoonright_{D_n}$, then we make $E$ and $E'$
adjacent in the tree. 
Note that 
the resulting rooted tree is finitely branching. 
By assumption, the tree has vertices on all levels. 
Hence, by K\"onig's tree lemma, there exists an infinite path $E_0,E_1,E_2,\dots$ in the tree, where $E_i$ is from level $i \in \mathbb N$. 

We define $e_1,e_2 \in \overline{\Aut(\Gamma)}$ as follows. 
Suppose $e_1,e_2$ are already defined on $D_n$ 
such that $\alpha_1 := e_1\upharpoonright_{D_n}$, $\alpha_2 := e_2\upharpoonright_{D_n}$, and 
$(\alpha_1,\alpha_2) \in E_n$. 
We want to define $e_1$ and $e_2$ on $d_{n+1}$, and we will do it in such a way that 
$(e_1\upharpoonright_{D_{n+1}},e_2\upharpoonright_{D_{n+1}}) \in E_{n+1}$. 
Since $E_n$ and $E_{n+1}$ are adjacent, there exist $(\beta_1,\beta_2) \in E_n$ and $(\beta'_1,\beta_2') \in E_{n+1}$ such that $\beta_1 = \beta_1' \upharpoonright_{D_n}$
and $\beta_2 = \beta_2'\upharpoonright_{D_{n}}$. 
By the definition of $\sim$ there
exists a $\delta \in \Aut(\Gamma)$ such that 
$\alpha_1  =  \delta \circ \beta_1$ and $\alpha_2 =  \delta \circ \beta_2$. 
For $j \in \{1,2\}$, define $\alpha'_j := \delta \circ \beta_j'$ 
so that $(\alpha'_1,\alpha'_2) \in E_{n+1}$
and observe that  

\begin{displaymath}
\alpha'_j \upharpoonright_{D_n} :=  \; (\delta \circ \beta_j') \upharpoonright_{D_n}  =  \delta \circ \beta_j = \alpha_j \; ,
\end{displaymath}
and hence that $\alpha'_j$ extends $\alpha_j$. 
Define $e_j(d_{n+1}) := \alpha'_j(d_{n+1})$. 
\end{proof}

\begin{corollary}\label{cor:lift}
There are endomorphisms $e_1,e_2$
of the structure $(\mL;C)$ such that $e_1(\tx(x,y)) = e_2(\tx(y,x))$. 
\end{corollary}
\begin{proof}
By Proposition~\ref{prop:txexistence}, for any finite $X \subset \mL$ there is an $\alpha \in
\Aut(\mL;C)$ such that $\tx(x,y) = \alpha(\tx(y,x))$ for all $x,y \in X$. Thus, Lemma~\ref{lem:lift}
applies to $f := \tx$ and $\Gamma := (\mL;C)$.
\end{proof}

\section{Main Results}
\label{sect:main}
In this section we complete (in Section~\ref{sect:complexity}) the proof of the complexity dichotomy for phylogeny problems that we announced in Theorem~\ref{thm:complexity-main}, via the reformulation
as CSPs for reducts of $(\mL;C)$ given in Theorem~\ref{thm:main-polym}. 
But our results 
are much stronger than the complexity classification
from Theorem~\ref{thm:main-polym}. 
We present a dichotomy for reducts of $(\mL;C)$
which remains interesting even if P=NP, and which 
we view as a fundamental result not just in the  
context of constraint satisfaction. 
Our dichotomy can be phrased in various but equivalent ways, using terminology from topology, universal algebra, or model theory. We introduce the necessary concepts in Section~\ref{sect:interpret}, \ref{sect:clone-homos}, and~\ref{sect:eqs}, and then state in Section~\ref{sect:results} how they are linked
together in the strongest formulation of our results.

\subsection{Primitive Positive Interpretations}
\label{sect:interpret}
Primitive positive interpretations are often used for proving NP-hardness results; we refer
the reader to Bodirsky~\cite{Bodirsky-HDR} for more information about this.
We will often consider the relation
$\NAE = \{0,1\}^3 \setminus \{(0,0,0),(1,1,1)\}$ in connection with
primitive positive interpretations. The problem
$\Csp(\{0,1\}; \NAE)$ is 
 called \emph{positive Not-All-Equal 3SAT} by Garey \& Johnson~\cite{GareyJohnson}
 and it is known to be NP-complete. 

\begin{definition}
A relational $\sigma$-structure $\Delta$ has a \emph{(first-order) interpretation $I$} in a $\tau$-structure $\Gamma$ if there exists a natural number $d$, called the \emph{dimension} of $I$, and
\begin{itemize}
\item a $\tau$-formula $\delta_I(x_1, \dots, x_d)$ -- called the \emph{domain formula},
\item for each atomic $\sigma$-formula $\phi(y_1,\dots,y_k)$ a $\tau$-formula $\phi_I(\overline x_1, \dots, \overline x_k)$ where the $\overline x_i$ denote disjoint $d$-tuples of distinct variables -- called the \emph{defining formulas},
\item a surjective map $h$ from all $d$-tuples of elements of $\Gamma$
that satisfy $\delta_I$ to $\Delta$ -- called the \emph{coordinate map},
\end{itemize}
such that for all atomic $\sigma$-formulas $\phi$ and all tuples in the domain of $h$ 
\begin{displaymath}
\Delta \models \phi(h(\overline a_1), \dots, h(\overline a_k)) \; 
 \Leftrightarrow \; 
\Gamma \models \phi_I(\overline a_1, \dots, \overline a_k) \; .
\end{displaymath}
\end{definition}
If the formulas $\delta_I$ and $\phi_I$ are all primitive positive, 
we say that the interpretation $I$ is \emph{primitive positive}.
We say that $\Delta$ is \emph{primitive positive interpretable (or pp interpretable)} with parameters
in $\Gamma$ if $\Delta$ has an interpretation $I$ where the formulas $\delta_I$ and $\phi_I$ may involve
elements from $\Gamma$ (the \emph{parameters}), 
that is, the interpretations in the expansion of $\Gamma$ by finitely many constants.
The importance of primitive positive interpretations
in the context of the CSP comes from the following lemma. 

\begin{lemma}[Proposition 3 in Bodirsky~\cite{BodirskySurvey}]
\label{lem:pp-interpret}
Let $\Gamma$ and $\Delta$ be structures 
with finite relational signature. 
Suppose that $\Gamma$ is
$\omega$-categorical 
and that $\Delta$ has a primitive positive interpretation
in $\Gamma$. Then there is a polynomial-time reduction from $\Csp(\Delta)$ to $\Csp(\Gamma)$.
If $\Gamma$ is a model-complete core, then
the interpretation might even be with parameters
and the conclusion of the lemma still holds. 
\end{lemma}

We present two primitive positive interdefinability results in this section.
The first one (Proposition~\ref{prop:Nd-interprets}) is concerned with the relation $N$
while the second one (Proposition~\ref{prop:Qd-interprets}) is concerned with the relation $Q$.

\begin{proposition}\label{prop:Nd-interprets}
Let $a,b$ be arbitrary distinct members of $\mL$.
Then the structure $(\{0,1\}; \NAE)$
is primitive positive interpretable in $(\mL;N,a,b)$. 
\end{proposition}
\begin{proof}
We freely use the relation $N_d$ in primitive positive
formulas, since $N_d \in \langle (\mL;N) \rangle$
by Lemma~\ref{lem:basicrelsdefs}.  
The dimension of the interpretation is one.
The domain formula is $N_d(a,x,b)$. 
The coordinate map $c$ sends $x$ to $0$ if $ax|b$,
and to $1$ if $a|xb$. The defining formula $\phi(x,y,z)$ for 
the ternary relation that we want to interpret
is 
\begin{displaymath}
\exists w_1,w_2 \, \big (N_d(x,w_1,y) \wedge N_d(w_1,w_2,z) \wedge N_d(w_1,a,w_2) \wedge N_d(w_1,b,w_2) \big) \; .
\end{displaymath}
We have to verify that $(c(x),c(y),c(z)) \in \NAE$ if and only if 
$\phi(x,y,z)$ holds in $(\L;N_d)$. 

First suppose that $c(x)=c(y)=c(z)=0$. Then $ax|b$, $ay|b$, and $az|b$. 
If $\phi(x,y,z)$ then $xw_1|y$ or $x|w_1y$, 
so in any solution we must have $axyzw_1|b$. 
Another consequence of $\phi(x,y,z)$ is that
$zw_2|w_1$ or $z|w_2w_1$. Hence, in any solution
we must have $axyzw_1w_2|b$.
Finally, $\phi(x,y,z)$ implies $bw_1|w_2$
or $bw_2|w_1$, in contradiction to $axyzw_1w_2|b$.
The situation that 
$\phi(x,y,z)$ and
$c(x)=c(y)=c(z)=1$ can be ruled out analogously, since the 
interpreting formula is symmetric in $a$ and $b$. 

Now suppose that $c(x)=c(y)=0$ and $c(z) = 1$.
In this case we can satisfy $\phi$ by assigning values
to $w_1,w_2$ such that $axyw_1|bzw_2$, $a|xyw_1$, $x|yw_1$,
and $b|zw_2$. 
Again, the case that $c(x)=c(y)=1$ and $c(z) = 0$
can be treated analogously. 

Next, consider that case that $c(x)=c(z)=0$ and $c(y)=1$. Then
we can satisfy $\phi$ by assigning values to $w_1,w_2$ such that 
$axzw_1 | byw_2$, $a|xzw_1$, $x|zw_1$, and $b|yw_2$. 
Note that the interpreting formula is also symmetric in $x$ and $y$. Hence, the case that $c(y)=c(z)=0$
and $c(x)=1$ can be treated analogously. 
Finally, the remaining two cases $c(x)=0, c(y)=c(z)=1$, and $c(y)=0, c(x)=c(z)=1$
are analogous to the previous two by the symmetry
of $a$ and $b$. 
\end{proof}


\begin{proposition}\label{prop:Qd-interprets}
The structure $\Delta := (\{0,1\}; \NAE)$
has a primitive positive interpretation in $(\mL;Q_d,a,b,c)$, where $a,b,c \in \mL$ are three pairwise distinct constants. 
\end{proposition}
\begin{proof}
Let $T_d'(x,y,z)$ be the relation defined by $Q_d(a,x,y,z)$. 
It follows from Lemma~14 in Bodirsky, Jonsson, and Pham~\cite{BodJonsPham} 
that $(\mL\backslash \{a\};T_d')$ is isomorphic to $(\mL;T_d)$. 
Let $h \colon (\mL\backslash \{a\}) \to \mL$ be such an isomorphism. 
Since $(\mL;T_d)$ is 2-transitive,
we can assume without loss of generality that $h$ fixes $b$ and $c$. 
It is straightforward to verify that there exists a 
one-dimensional 
primitive positive interpretation of $(\mL;T_d;Q_d,b,c)$ in $(\mL;Q_d,a,b,c)$: as coordinate we choose $h$, 
the interpreting formula for $T_d(x,y,z)$ is $Q_d(a,x,y,z)$,
the interpreting formula for $x=b$ is $x=b$, for $x=c$ is $x=c$, and
for $Q_d(x,y,z,t)$ is $(Q_d(a,z,x,y) \wedge Q_d(a,t,x,y)) \vee (Q_d(a,x,z,t) \wedge Q_d(a,y,z,t))$. 

Since the Boolean split relation of $Q_d$ is not affine, 
Lemma~\ref{lem:affine} implies
that $N_d$ is primitive positive definable in $(\mL;T_d,Q_d)$; recall that we throughout Section~\ref{sect:violating} 
tacitly assume that $N_d \not\in \langle \Gamma \rangle$ and $T_d \in \Gamma$.
Hence, by Proposition~\ref{prop:Nd-interprets}, 
the structure $\Delta$
has a primitive positive interpretation in $(\mL;N_d,b,c)$. It follows that $\Delta$ has a primitive positive interpretation also in $(\mL;Q_d,a,b,c)$.
\end{proof}


\subsection{Clone Homomorphisms}
\label{sect:clone-homos}
Let ${\cal C}$ and ${\cal D}$ denote two function clones as defined in Section~\ref{sect:algebraic}.
A function $\xi \colon {\cal C} \rightarrow {\cal D}$ is called a {\em clone homomorphism} if 
it sends every projection in ${\cal C}$ to the corresponding projection in ${\cal D}$, and
it satisfies the identity 
\[\xi(f(g_1,\dots,g_n))=\xi(f)(\xi(g_1),\dots,\xi(g_n))\]
for all $n$-ary $f \in {\cal C}$ and all $m$-ary $g_1,\dots,g_n \in {\cal C}$. Such a homomorphism
$\xi$ is \emph{continuous} if the map $\xi$
is continuous with respect to the topology of pointwise convergence, where the closed
sets are precisely the sets that are locally closed
as defined in Section~\ref{sect:algebraic}. 

The importance of continuous clone homomorphisms
in the context of primitive positive interpretations
comes from the following. 

\begin{theorem}[Theorem 1 in Bodirsky~\&~Pinsker~\cite{Topo-Birk}]\label{thm:topo-birk}
Let $\Gamma$ be $\omega$-categorical and
$\Delta$ be finite. Then $\Delta$ has a primitive
positive interpretation in $\Gamma$ if and only
if $\Pol(\Gamma)$ has a continuous clone homomorphism to $\Pol(\Delta)$. 
\end{theorem}

The most relevant situation are primitive positive interpretations of hard Boolean CSPs and in this case Theorem~\ref{thm:topo-birk} has an equivalent formulation that is given below. We write $\bf 1$ for the clone
on the set $\{0,1\}$ 
that only contains the projections. 

\begin{theorem}[Bodirsky \& Pinsker~\cite{Topo-Birk}]\label{thm:topo-birk-2}
Let $\Gamma$ be an $\omega$-categorical structure.
Then $(\{0,1\};\NAE)$ has a primitive positive
interpretation in $\Gamma$ if and only if
$\Pol(\Gamma)$ has a continuous clone homomorphism to $\bf 1$. 
\end{theorem}

\subsection{Taylor Operations modulo Endomorphisms}
\label{sect:eqs}
A polymorphism $f$ of $\Gamma$ of arity $n \geq 2$ is called
a \emph{Taylor polymorphism modulo endomorphisms of $\Gamma$}
if for every $i \leq n$ there are endomorphisms
$e_1,e_2$ of $\Gamma$ and $x_1,\dots,x_n,y_1,\dots,y_n \in \{x,y\}$
with $x_i \neq y_i$
such that the following holds.
\begin{displaymath}
 \forall x,y. \, e_1(f(x_1,\dots,x_n)) = e_2(f(y_1,\dots,y_n))\;.
\end{displaymath}
A special case of Taylor polymorphisms modulo
endomorphisms are \emph{symmetric polymorphisms modulo endomorphisms}, that is, the existence 
of an $f$ and endomorphisms $e_1$ and $e_2$
such that $\forall x,y. \,  e_1(f(x,y)) = e_2(f(y,x))$. 

In an $\omega$-categorical model-complete core the existence of Taylor polymorphisms modulo endomorphisms rules out the existence of an interpretation of
$(\{0, 1\}; \NAE)$ with parameters (as we explain below, this follows from Theorem 8.6 and the proof of Lemma 8.7 below). Recently,  Barto and Pinsker showed that
the existence of a Taylor polymorphism modulo endomorphisms is in fact equivalent to the non-existence of an interpretation of $({0, 1}; \NAE)$ with parameters~\cite{BartoPinsker16}. For the case of phylogeny problems, Theorem~\ref{thm:pre-main} below implies a stronger result; that is, the non-interpretability of $(\{0,1\};\NAE)$ with parameters in a model-complete core reduct of $(\mL;C)$ is equivalent to the existence of a symmetric polymorphism modulo endomorphisms.



The following lemma is stated for symmetric polymorphisms modulo endomorphisms, however the proof of the lemma can be adapted to the case of Taylor polymorphisms modulo endomorphisms. The statement is restricted for simplicity of
notation, but also because we only need it
for this special case in the statement of our main
result about reducts of $(\mL;C)$. 

\begin{lemma} \label{lem:clonehomo}
Let $\Gamma$ be an $\omega$-categorical model-complete core such that
$\Gamma$ has a symmetric polymorphism modulo endomorphisms. Then for any elements
$a_1,a_2,\dots,a_n$ of $\Gamma$ there is no clone homomorphism from
$\Pol(\Gamma,a_1,a_2,\dots,a_n)$ to $\bf{1}$.
\end{lemma}
\begin{proof}
Let $D$ denote the domain of $\Gamma$. By the assumption there exist a binary polymorphism $f$ of $\Gamma$ and $e_1,e_2\in \End(\Gamma)$ such that $e_1(f(x,y)) = e_2(f(y,x))$ for all
$x,y\in D$. Let
$\hat f \colon D \to D$ be given by
$\hat f(x):=f(x,x)$ for all $x\in D$.
Clearly, $\hat f$ is an endomorphism of $\Gamma$.
Let $a$ denote $(a_1,a_2,\dots,a_n)$. Then $a$, $\hat f(a)$, and $e_1(\hat f(a)) = e_2(\hat f(a))$ lie in the same orbit of $\Aut(\Gamma)$
because $\Gamma$ is a model-complete core.
Let $\alpha,\beta \in \Aut(\Gamma)$ be
such that $\alpha e_1(\hat f(a))=a$
and $\beta(\hat f(a)) = a$.
Let $h_1 := \alpha e_1 \beta^{-1}$ and
$h_2 := \alpha e_2 \beta^{-1}$,
and $g := \beta f$.
Clearly, we have $g(a,a)=\beta \hat f(a) = a$ by the choice of $\beta$. We will show that
$h_1(a)=a$ and $h_2(a)=a$. We have
\begin{displaymath}
h_1(a)=h_1(g(a,a))=\alpha e_1 \beta^{-1}(\beta f(a,a)) = \alpha e_1(\hat f(a)) = a \, .
\end{displaymath}
Similarly one can show that $h_2(a)=a$.
It follows that $h_1,h_2 \in \End(\Gamma,a_1,a_2,\dots,a_n)$
and that $g \in \Pol(\Gamma,a_1,a_2,\dots,a_n)$. Moreover, for all
$x,y \in D$ we have that
\begin{align*}
h_1(g(x,y))= & \; \alpha e_1 \beta^{-1} (\beta f(x,y)) \\
= & \; \alpha e_1(f(x,y)) \\
= & \; \alpha e_2(f(y,x)) \\
= & \; \alpha e_2 \beta^{-1}( \beta f(y,x)) \\
= & \; h_2(g(y,x)) \; .
\end{align*}
This shows that $(\Gamma,a_1,\dots,a_n)$ has a
symmetric polymorphism modulo endomorphisms. Thus there is no clone homomorphism from $\Pol(\Gamma,a_1,a_2,\dots,a_n)$ to ${\bf 1}$.
\end{proof}


\subsection{Algebraic-Topological Dichotomy}
\label{sect:results}
We have seen in Section~\ref{sect:classification}
that for the study of reducts $\Gamma$ 
of $(\mL;C)$, the situation
where the relation $C$ is primitive positive definable
in $\Gamma$ is the most important. We will next characterise those $\Gamma$ that have binary symmetric polymorphisms
modulo endomorphisms.
\begin{theorem}\label{thm:pre-main}
Let $\Gamma$ be reduct of $(\mL;C)$
such that $C \in \langle \Gamma \rangle$. 
Then the following are equivalent.
\begin{enumerate}
\item The relation $N$ does not have a primitive positive definition in $\Gamma$.
\item All relations with a primitive positive definition
in $\Gamma$ are free, separated, and induce an affine split relation.
\item All relations in $\langle\Gamma\rangle$ are affine Horn.
\item $\Gamma$ is preserved by the binary operation $\tx$.
\item $\Gamma$ has a binary polymorphism $f$ and
endomorphisms $e_1,e_2$ such that
\begin{displaymath}
e_1(f(x,y))=e_2(f(y,x)) \; .
 \end{displaymath}
\item For arbitrary $a_1,\dots,a_n \in \mL$,
there does not exist any clone homomorphism from $\Pol(\Gamma,a_1,\dots,a_n)$ to $\bf 1$.
\item For arbitrary $a_1,\dots,a_n \in \mL$, there does not exist any continuous 
clone homomorphism
from $\Pol(\Gamma,a_1,\dots,a_n)$ to $\bf 1$.
\item In any expansion of $\Gamma$ by finitely many constants there is no primitive positive interpretation of
$(\{0,1\}; \NAE)$. 
\end{enumerate}
\end{theorem}
\begin{proof}
We show the equivalences by proving implications
in cyclic order.

$1 \Rightarrow 2$. Combine Lemmas~\ref{lem:affine}, \ref{lem:separation}, and \ref{lem:freeness}.

$2 \Rightarrow 3$. Proposition~\ref{prop:gen-syntax}.

$3 \Rightarrow 4$. Proposition~\ref{prop:txpreservation} (or Corollary \ref{cor:horn-tx}).

$4 \Rightarrow 5$. Corollary~\ref{cor:lift}.

$5 \Rightarrow 6$. Lemma~\ref{lem:clonehomo}.

$6 \Rightarrow 7$. A fortiori.

$7 \Rightarrow 8$. Theorem~\ref{thm:topo-birk-2}. 

$8 \Rightarrow 1$. Follows from the contraposition of Proposition~\ref{prop:Nd-interprets}.
\end{proof}
 
 The fact that the requirement that the clone homomorphism to $\bf 1$ is \emph{continuous} in item~7 of
 Theorem~\ref{thm:pre-main} can simply
 be dropped in item~6 is remarkable, and
 it is not clear whether the continuity condition
 can be dropped for clone homomorphisms from polymorphism clones of general relational structures $\Gamma$ to $\bf 1$ (see the discussion in Bodirsky, Pinsker and Pongr\'acz~\cite{BPP-projective-homomorphisms}). 

In the general situation where the relation $C$ is not required to be a member of $\langle \Gamma \rangle$, we can still characterise those $\Gamma$ whose model-complete cores have binary symmetric polymorphisms modulo endomorphisms.

\begin{theorem}\label{thm:main}
Let $\Gamma$ be a reduct of $(\mL;C)$,
and let $\Delta$ be the model-complete core of $\Gamma$. Then 
the following are equivalent. 
\begin{enumerate}
\item $\Delta$ has a binary polymorphism $f$ and
an endomorphisms $e_1,e_2$ such that 
\begin{displaymath}
e_1(f(x,y))=e_2(f(y,x)) \; .
\end{displaymath}
\item For all elements $a_1,\dots,a_n$ of $\Delta$, 
there is no clone homomorphism from $\Pol(\Delta,a_1,\dots,a_n)$ to $\bf 1$.
\item For all elements $a_1,\dots,a_n$ of $\Delta$, there is no continuous 
clone homomorphism
from $\Pol(\Delta,a_1,\dots,a_n)$ to $\bf 1$.
\item In any expansion of $\Delta$ by finitely many constants there is no primitive positive interpretation of
$(\{0,1\}; \NAE)$. 
\end{enumerate}
\end{theorem}
\begin{proof}

The implication $(1) \Rightarrow (2)$ follows directly from Lemma \ref{lem:clonehomo} and the implication $(2) \Rightarrow (3)$ is trivial. The implication $(3)\Rightarrow (4)$ follows from Theorem \ref{thm:topo-birk-2}.   
For the implication
$(4) \Rightarrow (1)$, we use the classification of $\Delta$
into four types from Theorem~\ref{thm:endos}. 
For the first type, $\Delta$ has just one element and hence satisfies item 1. 
For the second type, the statement follows from results by Bodirsky and K\'ara~\cite{ecsps}; in fact, $\tx$ is a suitable polymorphism. 
For the third type, $Q_d \in \langle \Gamma \rangle$ by Lemma~\ref{lem:basicrelsdefs}
and one can show that $Q \in \langle \Gamma \rangle$, too.
Furthermore, NAE has a primitive positive definition in $(\mL;Q,a_1,a_2,a_3)$ for 
arbitrary pairwise distinct constants $a_1,a_2,a_3 \in \mL$ so
NAE has a primitive positive definition in $(\Gamma,a_1,a_2,a_3)$. 
By Theorem~\ref{thm:topo-birk-2}, there is a continuous clone
homomorphism from $\Pol(\Gamma,a_1,a_2,a_3)$ to ${\bf 1}$. We can disregard this case since it contradicts our
basic assumption.

We now focus on the fourth type.
It can be shown that in this case $C \in \langle \Gamma \rangle$ since
$C_d \in \langle \Gamma \rangle$ by Lemma~\ref{lem:basicrelsdefs}.
If $N \in \langle \Gamma \rangle$, then
NAE has a primitive positive definition in $(N,a_1,a_2)$ where $a_1,a_2 \in \mL$
are distinct constants. This contradicts $(4)$.
If $N \not\in \langle \Gamma \rangle$, then Corollary~\ref{cor:syntax} 
implies that
every relation in $\langle \Gamma \rangle$ is affine Horn. By 
Corollary~\ref{cor:horn-tx},
$\tx$ is a binary commutative polymorphism modulo endomorphisms.
\end{proof}


\subsection{Complexity Dichotomy}
\label{sect:complexity}
All the ingredients to prove the complexity classification stated in Theorem~\ref{thm:main-polym} are now available.
Recall that Theorem~\ref{thm:main-polym} states
that the CSPs in our class are in P if they have a binary polymorphism that is symmetric
modulo endomorphisms, and NP-complete otherwise.

\begin{proof}
Let $\Gamma$ be a reduct of $(\mL;C)$ with 
finite relational signature. 
Clearly, $\Csp(\Gamma)$ is in NP. 
Let $\Delta$ be the model-complete core of $\Gamma$.
If $\Delta$ has an expansion by finitely many constants 
that interprets $(\{0,1\};\NAE)$ primitively positively, then
$\Csp(\Delta)$ and therefore $\Csp(\Gamma)$ are NP-complete
by Lemma~\ref{lem:pp-interpret}. 
So let us assume that this is not the case.
Then by Theorem~\ref{thm:main}, the structure
$\Delta$ has a polymorphism $f$ and endomorphisms $e_1,e_2$
such that $e_1(f(x,y))=e_2(f(y,x))$. We consider the cases of $\Gamma$ in Theorem \ref{thm:endos}. If $\Gamma$ has a constant endomorphism,
then the model-complete core $\Delta$ of
$\Gamma$ has just one element,
and $\Csp(\Gamma)$ is trivial and in P.
If $\Delta$ is isomorphic to a reduct of $(\mL;=)$, then ${\rm CSP}(\Gamma)$ is in P by Theorem~\ref{thm:ecsps}. 
Otherwise, by Lemma~\ref{lem:preclass}, the structure
$\Gamma$ itself is a model-complete core, and
the relation $Q_d$ or the relation $C_d$
is primitive positive definable in $\Gamma$.
If $Q_d\in \langle\Gamma \rangle$, by Proposition \ref{prop:Qd-interprets} there is a pp interpretation of $(\{0,1\};{\rm NAE})$ in $\Gamma$. It follows from Theorem~\ref{thm:topo-birk-2} that there is a continuous clone homomorphism from $\Pol(\Gamma,a,b,c)$ to ${\bf 1}$. This is impossible by Lemma \ref{lem:clonehomo}. We are left with the case $C_d \in \langle \Gamma \rangle$ and therefore
$C \in \langle \Gamma \rangle$. Theorem~\ref{thm:pre-main} implies
that $\Gamma$ must have the polymorphism $\tx$, 
and that all relations of $\Gamma$ are affine Horn. 
In this case, $\Csp(\Gamma)$ is in P by Corollary~\ref{cor:alg}. 
\end{proof}

Suppose that $\Gamma$ is a reduct of $(\mL;C)$
with finite relational signature such that $C \in \langle \Gamma \rangle$. 
Then one might ask whether the \emph{meta-problem} of deciding the complexity of $\Csp(\Gamma)$
is effective. Here we assume that $\Gamma$ is given
via quantifier-free first-order definitions of its relations in $(\mL;C)$. 
We can then use the techniques developed by Bodirsky, Pinsker, and Tsankov~\cite{BPT-decidability-of-definability} to effectively test whether the relation $N$ is in $\langle \Gamma \rangle$. Thus,
the meta-problem for phylogeny problems is decidable. 

Another way of proving decidability of the meta-problem is to verify the condition
that every relation in $\langle \Gamma \rangle$ is separated, 
free, and has an affine split relation. In fact, it is sufficient to test the condition for all relations
that can be defined by existentially quantifying some of the variables in expressions of the form
$$S(x_1,\dots,x_k) \wedge \bigwedge_{\{i,j\} \in I} x_i=x_j,$$
where $S \in \Gamma$ and $I \subseteq \{1,\dots,k\}^2$.
This follows from our proof of Theorem~\ref{prop:gen-syntax}: 
when proving that a relation $R$ can be defined by an affine Horn formula,
we only consider the three properties above for relations having
this particular form. Since there are finitely many of them, decidability of the meta-problem follows.

\bibliographystyle{plain}
\bibliography{local}

\end{document}